\newcommand{\norm}[1]{\left\lVert#1\right\rVert}
\newcommand{\Bias}{\mathbb{B}\mathrm{ias}}
\newcommand{\GMSE}{\mathrm{GMSE}}
\newcommand{\CV}{\mathrm{CV}}
\newcommand{\MSE}{\mathrm{MSE}}
\newcommand{\Exp}{\mathbb{E}}
\newcommand{\Var}{\mathbb{V}\mathrm{ar}}
\newcommand{\Cov}{\mathbb{C}\mathrm{ov}}
\newcommand{\M}{\bm{Y}}
\newcommand{\D}{\bm{\lambda}}
\newtheorem{definition}{Definition}
\newtheorem{lemma}{Lemma}%
\newtheorem{theorem}{Theorem}
\providecommand{\customgenericname}{}
\newcommand{\newcustomtheorem}[2]{%
  \newenvironment{#1}[1]
  {%
   \renewcommand\customgenericname{#2}%
   \renewcommand\theinnercustomgeneric{##1}%
   \innercustomgeneric
  }
  {\endinnercustomgeneric}
}
\theoremstyle{remark}
\newtheorem{remark}{Remark}
\title{Scalable Generalised Accuracy Estimation for Multisource Register-based Official Statistics}
\author{Nina Deliu$^{1,2,*}$,  Piero Demetrio Falorsi$^{1,4}$, Stefano Falorsi$^{3}$, Diego Chianella$^{3}$, and Giorgio Alleva$^{1}$\\
\normalsize $^{1}$MEMOTEF Department, Sapienza University of Rome, Italy\\ \normalsize$^{2}$MRC -- Biostatistics Unit, University of Cambridge, UK\\ \normalsize$^{3}$Former Head of the Division for Methodology, ISTAT, Rome, Italy (now Retired)\\ \normalsize$^{4}$ISTAT -- Italian National Institute of Statistics, Rome, Italy\\ \normalsize$^{*}$Corresponding Author: \url{nina.deliu@uniroma1.it}}
\date{}
\begin{document}
\maketitle

\begin{abstract}
Official statistics are undergoing a significant transformation, as national statistical institutes transition from traditional 
single-source data production systems to integrated systems of statistical registers combining administrative, census, and survey data. The resulting multisource register-based estimates are prone 
to multiple interacting sources of error, yet rigorous scalable frameworks for quantifying their accuracy remain underdeveloped. This work discusses and validates a global measure of error assessment for such multisource
register-based statistics. Focusing on two central sources of uncertainty --- sampling and modelling --- we derive an analytical solution that accurately approximates the global error of mass-imputation procedures under 
a multinomial logistic model. The proposed measure is 
interpretable, flexible, and computationally scalable, enabling on-the-fly accuracy quantification for user-defined, unplanned domain-specific statistics on population totals. Its validity is established theoretically and confirmed through simulation studies. An application to education data from the Italian National Institute of Statistics is presented.

\textbf{Key Messages}:
\begin{itemize}
\item We present a global measure of accuracy for multisource register-based official statistics, the \textit{Generalised Mean Squared Error} (GMSE), which provides an interpretable and computationally feasible solution for accuracy quantification, jointly accounting for multiple sources of error.
\item The GMSE framework is scalable and designed for operational integration into official-statistics production pipelines, enabling efficient accuracy estimation of unplanned domain-specific statistics.
\end{itemize}

\textbf{Keywords}: Accuracy estimation, Generalised mean squared error, Multisource official statistics, Official statistics, Statistical registers
\end{abstract}

\section{Introduction} \label{Sec: intro}
Official statistics play a crucial role in the information system of a democratic society, providing an accurate and timely picture of its economic, demographic, social, and environmental status and observed dynamics. In addition to making transparent information available to support individual and economic decision-making, they are indispensable for evaluating public policies, while adhering to a set of fundamental principles. 
In particular, Principle 15 on ``Accessibility and Clarity'' of the European Statistics Code of Practice for the National Statistical Authorities and Eurostat states that European statistics must be presented in a clear and understandable form, released in a suitable and convenient manner, available and accessible on an impartial basis with supporting metadata and guidance~\citep{eurostat2014,eurostat2020}. 

International and national statistical agencies (NSIs) are tasked with producing independent and authoritative statistics. 
The traditional data-production paradigm has long been centered on independent statistical surveys or administrative sources. Currently, NSIs are transitioning to a new model in which different data sources are combined into a single Integrated System of Statistical Registers~\citep[ISSR;][]{citro2014multiple, istat2016_modern}, which represents the basis of statistical production~\citep{ lohr_combining_2017, radermacher2018official, de2020multi}. Each original register stores units of a relevant target population, such as businesses, individuals and households, or agricultural holdings~\citep{wallgren2014register}, and arises from statistical surveys, administrative records other than official statistics, or digital technologies~\citep{radermacher2018official,daas2015big}. Therefore, an ISSR results in more compact, rich, and complete information content than the one associated with each independent data source. However, it also introduces challenges \citep{alleva2017, alleva2017b}. To allow an ISSR to be utilised as a single informative infrastructure, the integration process is achieved through different statistical methods, adopted to reconstruct unit-level data. These include, e.g., record linkage, statistical matching, and mass imputation, all prone to uncertainty and error. In this context, the European Statistics Code of Practice was revised in 2017~\citep{eurostat2020} to include provisions on the presence of systems for assessing and validating source data, integrated data, intermediate results, and statistical outputs. Principle 12 of the Code, on Data Accuracy and Reliability, explicitly highlights the need for rigorous quantitative evaluation of statistical uncertainty. Nonetheless, although robust methods exist for assessing single-source statistics~\citep{daas2011report, zhang2012topics, eurostat2014, biemer2014system}, frameworks for evaluating outputs from multi-source statistics remain less developed.

A critical question faced by NSIs is whether to restrict the usage of an ISSR to predefined, high-quality aggregates or to allow greater flexibility for users to derive customised estimates supported by transparent measures of accuracy. The latter better aligns with the notion of official statistics as a ``public good''~\citep{eurostat2020}, but demands rigorous methods to quantify the uncertainties inherent in integrated register-based estimates, as well as procedures that absolutely ensure the confidentiality of data. Notably, while accuracy estimation (under either the design- or model-based approach) is a classical problem in official statistics, the introduction of ISSRs requires inferential procedures to account for complex production pipelines, including mass imputation used to reconstruct unit-level responses when these are not directly available from individual sources. In this setting, uncertainty must reflect multiple interacting sources of error and be made available for user-defined, on-demand statistics. 

Motivated by Eurostat requirements and the ongoing modernisation process of the Italian National Institute of Statistics (ISTAT)~\citep{istat2016_modern}, our aim is to enable statistical institutes to empower authorised users to generate custom domain-specific statistics, while consistently providing measures of statistical accuracy. In this context, the proposed accuracy measure goes beyond internal quality assessment of register-based outputs: it is designed for integration into production pipelines, supporting computationally feasible uncertainty quantification, as well as user-driven and transparent dissemination activities. To this end, our contributions are structured along three lines: (i) formalising a global measure of uncertainty tailored to this multi-source setting, (ii) developing a computationally feasible implementation via a simplified two-step linearisation strategy, and (iii) evaluating and validating the proposed measure through simulation studies and real official-statistics data.

Building on the proposal of~\cite{alleva2021measuring}, we illustrate a ``global'' measure of accuracy that accounts for different sources of uncertainty, and extend it to more general type of multi-category outcomes that are central in NSIs (see Section~\ref{sec: motivation}). We term it \textit{generalised mean squared error} (GMSE). This concept also relates to the notion of \textit{global variance} \citep{wolter_coverage_1986} and \textit{anticipated variance} \citep{isaki_survey_1982,nedyalkova_optimal_2008}, with a notable fundamental difference. \cite{isaki_survey_1982} introduced the anticipated variance as an ex-ante measure (at the construction phase of sample designs), while we use it as an ex-post measure (once all statistical procedures for reconstructing unit-data values in an ISSR have been implemented). The goal is to provide to end users information on the uncertainty in the register statistics, {\it given} a certain adopted sampling design and imputation model. While here our focus will be on the design and model uncertainties, we emphasise that the GMSE is introduced as a generalised measure, where more than these two measures of uncertainty can be accounted for (e.g., linkage error, miscoverage, etc); this is not done in the anticipated variance.

Two main approaches have been followed in the literature to measure the accuracy of an estimator. The first relies on resampling methods, mainly focusing on either parametric or non-parametric bootstrap-based strategies~\citep[see e.g.,][for a survey]{mashreghi2016survey,shao2003impact}. While broadly applicable, these methods heavily depend on the number of data to be processed (which can be millions and millions in register data), and can lead to extreme computational burden. The second approach relies on approximations based on Taylor linearisation~\citep[see e.g.,][]{graf2014variance,vallee2019linearisation}, and requires the effort to derive analytical formulas case by case, but results in computationally efficient solutions. Our work belongs to this second stream of approaches. Compared to the linearisation strategy of~\cite{alleva2021measuring}, which was based on a four-step approximation procedure, we present a substantial advance: we show that a simplified two-step linearisation is sufficient to obtain an accurate and tractable approximation, reducing computational complexity while preserving the quality of the accuracy estimates. The proposed approach is first studied theoretically and subsequently validated through simulation studies. Systematic comparisons are presented across varying population sizes and domain specifications, against resampling methods and classical model- and design-based inferential frameworks.

The remainder of this work is organised as follows. In Section~\ref{sec: motivation}, we present motivational case studies for multi-category outcomes in official statistics. In Section~\ref{Sec: Setup}, we formalise the problem and introduce the notation. In Section~\ref{sec: gmse}, we develop the proposed GMSE solution in categorical variables, and discuss its analytical computation under general estimating equations, while in Section~\ref{sec: gmse_mult}, we characterise it for a multinomial logistic model. Section~\ref{sec: application} illustrates its application for the attained level of education in Italy and Section~\ref{sec: simulations} validates the proposed solution in simulations. Finally, Section~\ref{Sec: conc} outlines advantages and limitation of the GMSE, followed by some concluding remarks.

\subsection{Motivational Case Studies} \label{sec: motivation}
In this section, we showcase concrete examples of integrated registers within NSIs and multi-source statistics. We present cases on multi-category variables such as education, emphasising their centrality in official statistics.

\subsubsection{Base Register of Individuals of the Italian ISSR}

Starting from 2016, ISTAT has been undergoing a transition process from sample or census surveys to an ISSR. By combining data originating from different registers and sources, the resulting product is a single integrated system, such as the Base Register of Individuals containing individual subjects data. While most of the basic demographic variables are easily retrieved from administrative sources, other socio-economic information is collected by means of annual sample surveys. Among these, two variables are of fundamental importance for understanding the status of a population: the attained level of education consisting in a number of $K=8$ categories (going from \texttt{1 Illiterate} to \texttt{8 PhD level}), and the employment status (with $K=2$: \texttt{1 Employed}, \texttt{2 Not employed}). For completing the unit-level information for these variables, different data sources, integration processes, and imputation methods are adopted. For the attained level of education, for instance, such reconstruction accounts for: (A) administrative data from the Ministry of Education, University and Research; (B) data from the last traditional Italian census conducted in 2011; and (C) annual sample survey data. Log-linear and multinomial modelling approaches are used for imputation, with further details provided in Section~\ref{sec: application} and in~\cite{dizio1}.


\subsubsection{Dutch virtual Population and Housing Census}
In the Netherlands, a so-called virtual Population and Housing Census is conducted to integrate and update population data from existing administrative records, supplemented by sample survey information. While for many variables of interest, administrative sources provide (near-)complete population coverage, educational attainment remains a challenge. In fact, for the 2011 Census, this variable was observed from Dutch Labour Force Surveys, which is based on sample surveys and comprises only 300,000 individuals~\citep{daalmans2017mass}. For the 2021 Census, a more extensive educational source, the Educational Attainment File, was used to include data from several registers and sample surveys, with a coverage of more than 10 million people. Nonetheless, for approximately seven million Dutch individuals (out of a total population of 17 million), educational attainment is not observed. To address this gap, mass imputation has been used to predict all missing values for the variable of interest~\citep{daalmans2017mass}, and the consequent quality issue has been promptly raised~\citep{Scholtusetal}.

\subsubsection{Canada and New Zealand virtual Censuses}
Other examples concern the virtual censuses conducted in Canada in 2021~\citep{Lundy2022predicting} and in New Zealand in 2018~\citep{Bycroft2020Use}. In both cases, administrative data played a crucial role in ensuring completeness and accuracy of information, particularly in response to low participation rates. Both countries integrated traditional questionnaire responses with government register data to fill gaps left by non-responding households, with educational attainment being one of the key variables addressed through mass imputation.

\bigskip

The showcased examples highlight the complexity of adequately integrating multiple data sources, raising an important methodological question: how to assess the overall accuracy of population statistics derived from mass-imputed register data.


\section{Problem Setup and Notation} \label{Sec: Setup}

Let $R$ denote a statistical register of $N$ statistical units corresponding to a finite population, with $N$ fixed and known at a given moment of interest. We assume that the register is not affected by problems of overcoverage and/or undercoverage and that there is a one-to-one correspondence with the target population of interest. Each register unit $i$, with $i \in \{1,\dots, N\}$, is characterised by a set of observed covariates $\bm{x}_i = (x_{i1}, \dots, x_{iJ})^T$, with $\bm{x}_i \in \mathbb{R}^J$, $J\geq 1$, and a categorical response variable with $K$ mutually exclusive categories. We express the outcome of the categorical response of unit $i$ as a $K$-dimensional vector $\bm{Y}_i = (Y_{i1},\dots,Y_{ik}, \dots,Y_{iK})^T$, with $Y_{ik} \in \{0,1\}$ and such that $\sum_{k=1}^K Y_{ik} = 1$. The different response categories $k = 1,\dots, K$ can represent the educational levels (e.g., \texttt{Illiterate}, \texttt{Primary education}, ...)  an individual may have. 

In classical statistical modelling theory, the focus is on estimating model parameters that are intrinsically unobservable. In contrast, the primary aim in official statistics is to estimate quantities defined on a finite population -- quantities that can be in principle observable by carrying out a census of this population. For example, in a classical inference approach one may be interested in estimating the probability of having a PhD, after a reasonable model is postulated; in contrast, in official statistics, one would seek at inferring the total number of individuals with a PhD. Population totals represent one of the most common parameters of interest in official statistics; in our categorical case, we are specifically interested in the total of a given category $k$ in the entire register $R$ or for a specific subpopulation or domain of $R$, say $R^{(d)}$, where $d$ indexes the domain of interest, such as a given region. Let $\bm{\gamma}^{(d)} = \left(\gamma_1^{(d)}, \dots, \gamma_i^{(d)}, \dots,\gamma_N^{(d)} \right)^T$ be the known domain membership vector, where $\gamma_i^{(d)} \in \{0, 1\}$, with value $1$ indicating that unit $i$ belongs to domain $R^{(d)}$. Then, the target parameter at the domain level is formally defined as:
\begin{align} \label{eq: theta}
     \theta_k^{(d)} = \sum_{i=1}^N \gamma_{i}^{(d)}Y_{ik}
     ,\quad k=1,\dots,K.
\end{align}
Unless a census is conducted, $\theta_k^{(d)}$ is typically unknown as the values of $Y_{ik}, i=1,\dots,N$, are only known for a subset of the $N$ register units at the given moment of interest. In practice, responses are observed for a random sample of size $n$ and we use $\bm{\lambda} = \left(\lambda_1, \dots,\lambda_i, \dots, \lambda_{N} \right)^T$, with $\lambda_i \in \{0, 1 \}$ and $\sum_{i=1}^N\lambda_i = n$, to denote the sample membership (known once the sample is selected); value $1$ indicates a membership status and $0$ a non-membership. In the context of probabilistic surveys, each register unit $i$ is characterised by a certain sample inclusion probability $\pi_i$, defined according to a probabilistic sampling scheme. In such a context, for estimating the unknown parameter $\theta_k^{(d)}$ one must first estimate, impute, or make a prediction of the unobserved $Y_{ik}$ for all $i = 1,\dots,N$ for which $\lambda_i = 0$. 

To estimate $Y_{ik}$, we consider a working model $\mathbb{E}(Y_{ik}|\bm{x}_i) = f_k(\bm{x}_i;\bm{\beta}) = p_{ik}, k = 1,\dots,K$, with $f$ a known function depending on the unknown parameter vector $\bm{\beta} = \left(\bm{\beta}_1^T,\dots,\bm{\beta}_k^T,\dots, \bm{\beta}_K^T\right)^T$, where $\bm{\beta}_k = (\beta_{k1}, \dots, \beta_{kJ})^T$ is a $J$-dimensional object measuring the relationship between the outcome $Y_{ik}$ and the covariates $\bm{x}_i \in \mathbb{R}^J$, for $k=1,\dots,K$. Note that in our specific case $p_{ik} = \mathbb{P}(Y_{ik} = 1|\bm{x}_i)$, with $\sum_{k=1}^Kp_{ik} = 1, \forall i$. Let $\bm{\hat{\beta}}$ be a model-unbiased estimator of $\bm{\beta}$; then the predictor $\hat{Y}_{ik}$ of $Y_{ik}$ is given by
\begin{align}\label{eq: Y_hat}
    \hat{Y}_{ik} = \hat{p}_{ik} = f_k(\bm{x}_i;\bm{\hat{\beta}}), \quad i=1,\dots, N, \quad k=1,\dots,K.
\end{align}
A natural estimator $\hat{\theta}_k^{(d)}$ of the target vector parameter $\theta_k^{(d)}$ is then given by 
\begin{align} \label{eq: theta_hat}
     \hat{\theta}_k^{(d)} = \sum_{i=1}^N \gamma_{i}^{(d)}\hat{Y}_{ik} = \sum_{i=1}^N \gamma_{i}^{(d)}\hat{p}_{ik} = \sum_{i=1}^N \gamma_{i}^{(d)}f_k(\bm{x}_i;\bm{\hat{\beta}}),\quad k=1,\dots,K.
\end{align}

Note that, while here we follow the deterministic approach of~\cite{kim2009unified}, alternative approaches to Eq.~\eqref{eq: Y_hat} exist for predicting the unobserved $Y_{ik}$. For example, if one restricts the prevision $\hat{Y}_{ik} \in \{0,1\}$, this can be randomly generated from a multinomial distribution with parameters $\hat{\bm{p}}_i \doteq (\hat{p}_{i1}, \dots, \hat{p}_{iK})$; see also \textit{Remark~\ref{remark_Y}}. 

The main goal of this work is to estimate the accuracy, or alternatively, the error committed when estimating the parameter of interest. This depends on various sources of uncertainty, the first being the uncertainty derived by the adopted model. Furthermore, even in the case of correct specification of the model, it is noteworthy to emphasize that the prediction accuracy may heavily depend on the sampling scheme and the number of sample units $n$. 
In what follows, we will focus on the development of the proposed accuracy estimation procedure for the case of the multinomial logistic model, which generalises the simpler logistic case developed in~\cite{alleva2021measuring}. While we focus on this model because of its common adoption by NSIs, 
alternative models, such as latent-variable models, can be easily integrated within the presented framework.

\section{The Generalised Mean Squared Error}\label{sec: gmse}

The original idea of a global measure to evaluate the accuracy of register-based estimates was first proposed in \cite{alleva2021measuring}, building on previous literature on the notions of \textit{global variance} \citep{wolter_coverage_1986} and \textit{anticipated variance} \citep{isaki_survey_1982,nedyalkova_optimal_2008}. The global nature of this measure, which we name \textit{generalised mean squared error} (GMSE), stems from the different sources of uncertainty taken into account when quantifying the errors in the estimates or predictions. In this sense, it generalises the classical mean squared error (MSE), which represents the simplest way to measure accuracy focusing on a single source of randomness in a total survey error approach \citep{biemer_total_2010}. For instance, in a design-based approach (Cochran 1977), the sample design, which determines the sample membership $\bm{\lambda}$, represents the only source of randomness, whereas the population outcomes $Y_{ik}$, for $i = 1,\dots, N$ and $k =1,\dots, K$, are treated as unknown fixed quantities. Alternatively, a model-based approach~\citep{valliant2009model,chambers2012introduction} considers $\bm{\lambda}$ as fixed, with the only source of randomness being the $Y_{ik}$ values that are generated according to a certain model. In both cases, a single source of variability is taken into account when computing the MSE, that is, either the sampling design or the variability of the model, respectively. 

In this work, our aim is to take into consideration the different random components involved in the inferential process. More specifically, according to the assumptions made in Section \ref{Sec: Setup}, we focus on two sources of error typical of register-based statistics: the sampling and model uncertainty. These drive the definition of the GMSE provided in Definition~\ref{def: gmse}, which also generalises the \textit{global mean squared error} introduced in~\cite{alleva2021measuring}.

\begin{definition} \label{def: gmse}
Let $\hat{\theta}_k^{(d)}$ be an estimator of the parameter of interest $\theta_k^{(d)}$, function of the unknown outcome variables $Y_{ik}, i=1,\dots,N$, for $k=1,\dots,K$, and for a certain domain $d$. Under the setup in Section~\ref{Sec: Setup}, let $\bm{\lambda} = (\lambda_1,\dots,\lambda_N)$ be the random variable defining the sample membership, satisfying $\Exp(\lambda_i) = \pi_i$, with $\pi_i \in (0,1]$ being the inclusion probability of unit $i$, for $i = 1,\dots,N$. Furthermore, let $\mathbb{E}_{\M}(Y_{ik}|\bm{x}_i) = f_k(\bm{x}_i;\bm{\beta}) = p_{ik}, i = 1,\dots,N, k = 1,\dots,K$ be a working model for $Y_{ik}$ given the covariate set $\bm{x}_i$. Then, the GMSE of $\hat{\theta}_k^{(d)}$ with respect to the unknown parameter $\theta_k^{(d)}$ is defined as:
\begin{align} \label{eq: gmse_def}
    \GMSE\left(\hat{\theta}_k^{(d)}, \theta_k^{(d)}\right) &= 
    \Exp_{(\D, \M)}\left(\hat{\theta}_k^{(d)}-\theta_k^{(d)}\right)^2,\quad k=1,\dots, K,
\end{align}
that is, the error of the estimator under both sampling and model randomness.
\end{definition}
Compared to the \textit{global mean squared error} introduced in~\cite{alleva2021measuring}, the GMSE in Eq.~\eqref{eq: gmse_def} considers the expectation with respect to the joint distribution of the random variables $(\D, \M)$ characterised by the sampling design and the outcome model, respectively. In~\cite{alleva2021measuring}, the authors considered the measure $\Exp_{\D}\Exp_{\M}(\cdot)$. Nevertheless, we note that the two are equivalent when the sampling design is non-informative, that is $\D \perp \M$, such that the joint distribution can be factorised in the two univariate distributions. 

\begin{remark} The GMSE in Definition~\ref{def: gmse} is a scalar that refers to the individual error committed when estimating $\theta_k^{(d)}$ for category $k$. While this is the typical interest of NSIs, one may also be interested in a cumulated version of the GMSE accounting for all $k$s jointly, that is in $\GMSE\left(\bm{\hat{\theta}}^{(d)}, \bm{\theta}^{(d)}\right)$, with $\bm{\hat{\theta}}^{(d)} = (\hat{\theta}_1^{(d)},\dots, \hat{\theta}_K^{(d)})$. In that case, its definition can be naturally extended as:
    \begin{align} \label{eq: global_gmse_def}
    \GMSE\left(\bm{\hat{\theta}}^{(d)}, \bm{\theta}^{(d)}\right) 
    &= 
    \Exp_{(\D, \M)}\left((\bm{\hat{\theta}}^{(d)}-\bm{\theta}^{(d)})^T(\bm{\hat{\theta}}^{(d)}-\bm{\theta}^{(d)})\right) 
    = \sum_{k=1}^K \GMSE\left(\hat{\theta}_k^{(d)}, \theta_k^{(d)}\right),
\end{align}
that is, the scalar given by the sum of the $K$ individual errors. 
\end{remark}
Building on the results in~\cite{alleva2021measuring}, we now outline the following theoretical result.
\begin{theorem} \label{th: gmse_th1}
Assume that: i) the estimator $\hat{\bm{\beta}}$ 
is model-design unbiased, that is, $\Exp_{(\D, \M)}(\hat{\bm{\beta}}) = \Exp_{\D}\Exp_{\M}(\hat{\bm{\beta}}\mid \bm{\lambda}) = \bm{\beta}$, (ii) the chance of drawing a sample for which $\Exp_{\M}(\hat{\bm{\beta}} | \bm{\lambda}) \neq \bm{\beta}$ is negligibly small with $\sup_{\bm{\lambda}}\norm{\Exp_{\M}(\hat{\bm{\beta}} | \bm{\lambda}) - \bm \beta} < \infty$, and iii) the sampling design is non-informative \citep[see e.g., Section 1.4 of][]{chambers2012introduction}. Then, the following upper bounds hold for the individual-category and the cumulated GMSE:
\begin{align} \label{eq: gmse_th1}
    \GMSE\left(\hat{\theta}_k^{(d)}, \theta_k^{(d)}\right) &\lessapprox \Exp_{\D}\Var_{\M}(\hat{\theta}_k^{(d)}| \bm{\lambda})\\
    \GMSE\left(\bm{\hat{\theta}}^{(d)}, \bm{\theta}^{(d)}\right) &\lessapprox \Exp_{\D}\left[\mathrm{tr}\left(\Var_{\M}(\bm{\hat{\theta}}^{(d)}| \bm{\lambda})\right)\right], \nonumber
\end{align}
where $\mathrm{tr}(\mathrm{x})$ denotes the trace of a square matrix $\mathrm{x}$ and $\lessapprox$ stands for ``less than or approximately equal to''.
\end{theorem}
\begin{proof}
    See Supplementary material B.1.
\end{proof}

The assumption of non-informative sampling relates to survey sampling schemes that do not depend on the response variable, i.e., $\bm \lambda \perp \bm Y$. In practice, most of the official statistics production is designed under this criterion (see e.g., \citealt{chambers2012introduction}, Section~1.4) for their strong guarantees on inference. 

Note that the estimator $\hat{\theta}_k^{(d)}$, for $k=1,\dots,K$, is generally a non-linear function of the two random quantities $\bm{\lambda}$ and $\bm{\hat{\beta}}$, making the direct exact computation of the bound of the GMSE as defined in Eq.~\eqref{eq: gmse_th1} not straightforward. To overcome this challenge, we adopt a two-stage linearisation procedure using Taylor approximations to linearise the estimator, first, with respect to the argument of the inner variance term $\Var_{\M}(\hat{\theta}_k^{(d)}|\bm{\lambda})$, and second, with respect to the design randomness defining the outer expectation of the upper bound in Eq.~\eqref{eq: gmse_th1}. This result represents an improvement compared to the linearisation process adopted in~\cite{alleva2021measuring}, which was based on a four-step approximation procedure.

\paragraph{First linearisation step} Assume the real-valued functions $\hat{\theta}_k^{(d)} = \sum_{i=1}^N \gamma_{i}^{(d)}f_k(\bm{x}_i;\bm{\hat{\beta}}), k=1,\dots,K$, are twice
continuously differentiable functions at the point $\bm{\beta}$; then, they have a linear approximation near this point. The estimator $\hat{\theta}_k^{(d)}$ can thus be linearised with respect to $\bm{\hat{\beta}}$ at the point $\bm{\hat{\beta}} = \bm{\beta}$ as:
\begin{align} \label{eq: Taylor1}
    \hat{\theta}_k^{(d)} = \sum_{i=1}^N \gamma_{i}^{(d)}f_k(\bm{x}_i;\bm{\hat{\beta}}) = \sum_{i=1}^N \gamma_{i}^{(d)}\left( f_k(\bm{x}_i;\bm{\beta}) + 
   \frac{\partial f_k(\bm{x}_i;\bm{\hat{\beta}})}{\partial \bm{\hat{\beta}}} \Big\lvert_{\bm{\hat{\beta}} = \bm{\beta}} \left(\bm{\hat{\beta}} - \bm{\beta} \right) + r_{1k} \right),\quad k=1,\dots,K,
\end{align}
where $r_{1k}$ is the residual of this Taylor approximation and is of order $o_p(1/\sqrt{n})$,  so that the linearisation error is asymptotically negligible at rate~$n^{-1}$.

Using the result in Eq.~\eqref{eq: Taylor1}, we get the following linear approximation for the conditional variance:
\begin{align} \label{eq: 1st_linearisation}
 \Var_{\M}(\hat{\theta}_k^{(d)}|\bm{\lambda}) 
    &\approx \Var_{\M} \left(\sum_{i=1}^N \gamma_{i}^{(d)}\left( f_k(\bm{x}_i;\bm{\beta}) + 
   \frac{\partial f_k(\bm{x}_i;\bm{\hat{\beta}})}{\partial \bm{\hat{\beta}}} \Big\lvert_{\bm{\hat{\beta}} = \bm{\hat{\beta}}} \left(\bm{\hat{\beta}} - \bm{\beta} \right) \Big \lvert\bm{\lambda}\right) \right)\nonumber\\
   & = \Var_{\M} \left(\bm{\gamma}^{(d)T}
   \mathbf{F}_k \left(\bm{\hat{\beta}} - \bm{\beta} \right) \big \lvert \bm{\lambda}\right) \nonumber \\
   & = \bm{\gamma}^{(d)T} \mathbf{F}_k
   \Var_{\M} \left(\bm{\hat{\beta}} - \bm{\beta} \big \lvert \bm{\lambda}\right) \mathbf{F}^T_k \bm{\gamma}^{(d)},\quad \quad k=1,\dots,K, 
\end{align}
where $\mathbf{F}_k$ is the matrix of the first-order partial derivatives having dimension $N \times H$, with $H = K \times J$, and form
\begin{align*}
\underset{N \times H}{\mathbf{F}_k} = \left[\frac{\partial f_k(\bm{x}_i;\bm{\hat{\beta}})}{\partial \hat{\beta}_{lj}} \Big \lvert_{\hat{\bm{\beta}} = \bm{\beta} }\right],
\end{align*}
Note that the elements of $\mathbf{F}_k$ do not depend on $\bm{\hat{\beta}}$ and thus fixed with respect to model randomness. They will be later expanded in relation to the specific multinomial logistic model considered in this work.

\paragraph{Second linearisation step} Let $\mathbf{g}_{i}(\bm{\beta}; \bm{y}, \bm{x}) = \{g_{ikj}(\bm{\beta}; \bm{y}, \bm{x}), j=1,\dots,J, k=1,\dots,K\}$ define the system of $H$ generalised estimating equations~\citep{ziegler2011generalized} for estimating the model parameter vector $\bm{\beta}$ using the $n$ sample data. For example, under a maximum likelihood approach, denoted by $\ell(\bm{\beta}; \bm{y}, \bm{x}, \bm{\lambda})$ the log-likelihood of the working model, one could search for the estimator $\hat{\bm{\beta}}$ by solving:
\begin{align} \label{eq: gee}
    \sum_{i=1}^{N} \lambda_i \mathbf{g}_{i}(\bm{\beta}; \bm{y}, \bm{x}) = \frac{\partial \ell(\bm{\beta}; \bm{y}, \bm{x}, \bm{\lambda})}{\partial \bm{\beta}} = \mathbf{0}_{H},
\end{align}
where $\mathbf{0}$ is a vector of $H$ zeroes. Assume that for large $n$ the solution $\hat{\bm{\beta}}$ of Eq.~\eqref{eq: gee} is unique and that it converges to $\bm{\beta}$; then, using the result in \cite{chambers2012introduction} that holds under standard regularity conditions (see Supplementary material B.2), we can employ a first-order approximation to linearise $\bm{\hat{\beta}}$ around the expected value $\Exp_{\D}\Exp_{\M}(\hat{\bm{\beta}} | \bm{\lambda}) = \bm{\beta}$, obtaining: 
\begin{align} \label{eq: chambers}
\left(\bm{\hat{\beta}} - \bm{\beta} \right) \approx -\mathbf{A}_{\bm{\beta}}^{-1} \sum_{i=1}^N \lambda_i\mathbf{g}_{i}(\bm{\beta}; \bm{y}, \bm{x}),
\end{align}
where $\mathbf{A}_{\bm{\beta}}^{-1}$ is the inverse of the square matrix $\mathbf{A}_{\bm{\beta}}$ of dimension $H \times H$ defined as:
\begin{align}
    \underset{H \times H}{\mathbf{A}_{\bm{\beta}}} = \left[a_{(kj)(k'j')}\big \lvert_{\hat{\bm{\beta}} = \bm{\beta}}\right],\quad\text{with}\quad a_{(kj)(k'j')} \doteq \sum_{i=1}^{N} \lambda_i\frac{ \partial \mathbf{g}_{i}(\bm{\beta}; \bm{y}, \bm{x})}{\partial \hat{\beta}_{kj}} \Bigg \lvert_{\hat{\bm{\beta}} = \bm{\beta}},\quad j = 1,\dots, J; k  = 1,\dots, K,
\end{align}
which basically represents the second-order partial derivatives matrix, i.e. the Hessian matrix. As for the first linearisation step, the approximation in Eq.~\eqref{eq: chambers} holds with a remainder $\bm{r}_{2} = (r_{21},\dots,r_{2H})$ of minor order, each of which being $o_p(1/\sqrt{n})$; see Supplementary material B.2 for more details.

Taken together, the results in Eq.~\eqref{eq: 1st_linearisation} and Eq.~\eqref{eq: chambers} give us the following linearised version of the GMSE's bound:
\begin{align} \label{eq: 2st_linearisation}
 \Exp_{\D}\Var_{\M}(\hat{\theta}_k^{(d)}| \bm{\lambda}) 
    &\approx \bm{\gamma}^{(d)T} \mathbf{F}_k
   \Exp_{\D}\Var_{\M} \left(\bm{\hat{\beta}} - \bm{\beta} \big \lvert \bm{\lambda}\right) \mathbf{F}^T_k \bm{\gamma}^{(d)} \nonumber\\
   & \approx \bm{\gamma}^{(d)T} \mathbf{F}_k
   \Exp_{\D}\Var_{\M} \left(-\mathbf{A}_{\bm{\beta}}^{-1} \sum_{i=1}^N \lambda_i\mathbf{g}_{i}(\bm{\beta}; \bm{y}, \bm{x}) \Big \lvert\bm{\lambda}\right) \mathbf{F}^T_k \bm{\gamma}^{(d)}, \quad \quad k=1,\dots,K. 
\end{align}

In principle, while the non-linear dependence on the model estimator $\hat{\bm{\beta}}$ is now removed, the exact derivation of the GMSE's bound may still be challenging due to a potential non-linear dependence on the quantities $\bm{\lambda}$ and $\bm{Y}$; this will depend on the specific choice of the estimating equations $\mathbf{g}_{i}$. To overcome this issue, \cite{alleva2021measuring} considered two additional linearisation steps, one for each random component. In this work, we will show that one single additional linearisation step is sufficient to provide an exact solution for the commonly employed multinomial model under a maximum likelihood estimation (MLE) approach. We emphasise that all the necessary conditions for the two Taylor linearisation steps hold for the multinomial logistic model under bounded
covariate space and bounded category probabilities $p_{ik}$. 

\begin{remark}
    In what follows, the focus will be on the multinomial logistic model as one possible specification for multi-category outcomes. Nonetheless, as illustrated in the two-step linearisation above, the proposed approach is not tied to one model only. Since the GMSE derivation is based on (i) a function $f_k(\bm{x}_i;\bm{\beta})$ and (ii) an estimator $\hat{\bm{\beta}}$ defined through estimating equations, whenever these two components are available and sufficiently smooth, the same two-step Taylor linearisation applies under standard regularity conditions. For this reason, the methodology can be extended to broader classes of categorical models, including ordinal multinomial models, multinomial models with random effects, etc., and specifications allowing overdispersion or latent heterogeneity. 
\end{remark}

In particular, in Section~\ref{sec: gmse_mult} the following Theorem~\ref{th: gmse_multin} will be proved, after a formal introduction of the model. 
\begin{theorem} \label{th: gmse_multin}
Consider the setup in Section~\ref{Sec: Setup} and assume that the outcome variable $\bm{Y}_i \sim \text{Multinomial}(1, \bm{p}_i)$, with $\bm{p}_i \in [0,1]^K$ and $\sum_{k=1}^K p_{ik} = 1$, for $i = 1,\dots,N$, with the $N$ units being independent. Let $\Sigma_{\bm{Y}_{i}}$ denote the population variance of the model outcome $\Var_{\M} (\bm{Y}_{i})$ of unit $i$, for $i = 1,\dots,N$; then, under the assumptions of Theorem~\ref{th: gmse_th1}, the following holds:
    \begin{align*}
    \GMSE\left(\hat{\theta}_k^{(d)}, \theta_k^{(d)}\right) &\lessapprox \Exp_{\D}\Var_{\M}(\hat{\theta}_k^{(d)}| \bm{\lambda}) \nonumber\\
    &\approx \bm{\gamma}^{(d)T} \mathbf{F}_k \left( \sum_{i=1}^N \pi_i \bar{\mathbf{U}}_{i} \Sigma_{\bm{Y}_{i}} \bar{\mathbf{U}}_{i} \right) \mathbf{F}^T_k \bm{\gamma}^{(d)},\quad k=1,\dots,K,
\end{align*} 
with $\mathbf{F}_k$ and $\bar{\mathbf{U}}_{i}$ fixed matrices depending on the covariate data $\bm{x}_i$ and the model parameters. These are further explicitated in Section~\ref{sec: gmse_mult} in the case of a multinomial model.
\end{theorem}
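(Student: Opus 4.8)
The plan is to start from the already-linearised bound in Eq.~\eqref{eq: 2st_linearisation} and to carry out the single remaining linearisation by exploiting the special structure of the multinomial logistic score. First I would specialise the generalised estimating equations $\mathbf{g}_i$ to the maximum-likelihood score of the multinomial model. For this model the per-unit score takes the Kronecker form $\mathbf{g}_i(\bm{\beta};\bm{y},\bm{x}) = (\bm{Y}_i - \bm{p}_i)\otimes\bm{x}_i$, which is \emph{linear} in the outcome $\bm{Y}_i$, and the associated matrix of derivatives $\mathbf{A}_{\bm{\beta}} = -\sum_{i=1}^N \lambda_i\left(\Sigma_{\bm{Y}_i}\otimes\bm{x}_i\bm{x}_i^T\right)$ does not depend on $\bm{Y}$. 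This is precisely the feature that removes one of the two extra linearisation steps used in~\cite{alleva2021measuring}: because the score is exactly linear in $\bm{Y}$, no Taylor expansion of $\mathbf{g}_i$ with respect to the model randomness is required.

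Building on this, the inner conditional model variance in Eq.~\eqref{eq: 2st_linearisation} can be evaluated exactly. Since $\mathbf{A}_{\bm{\beta}}$ is fixed with respect to $\M$ given $\bm{\lambda}$, it factors out of $\Var_{\M}(\cdot|\bm{\lambda})$, and using the independence of the $N$ units together with $\lambda_i^2 = \lambda_i$ I would obtain
\begin{align*}
 \Var_{\M}\left(-\mathbf{A}_{\bm{\beta}}^{-1}\sum_{i=1}^N \lambda_i\mathbf{g}_i \Big\lvert \bm{\lambda}\right)
 = \mathbf{A}_{\bm{\beta}}^{-1}\left(\sum_{i=1}^N \lambda_i \,\Var_{\M}(\mathbf{g}_i)\right)\mathbf{A}_{\bm{\beta}}^{-1}
 = \mathbf{A}_{\bm{\beta}}^{-1}\left(\sum_{i=1}^N \lambda_i\, \Sigma_{\bm{Y}_i}\otimes\bm{x}_i\bm{x}_i^T\right)\mathbf{A}_{\bm{\beta}}^{-1},
\end{align*}
where I used that $\Var_{\M}(\mathbf{g}_i) = \Sigma_{\bm{Y}_i}\otimes\bm{x}_i\bm{x}_i^T$ because $\bm{x}_i$ is non-random. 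This step is exact and introduces no approximation beyond the two linearisations already in place.

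The single remaining linearisation concerns the outer design expectation $\Exp_{\D}$, and this is the step I expect to be the main obstacle, since $\mathbf{A}_{\bm{\beta}}^{-1}$ depends on $\bm{\lambda}$ in a non-linear way through the matrix inverse. The plan is to replace the random matrix $\mathbf{A}_{\bm{\beta}}$ by its design expectation $\bar{\mathbf{A}} \doteq \Exp_{\D}(\mathbf{A}_{\bm{\beta}}) = -\sum_{i=1}^N \pi_i\left(\Sigma_{\bm{Y}_i}\otimes\bm{x}_i\bm{x}_i^T\right)$, an approximation justified for large $n$ by the concentration of the normalised information matrix $n^{-1}\mathbf{A}_{\bm{\beta}}$ around its mean; this constitutes the single additional linearisation step anticipated above, and it allows $\Exp_{\D}(\lambda_i)=\pi_i$ to be carried through the sum. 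With this replacement the bound becomes $\bar{\mathbf{A}}^{-1}\left(\sum_{i=1}^N \pi_i\,\Sigma_{\bm{Y}_i}\otimes\bm{x}_i\bm{x}_i^T\right)\bar{\mathbf{A}}^{-1}$.

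Finally, I would recast this sandwich into the per-unit form claimed in the statement. Using the Kronecker mixed-product identity, $\Sigma_{\bm{Y}_i}\otimes\bm{x}_i\bm{x}_i^T = (\mathbf{I}_K\otimes\bm{x}_i)\,\Sigma_{\bm{Y}_i}\,(\mathbf{I}_K\otimes\bm{x}_i^T)$, and defining $\bar{\mathbf{U}}_i \doteq \bar{\mathbf{A}}^{-1}(\mathbf{I}_K\otimes\bm{x}_i)$ (whose explicit entries are worked out in Section~\ref{sec: gmse_mult}), the factor $\bar{\mathbf{A}}^{-1}$ can be absorbed inside the sum to give $\sum_{i=1}^N \pi_i\,\bar{\mathbf{U}}_i\,\Sigma_{\bm{Y}_i}\,\bar{\mathbf{U}}_i^{T}$, using the symmetry of $\bar{\mathbf{A}}^{-1}$ (the transpose being fixed by dimensional consistency, as $\bar{\mathbf{U}}_i$ is $H\times K$). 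Pre- and post-multiplying by $\bm{\gamma}^{(d)T}\mathbf{F}_k$ and $\mathbf{F}_k^T\bm{\gamma}^{(d)}$ then yields the stated approximation for $\Exp_{\D}\Var_{\M}(\hat{\theta}_k^{(d)}|\bm{\lambda})$, which by Theorem~\ref{th: gmse_th1} bounds the GMSE. As a consistency check, the information identity $\sum_i \pi_i \Var_{\M}(\mathbf{g}_i) = -\bar{\mathbf{A}}$ collapses the sandwich to $-\bar{\mathbf{A}}^{-1}$, confirming that the resulting matrix is the expected inverse Fisher information and hence positive semi-definite.
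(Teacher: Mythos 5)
Your proposal is correct and follows essentially the same route as the paper: it exploits the linearity of the multinomial score in $\bm{Y}_i$ to evaluate the inner model variance exactly, replaces $\mathbf{A}_{\bm{\beta}}$ by $\bar{\mathbf{A}}_{\bm{\beta}}=\mathbf{A}_{\bm{\beta}}|_{\bm{\lambda}=\bm{\pi}}$, and then passes $\Exp_{\D}(\lambda_i)=\pi_i$ through the remaining sum to reach the same sandwich form (your $\mathbf{I}_K\otimes\bm{x}_i$ is the paper's $\dot{\mathbf{X}}_i$, and you correctly restore the transpose on the trailing $\bar{\mathbf{U}}_i$). The only divergence is in how the substitution $\mathbf{A}_{\bm{\beta}}\approx\bar{\mathbf{A}}_{\bm{\beta}}$ is justified --- you invoke large-$n$ concentration of $n^{-1}\mathbf{A}_{\bm{\beta}}$ around its mean, whereas the paper's Lemma~1 argues from the smallness of $\Var(\lambda_i)=\pi_i(1-\pi_i)$ when the inclusion probabilities are near the extremes $0/1$ --- but both justifications lead to the identical final expression.
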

\begin{proof}
    See Section~\ref{sec: gmse_mult}.
\end{proof}

The tightness of the upper bound in Theorem~\ref{th: gmse_multin} (as well as Theorem~\ref{th: gmse_th1}) is discussed in detail in Supplementary material B.1, where we show that the gap between the GMSE and its approximated upper bound $\Exp_{\D}\Var_{\M}\left(\hat{\theta}_k^{(d)}|\bm{\lambda}\right)$ is of minor order, driven e.g., by $\Var_{\M}(\theta_k^{(d)})$ for internal domains. In the same Supplementary material B.1, we also provide an empirical quantification of these components, confirming that $\Exp_{\D}\Var_{\M}\left(\hat{\theta}_k^{(d)}|\bm{\lambda}\right)$ is the dominant term.

\subsection{The GMSE for the Multinomial Logistic Model}\label{sec: gmse_mult}

Driven by motivational case studies in ISTAT, we assume that the outcome variable for each unit $i$, with $i = 1,\dots, N$, follows a multinomial regression model, that is, $\bm{Y}_i \sim \text{Multinomial}(1, \bm{p}_i)$, where $\bm{p}_i = \left(p_{i1}, \dots, p_{ik}, \dots,p_{iK} \right)^T$ denotes the unknown event probabilities in $[0,1]$ of each category $k$, for $k=1,\dots,K$, such that $\sum_{k=1}^Kp_{ik} = 1$. These probabilities are modelled as covariate-dependent functions, with covariates $\bm{x}_i \in \mathbb{R}^J$, for all $i = 1,\dots,N$, and then estimated according to a multinomial logistic regression, where $K$ is taken as the baseline category:
\begin{align} \label{eq: mult_model}
    p_{ik} = f_k(\bm{x}_i;\bm{\beta}) = 
    \begin{dcases}
    \frac{\exp{\bm{x}_i}^T\bm{\beta}_k}{ 1 + \sum_{k=1}^{K-1}\exp{\bm{x}_i}^T\bm{\beta}_k}, \quad & k=1,\dots,K-1\\
    \frac{1}{ 1 + \sum_{k=1}^{K-1}\exp{\bm{x}_i}^T\bm{\beta}_k}, \quad & k = K. 
    \end{dcases}
\end{align}

An estimate of the unknown parameters $\bm{\beta}_k$, therefore, $p_{ik}$ can be obtained by MLE, and we shall focus on this estimation approach from now on. Following standard statistical results, which are fully detailed in Supplementary material B.2, under the assumption of independence among the $N$ units and accounting for the sample membership $\bm{\lambda}$, the log-likelihood of the model in Eq.~\eqref{eq: mult_model} is given by:
\begin{align*}
    \ell(\bm{\beta}; \bm{y}, \bm{x}, \bm{\lambda}) = \sum_{i = 1}^{N} \lambda_i \left[\sum_{k=1}^{K-1} y_{ik} \bm{x}_i^T\bm{\beta}_k - \log\left(1 + \sum_{k=1}^{K-1}\exp{\bm{x}_i^T\bm{\beta}_k}\right)\right],
\end{align*}
and the MLEs $\bm{\hat{\beta}}_k$ are obtained as the solution of the problem in Eq.~\eqref{eq: gee}, with $\mathbf{g}_{i}(\bm{\beta}; \bm{y}, \bm{x})$ being the $H$-dimensional vector having components:
\begin{align} \label{eq: gee_multin}
     g_{ikj}(\bm{\beta}; \bm{y}, \bm{x}) = x_{ij} \left( y_{ik} - p_{ik} \right),\quad j=1,\dots,J, k=1,\dots,K.
\end{align}
Note that the components of $\mathbf{g}_{i}(\bm{\beta}; \bm{y}, \bm{x})$ have a non-linear dependence on the fixed parameter vector $\bm{\beta}$ through $\bm{p}_{i}$, but they depend linearly on the random outcome $\bm{Y}_{i}$. The Hessian matrix $\mathbf{A}_{\bm{\beta}}$ results in a further simplification with the dependency on $\bm{Y}_{i}$ removed (the calculus is deferred to Supplementary material B.2):
\begin{align} \label{eq: A_multin}
    a_{(kj)(k'j')} = \frac{\partial^2 \ell(\hat{\bm{\beta}}; \bm{y}, \bm{x}, \bm{\lambda})}{\partial \hat{\beta}_{kj} \partial \hat{\beta}_{k'j'}} \Bigg \lvert_{\hat{\bm{\beta}} = \bm{\beta}} = \begin{cases}
     -\sum_{i = 1}^{N} \lambda_i x_{ij} x_{ij'}p_{ik}(1-p_{ik}),  &\quad k' = k \\
     \sum_{i = 1}^{N} \lambda_i x_{ij} x_{ij'}p_{ik}p_{ik'},  &\quad k' \neq k 
     \end{cases}.
\end{align}
The results in Eqs.~\eqref{eq: gee_multin}-\eqref{eq: A_multin} facilitate the analytical derivation of the variance in Eq.~\eqref{eq: 2st_linearisation}. To see this, let $\Sigma_{\bm{Y}_{i}}$ be the $K \times K$ variance-covariance matrix of the outcome variable, i.e., $\Sigma_{\bm{Y}_{i}} \doteq \Var_{\M} (\bm{Y}_{i})$, and denote by $\mathbf{\dot{X}}_{i}$ and $\mathbf{U}_{i}$ the two matrices given by
\begin{align} \label{eq: x_dot}
    \underset{H \times K}{\mathbf{\dot{X}}_{i}} =
    \begin{bmatrix} 
    x_{i1} & 0& \dots  & 0\\
    \vdots & \dots &  \dots & \vdots\\
    x_{iJ} &\dots&  \dots & \vdots\\
    0 & \dots &  \dots & 0\\
    \vdots & \dots &  \dots & x_{i1}\\
    \vdots & \dots &  \dots & \vdots\\
    0 & \dots & 0 &  x_{iJ}
    \end{bmatrix},\quad \quad \quad \quad \underset{H \times K}{\mathbf{U}_{i}} \doteq \mathbf{A}_{\bm{\beta}}^{-1} \mathbf{\dot{X}}_{i}.
\end{align}
Then, we can re-express the model variance in Eq.~\eqref{eq: 2st_linearisation} as
\begin{align} \label{eq: 2st_linearisation_multinomial}
 \Var_{\M}(\hat{\theta}_k^{(d)}| \bm{\lambda}) 
   & \approx \bm{\gamma}^{(d)T} \mathbf{F}_k
   \Var_{\M} \left(-\mathbf{A}_{\bm{\beta}}^{-1} \sum_{i=1}^N \lambda_i\mathbf{g}_{i}(\bm{\beta}; \bm{y}, \bm{x}) \Big \lvert\bm{\lambda}\right) \mathbf{F}^T_k \bm{\gamma}^{(d)} \nonumber\\
   & = \bm{\gamma}^{(d)T} \mathbf{F}_k
   \left(\sum_{i=1}^N \lambda_i^2\mathbf{U}_{i} \Var_{\M} (\bm{Y}_{i}) \mathbf{U}_{i}^T + \sum_{j\neq i}^N 
   \lambda_i^2\mathbf{U}_{i} \Cov_{\M} (\bm{Y}_{j}, \bm{Y}_{i}) \mathbf{U}_{i}^T\right) \mathbf{F}^T_k \bm{\gamma}^{(d)} \nonumber\\
   & = \bm{\gamma}^{(d)T} \mathbf{F}_k \left( \sum_{i=1}^N \lambda_i^2\mathbf{U}_{i} \Sigma_{\bm{Y}_{i}} \mathbf{U}_{i}^T \right) \mathbf{F}^T_k \bm{\gamma}^{(d)}, \quad \quad k=1,\dots,K. 
\end{align}
since, by assumption, $\Cov_{\M} (\bm{Y}_{j}, \bm{Y}_{i}) = 0, \forall i\neq j$. Note that, in light of the considered model, the exact formulation for matrix's $\mathbf{F}_k$ elements is now given by:
\begin{align} \label{eq: Fk}
   \left.\frac{\partial f_k(\bm{x}_i;\bm{\hat{\beta}})}{\partial \hat{\beta}_{lj}} \right |_{\hat{\beta}_{lj} = \beta_{lj} } =  \begin{cases}
x_{ij}p_{ik}(1-p_{ik}), \quad &l = k\\ 
 -x_{ij}p_{ik}p_{il}, \quad &l \neq k. 
    \end{cases}
\end{align}


The approximation for the GMSE's bound is finally obtained by calculating the expectation of Eq.~\eqref{eq: 2st_linearisation_multinomial} with respect to the design randomness in $\bm{\lambda}$. Since its argument depends on $\bm{\lambda}$ also through $\mathbf{U}_{i}$, more specifically, through the inverse of Hessian matrix $\mathbf{A}_{\bm{\beta}}$, one could proceed either with an additional Taylor approximation to remove the complicated dependence of $\bm{\lambda}$~\citep[as done in][]{alleva2021measuring}, or by using the following lemma. 
\begin{lemma} 
Consider the setup in Section~\ref{Sec: Setup} and assume that the outcome variable $\bm{Y}_i \sim \text{Multinomial}(1, \bm{p}_i)$, with $\bm{p}_i \in [0,1]^K$ and $\sum_{k=1}^K p_{ik} = 1$, for $i = 1,\dots,N$, with the $N$ units being independent. Let $\mathbf{A}_{\bm{\beta}}$ be the Hessian matrix associated with this model. Then, under a non-informative design, 
when the inclusion probabilities $\pi_i$ are close to the extremes 0/1, the following approximation holds:
    \begin{align*}
    \mathbf{A}_{\bm{\beta}} \approx \mathbf{A}_{\bm{\beta}} |_{\bm{\lambda} = \bm{\pi}} \doteq \bar{\mathbf{A}}_{\bm{\beta}}.
    \end{align*}
Furthermore, when $\pi_i \in \{0,1\}, i=1,\dots,N$, the result is an identity.
\end{lemma}
\begin{proof}
    The proof and further considerations on this approximation are provided in the Supplementary material B.3.
\end{proof}
\begin{remark}
    The assumption of $\pi_i$ close to the extremes 0/1 is a standard assumption for inference under the model-based approach; see e.g., pg. 57 in \cite{lohr2021sampling}. Furthermore, in practice, $\pi_i$ are either very small (e.g., for the household surveys or for small enterprise surveys) or close to 1 (e.g., large enterprise surveys).
\end{remark}

Bringing this result together with Eqs.~\eqref{eq: 2st_linearisation}-\eqref{eq: 2st_linearisation_multinomial}, and denoting by 
\begin{align} \label{eq: U_bar}
    \bar{\mathbf{U}}_{i} \doteq \bar{\mathbf{A}}_{\bm{\beta}}^{-1} \mathbf{\dot{X}}_{i},
\end{align}
we get the final result:
\begin{align} \label{eq: GMSE_final}
    \GMSE\left(\hat{\theta}_k^{(d)}, \theta_k^{(d)}\right) &\lessapprox \Exp_{\D}\Var_{\M}(\hat{\theta}_k^{(d)}| \bm{\lambda}) \nonumber\\
    &\approx \bm{\gamma}^{(d)T} \mathbf{F}_k \left( \sum_{i=1}^N \Exp_{\D}\left(\lambda_i^2\right) \bar{\mathbf{U}}_{i} \Sigma_{\bm{Y}_{i}} \bar{\mathbf{U}}_{i} \right) \mathbf{F}^T_k \bm{\gamma}^{(d)}\nonumber \\
    &\approx \bm{\gamma}^{(d)T} \mathbf{F}_k \left( \sum_{i=1}^N \pi_i \bar{\mathbf{U}}_{i} \Sigma_{\bm{Y}_{i}} \bar{\mathbf{U}}_{i} \right) \mathbf{F}^T_k \bm{\gamma}^{(d)},\quad k=1,\dots,K,
\end{align}
where $\Exp(\lambda_i^2) = \pi_i$, since $\lambda_i$ is a binary sample membership indicator, and this holds for any arbitrary sampling design~\citep{sarndal2003model}. Note that the analytical expression of the GMSE in Eq.~\eqref{eq: GMSE_final} can be adapted to different sampling designs, by characterising the inclusion probabilities $\pi_i$ accordingly.


The final linearised bound in Eq. \eqref{eq: GMSE_final} depends on the following quantities: the domain membership indicator $\gamma^{(d)}$, the sampling design specification $\bm{\pi}$, the matrix $\mathbf{F}_k$ as expressed in Eq. \eqref{eq: Fk}, the matrix $\bar{\mathbf{U}}_{i}$ given in Eq. \eqref{eq: U_bar}, and the variance-covariance matrix $\Sigma_{\bm{Y}_{i}}$. While the first two elements are pre-specified and known in advance, the latter matrices all depend on the parameter vector $\bm{p}$, which, in practice, is unknown and has to be estimated. We use its sample counterpart $\hat{\bm{p}}$, leading to the following linearised $\GMSE$ estimator:
\begin{align} \label{eq: GMSE_Lin}
\widehat{\GMSE}^{\text{Lin}}\left(\hat{\theta}_k^{(d)}, \theta_k^{(d)}\right) = \bm{\gamma}^{(d)T} \widehat{\mathbf{F}}_k \left( \sum_{i=1}^N \pi_i \widehat{\bar{\mathbf{U}}}_{i} \widehat{\Sigma}_{\bm{Y}_{i}} \widehat{\bar{\mathbf{U}}}_{i} \right) \widehat{\mathbf{F}}^T_k \bm{\gamma}^{(d)},\quad k=1,\dots,K,
\end{align}
with the generic matrix estimator $\widehat{D} = D|_{\bm{p} = \hat{\bm{p}}}$.

\begin{remark}
    In Eq.~\eqref{eq: 2st_linearisation_multinomial} and thereafter, we have assumed that $\Cov_{\M} (\bm{Y}_{j}, \bm{Y}_{i}) = 0, \forall i\neq j$. In models such as latent-class or longitudinal models, where this does not hold in general, one may use e.g., a plug-in estimate of $\Cov_{\M} (\bm{Y}_{j}, \bm{Y}_{i})$, making it possible to generalise the  GMSE to those settings. In that case, the covariance component does not vanish and the linearised GMSE takes the form
\begin{align*}
\widehat{\mathrm{GMSE}}^{\,\mathrm{Lin}}\!\left(\hat\theta_k^{(d)}, \theta_k^{(d)}\right)
  = \boldsymbol{\gamma}^{(d)T}\hat{\mathbf{F}}_k
  \left( \sum_{i=1}^N \pi_i \widehat{\bar{\mathbf{U}}}_{i} \widehat{\Sigma}_{\bm{Y}_{i}} \widehat{\bar{\mathbf{U}}}_{i} 
    + \sum_{\ell \neq i} \pi_{i,\ell}\,
      \widehat{\bar{\mathbf{U}}}_{i} \widehat{\Sigma}_{\bm{Y}_{i,\ell}} \widehat{\bar{\mathbf{U}}}_{i} 
  \right) 
  \hat{\mathbf{F}}_k^T \boldsymbol{\gamma}^{(d)},
\end{align*}
where $\widehat{\Sigma}_{\bm{Y}_{i,\ell}}$ is a plug-in estimate of
$\mathrm{Cov}_{\bm Y}(\boldsymbol{Y}_i, \boldsymbol{Y}_\ell)$ and
$\pi_{i,\ell}$ is the joint inclusion probability of units $i$ and $\ell$.
Since the exact values of $\pi_{i,\ell}$ are typically unknown, they can be approximated by
$\pi_{i,\ell} \cong \pi_i\pi_\ell \frac{N}{n}\frac{n-1}{N-1}$,
which holds exactly under simple random sampling (SRS) without replacement.
\label{remark_cov}
\end{remark}

\begin{remark} \label{remark_Y}
    The analytical expression for the GMSE derived in Eq.~\eqref{eq: GMSE_final} is based on the estimator in Eq.~\eqref{eq: theta_hat}, which considers $\hat{Y}_{ik} = \hat{p}_{ik}$, for all $i$ and all $k$. However, in many cases, an estimator directly targeting the nature of the response variable, i.e., $\hat{Y}_{ik} \in \{0,1\}$, may be of interest, instead. In a multinomial model, this is typically achieved by carrying out an additional step that involves taking a random draw from a multinomial distribution with parameters $\hat{\bm{p}}_i$, for $i = 1,\dots,N$. Since this step introduces an additional source of variability, this must be reflected in the GMSE. Under the same conditions of Theorem~\ref{th: gmse_multin}, but focusing on the estimator $\tilde{\hat{\theta}}_k^{(d)} = \sum_{i=1}^N \gamma_i \tilde{\hat{Y}}_{ik}$, with $\tilde{\hat{Y}}_i \sim \text{Multinomial}(\hat{\bm{p}}_i)$, it is straightforward to show that
    \begin{align*}
    \widehat{\GMSE}^{\text{Lin}}\left(\tilde{\hat{\theta}}_k^{(d)}, \theta_k^{(d)}\right) = \widehat{\GMSE}^{\text{Lin}}\left(\hat{\theta}_k^{(d)}, \theta_k^{(d)}\right) + \sum_{i=1}^N \gamma_i^{(d)}\hat{p}_{ik}(1-\hat{p}_{ik}),\quad k=1,\dots,K,
    \end{align*} 
    where the additional (positive) component reflects the variability of the outcomes of a multinomial model.
\end{remark}

To facilitate implementability a pseudocode for the GMSE derivation is provided in the Supplementary material C. Further, to enhance computational efficiency, an alternative formulation using the Kronecker product is provided in the Supplementary material E.

\subsection{Computational Scalability and Feasibility} \label{sec: scalability}

Following a useful suggestion from one of the Reviewers, we explicitly discuss the computational scalability and practical feasibility of the proposed GMSE approach. This aspect is particularly relevant in modern register-based official statistics, where target populations may contain millions of units and users increasingly demand timely, domain-specific outputs. In this context, computability is not a secondary issue, but a core requirement for any methodology intended for operational use. It also represents an important advantage of the proposed framework when compared with simulation-based accuracy methods, such as bootstrap or other resampling procedures, whose repeated re-estimation steps may become computationally burdensome in large-scale applications. We analyse these aspects with reference to Eq.~\eqref{eq: GMSE_Lin}, covering domain characterization, storage requirements, computational complexity, and the numerical stability of the quantities used in the computation. 

\paragraph{Domain scalability}
Our objective is to describe a method enabling statistical institutes with register systems to allow users to generate custom domain-specific statistics {\it on the fly}, along with a simultaneous measure of statistical accuracy to support informed use. A key feature is the separation between domain-invariant and domain-specific quantities in the GMSE formulation in Eq.~\eqref{eq: GMSE_Lin}. The matrices $\widehat{\mathbf{F}}_k$, 
$\widehat{\bar{\mathbf{U}}}_{i}$,
$\widehat{\Sigma}_{\bm{Y}_i}$ 
do not depend on the domain specification $d$, as they are determined by the model and the estimated probabilities $\hat{\bm{p}}_i$. The only domain-specific component is the vector $\bm{\gamma}^{(d)}$, which encodes the membership of units to the domain and is structurally part of the register. As a result, once the domain-invariant quantities are computed, GMSE estimates for different domains can be obtained {\it on the fly} by simply updating $\bm{\gamma}^{(d)}$, with negligible additional computational cost. This property is particularly advantageous in interactive environments, where users may define multiple domains dynamically, as it avoids repeated estimation steps and ensures fast and scalable computation.


\paragraph{Numerical stability}
A potential concern in the implementation of Eq.~\eqref{eq: GMSE_Lin} relates to the inversion of the Hessian matrix $\bar{\mathbf{A}}_{\bm{\beta}}$ in Eq.~\eqref{eq: U_bar}, whose dimension is $H \times H$ with $H = K \times J$. In applications with a large number of categories $K$ or covariates $J$, direct inversion may become computationally demanding and numerically unstable, especially in the presence of collinearity or sparse data structures. 
In our implementation, to enhance numerical stability we used Moore-Penrose generalised inverse, which exists and is unique for every matrix, proving robustness to near-singularity and multicollinearity issues. This is implemented in \texttt{ginv()} function from the \texttt{MASS} package in \textsf{R} \citep{venables2013modern}. Importantly, the computation of $\bar{\mathbf{A}}_{\bm{\beta}}^{-1}$ is required only once and can be reused across all domains $d$ and categories $k$, substantially reducing the overall computational burden.

\paragraph{Memory load and dimensionality} The $(H \times H)$ matrix $\widehat{\bar{\mathbf{A}}}_{\bm{\beta}}^{-1}$ is not problematic, as it requires relatively limited space even for moderately large $H$. In contrast, explicitly storing unit-level matrices $\bar{\mathbf{U}}_{i}$ and $\widehat{\Sigma}_{\bm{Y}_i}$ for $i=1,\dots,N$, may become prohibitive in large-scale applications, especially for large $K$ and $N$. To address this issue, we rely on an {\it on-the-fly} computation strategy. The rows $\hat{\mathbf{f}}_{ki}$ of $\widehat{\mathbf{F}}_k$ can be directly obtained from the analytical expression in Eq.~\eqref{eq: Fk} using $\bm{x}_i$ and $\hat{\bm{p}}_i$. Similarly, $\mathbf{\dot{X}}_i$ is a simple transformation of $\bm{x}_i$, and $\widehat{\Sigma}_{\bm{Y}_i}$ can be computed from $\hat{\bm{p}}_i$, with diagonal elements $\hat{p}_{ik}(1-\hat{p}_{ik})$ and off-diagonal elements $-\hat{p}_{ik}\hat{p}_{ik'}$ for $k \neq k'$. As a consequence, there is no need to store these matrices explicitly for all units, avoiding memory costs of order $O(NHK)$ and reducing storage requirements to $O(H^2)$, plus the storage of $\hat{\bm{p}}_i$, the parameter vector $\hat{\bm{\beta}}$, and the inclusion probabilities $\pi_i$.

\section{Application to the Attained Level of Education}\label{sec: application}

This section illustrates the application of the GMSE to education data from the Base Register of Individuals of the Italian ISSR. As described in Section~\ref{sec: motivation}, complete unit-level information for the attained level of education in Italy for a reference year $t$ (denoted by $Y_{ik}, i=1,\dots,N; k=1,\dots,K$) is retrieved from different data sources. Specifically, annual sample surveys are conducted to infer the education level of the entire population, while using auxiliary information from other two data sources: administrative data from the Ministry of Education, University and Research and the last Italian Census in 2011. These data sources define a population partition into three subgroups that differ in their informative content: 
\begin{itemize}
    \item Subgroup A ($22\%$ of the register units) -- administrative data: Individuals who entered a national study program between 2011 and $t-2$, attended a course in Italy between $t-2$ and $t-1$, and for whom the information on the education level at $t-1$ is available. Based on the reference year $t$, this information may change, therefore, it represents the dynamic component of the ISSR.
    \item Subgroup B ($73\%$ of the register units) -- 2011 census data: Individuals not in Subgroup A (not enrolled in any course between $t-2$ and $t-1$), but for whom education data from the 2011 Italian Census is available. 
    \item Subgroup C ($5\%$ of the register units): Individuals not recorded in the administrative or census data, therefore, with no direct information on the education level from the previous years. This subgroup is composed mainly of adults, with a high percentage of not Italian citizens ($>60\%$).
\end{itemize}

Unobserved unit-level education data at time $t = 2019$ (those not included in the sample) are imputed using a model-based approach, with the use of different conditional models varying in their covariate set according to the subgroup. For expositional simplicity and without loss of generality, we restrict the analysis to Subgroup B, which covers most of the register units ($73\%$). Here, the conditional probabilities $p_{ik} = \mathbb{P}(Y_{ik} = 1 | \bm{x}_i)$ depend on a set of covariates $\bm{x}$ given by $X_1 : \texttt{age class}$, $X_2 : \texttt{gender}$, $X_3 : \texttt{Italian citizenship}$, $X_4 : \texttt{attained education in}\ 2011$, and $X_5 : \texttt{province}$. A number of $K = 8$ education categories are considered: {\tt 1 Illiterate}, {\tt 2 Literate but no formal educational attainment}, {\tt 3 Primary education}, {\tt 4 Lower secondary education}, {\tt 5 Upper secondary education}, {\tt 6 Bachelor's degree or equivalent level}, {\tt 7 Master’s degree or equivalent level}, {\tt 8 PhD level}. For further details on the model choice and imputation procedure we refer to \cite{di2019imputation}.

For data confidentiality, the specific application illustrated here is based on a random subset of the entire Base Register of Individuals, accounting for $\approx 10\%$ of the register units, and to the Italian region of Emilia Romagna. Therefore, this empirical application is to be regarded as an illustrative case study rather than a comprehensive national analysis. Nonetheless, the proposed methodology is not limited to this specific setting and extends directly to other regions, subgroups, domains, and broader national-level applications within integrated statistical registers. Overall, a subset of $N = 296,565$ register units has been considered in this analysis; data are made publicly available in a link on our Github repository: \url{https://github.com/nina-DL/GMSE/}. Among these, a total of $n = 14,860$ units ($5\%$ of the total) have been surveyed within the annual sample survey program for the reference year $t=2019$. Descriptive summaries on both covariates (available for all register units) and the outcome of interest (available only for the sample units) are reported in Table~\ref{tab: covariates} and in Figure~\ref{fig: edu_counts} (right panel). Figure~\ref{fig: edu_counts} (left panel) provides the register estimated totals, after mass-imputation.

\begin{table}[b]
\centering 
\caption{Register data of Subgroup B (covariates, sample and domain 
membership) and marginal properties.} \label{tab: covariates}
  \begin{tabular}{lll}
    \toprule
      & Variable type  & Marginal frequency $\bm{p}_{X_j}$\\
    Covariates & & \\
    \midrule
$X_1: \texttt{age class}$ (\texttt{(,28]})  & Categorical: $\{1,\dots,5\}$ 
    &  $(0.012, 0.115, 0.208, 0.395, 0.270)$\\ 
$X_2 : \texttt{gender}$ (\texttt{Female}) & Binary: $\{0, 1\}$ 
    &  $(0.477, 0.523)$\\ 
$X_3 : \texttt{Italian citizenship}$ (\texttt{Yes})  & Binary: $\{0, 1\}$ 
    &  $(0.071, 0.929)$\\ 
$X_4: \texttt{education in 2011}$ & Categorical: $\{1,\dots,8\}$ 
    & $(0.005, 0.024, 0.171, 0.294, 0.363, 0.026, 0.113, 0.004)$\\
    \midrule
    Sample membership & & \\
    \midrule
$\lambda$: Sample indicator (\texttt{Yes})  & Binary: $\{0, 1\}$ 
    &  $(0.950, 0.050)$ \\ 
    \midrule
    \multicolumn{3}{l}{Domain membership (internal: used as covariate 
    in the fitted model)} \\
    \midrule
$\gamma^{(d)}, d = X_2$: \texttt{gender} & Binary: $\{0, 1\}$ 
    &  $(0.477, 0.523)$\\ 
    \midrule
    \multicolumn{3}{l}{Domain membership (external: \textit{not} used 
    as covariate in the fitted model)} \\
    \midrule
$\gamma^{(d)}, d = X_5 : \texttt{province}$ & Categorical: $\{1,\dots,9\}$ 
    & $(0.075, 0.065, 0.09, 0.1, 0.09, 0.12, 0.08, 0.16, 0.23)$\\
    \bottomrule
  \end{tabular}
\end{table}

\begin{figure}[b]
    \centering
    \includegraphics[scale = .5]{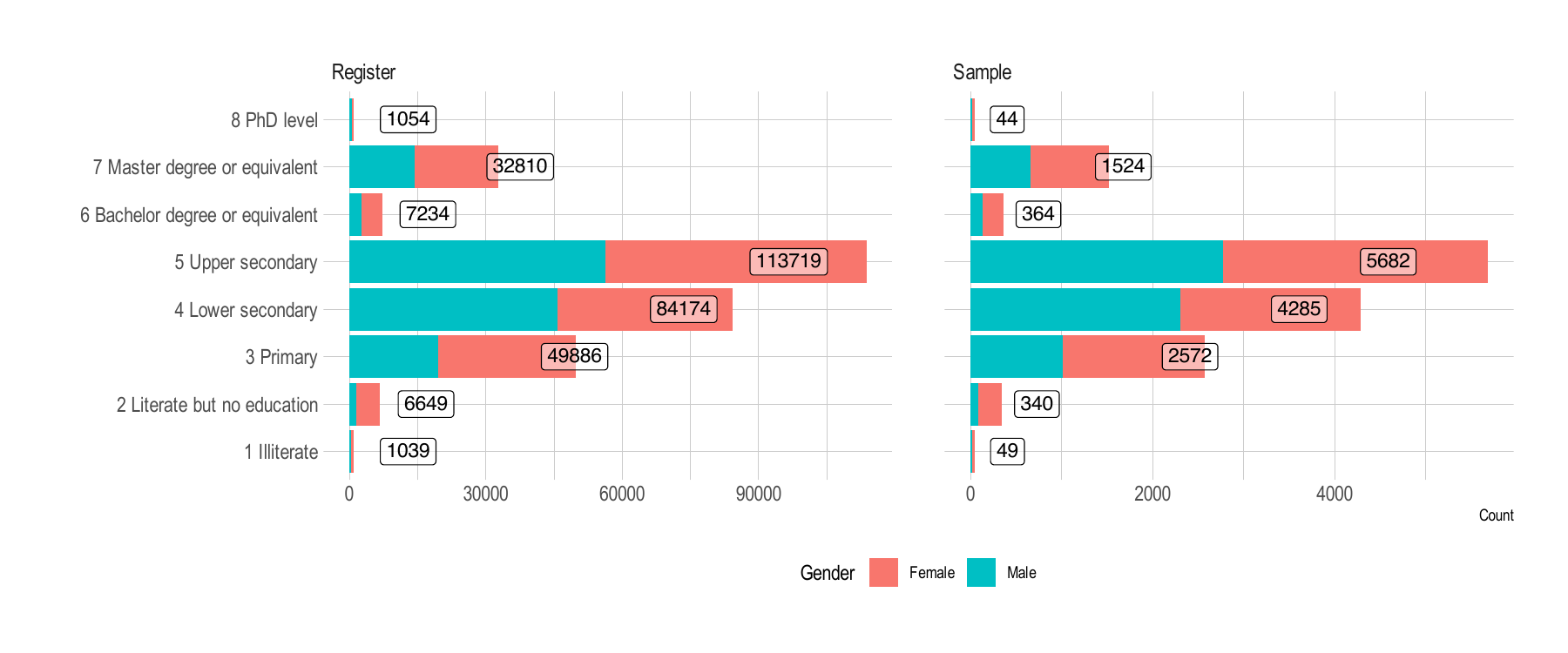}
    \caption{Number of individuals (counts) in each of the $K=8$ categories of the attained level of education: population estimates $\hat{\theta}^{(d)}_k, k=1,\dots,8$ (Subgroup B of the register; left) obtained according to Eq.~\eqref{eq: theta_hat} and sample survey data (right) with respect to the internal domain $\gamma^{(d)}, d = X_2$: Gender $ \in \{\texttt{Male, Female}\}$.}
    \label{fig: edu_counts}
\end{figure}

For imputing unit-level data for the $N-n = 281,705$ register units that are not part of the sample, a multinomial model has been fitted on the sample data using Subgroup B covariates, with the exception of $X_5: \texttt{province}$, which is treated as an external domain. Results on the estimated regression coefficients are reported in the Supplementary material D (Table 1). 
The estimates of population totals $\hat{\theta}^{(d)}_k = \sum_{i=1}^N \gamma_i^{(d)}\hat{Y}_{ik}, k=1,\dots,8$, in Figure~\ref{fig: edu_counts} (left panel) are obtained following a model-based approach with this working model and the deterministic estimator in Eq.~\eqref{eq: theta_hat}. Estimates for the population totals are consistent with the sample data (in terms of both ordering and relative size); in particular, the highest percentage of individuals are observed in categories \texttt{4 Lower secondary} ($28.4\%$) and \texttt{5 Upper secondary} ($38.4\%$), while a very limited proportion belongs to classes \texttt{1 Illiterate} ($0.35\%$) and \texttt{8 PhD level} ($0.35\%$). This has an important impact on the accuracy of the estimates, with a higher uncertainty for those classes characterised by a lower frequency. 

Accuracy estimates are reported in Table~\ref{tab: gmse_est}, comparing the 
proposed linearised estimator in Eq.~\eqref{eq: GMSE_Lin} against three 
alternative approaches: a non-parametric bootstrap with $B = 1,000$ resamples 
(pseudo-code available in Supplementary material C), a design- and a model-based approach, each based on $M = 1,000$ replicates. We recall that the model-based approach fixes the (observed) sample and replicates the outcome model, while the design-based approach fixes the (fitted) model and replicates the sampling design. Further details on their computation (and a comparison with the GMSE) are provided in Supplementary material C.1. 
Given the different size of each category $k$, as well as that of each different domain $d$, we consider a measure of relative accuracy, given by the coefficient of variation~\citep[CV; see e.g.,][] {lohr2021sampling}. Specifically, assuming that the category of interest is non-null, that is $\theta_k^{(d)} \neq 0$, this is given by:
\begin{align} \label{eq: cv}
    \CV\left(\hat{\theta}_k^{(d)}, \theta_k^{(d)}\right) = \frac{\sqrt{\GMSE\left(\hat{\theta}_k^{(d)}, \theta_k^{(d)}\right)}}{\Exp\left(\theta_k^{(d)}\right)},\quad k=1,\dots, K.
\end{align}
In practice, since both the numerator and denominator are unknown, a plug-in estimator $\widehat{\CV}_k = \widehat{\CV}\left(\hat{\theta}_k^{(d)}, \theta_k^{(d)}\right)$ is employed. The GMSE estimates are provided for completeness in the Supplementary material D (Table 2). 

\begin{table}[h]
\centering 
\caption{Estimates of totals $\hat{\theta}^{(d)}_k = \sum_{i=1}^N \gamma_i^{(d)}\hat{Y}_{ik}, k=1,\dots,8$ for the full register and for domain $d \in X_2: \text{Gender}$, with their estimated CV (\%) . The sample fraction $n_k^{(d)}/\hat{\theta}^{(d)}_k$ is between 3.9\% and 5.2\% across all cases. Bootstrap, as well as model-based and design-based estimates are based on a set of $1,000$ replicates.} \label{tab: gmse_est}
  \begin{tabular}{lrrrrrr}
    \toprule
      Category $k$ & $\hat{\theta}^{(d)}_k$ & \makecell{Sample \\ size $n_k^{(d)}$} 
       & $\widehat{\CV}^{\text{Lin}}_k$  &  $\widehat{\CV}^{\text{Boot}}_k$ &  $\widehat{\CV}^{\text{Model}}_k$&  $\widehat{\CV}^{\text{Design}}_k$\\
      \midrule
      Full register: $\gamma_i = 1,\quad i=1,\dots,N$\\
      \midrule
\texttt{1 Illiterate} &1039 &49 &11.86\% &13.38\%   &11.79\%  &12.53\% \\
\texttt{2 Literate but no education} &6649 &340 &4.70\%    &4.96\%      &4.73\%     &4.86\% \\
\texttt{3 Primary} &49886 &2572 &1.08\%    &1.11\%      &1.01\%     &1.07\% \\
\texttt{4 Lower secondary} &84174 &4285 &0.87\%    &0.92\%      &0.86\%     &0.87\% \\
\texttt{5 Upper secondary} &113719 &5682 &0.62\%    &0.67\%      &0.58\%     &0.62\% \\
\texttt{6 Bachelor degree} &7234 &364 &4.18\%    &4.17\%      &3.94\%     &3.96\%\\
\texttt{7 Master degree} &32810 &1524 &1.26\%    &1.33\%      &1.05\%     &1.11\% \\
\texttt{8 PhD level} &1054 &44 &12.02\%   &11.15\%      &8.83\%    &11.13\% \\
\midrule
      Internal domain: $\gamma^{(d)}, d = \texttt{Male}$ (47.7\%)\\
\midrule
\texttt{1 Illiterate} &300 &14 & 22.22\%  &24.66\% &18.99\%    &24.66\% \\
\texttt{2 Literate but no education} &1569 &81 &9.97\%   &10.69\% &9.94\%     &10.59\% \\
\texttt{3 Primary} &19631 &1015 & 1.72\%   &1.81\% &1.61\%     & 1.79\% \\
\texttt{4 Lower secondary} &45853 &2306 &1.12\%   &1.22\% &1.10\%      &1.08\% \\
\texttt{5 Upper secondary} &56374 &2775 &0.88\%   &0.94\% &0.85\%      &0.85\% \\
\texttt{6 Bachelor degree} &2701 &132 &7.06\%   &6.39\% &6.89\%      &6.72\% \\
\texttt{7 Master degree} &14443 &656 &1.91\%   &1.95\% &1.59\%      &1.76\% \\
\texttt{8 PhD level} &510 &20 &17.33\%   &15.18\% &11.65\%     &16.31\% \\
\midrule
      Internal domain: $\gamma^{(d)}, d = \texttt{Female}$ (52.3\%)\\
\midrule
\texttt{1 Illiterate} &739&35& 14.01\%   &15.55\% &14.60\%     &14.86\% \\
\texttt{2 Literate but no education} &5080&259 & 5.31\%   &5.46\%  &5.45\%      &5.52\% \\
\texttt{3 Primary} &30255 &1557  &1.37\%   &1.46\% &1.29\%      &1.40\% \\
\texttt{4 Lower secondary} &38321 &1979  &1.35\%   &1.44\%  &1.33\%     &1.35\% \\
\texttt{5 Upper secondary} &57345&2907  & 0.87\%   &0.94\% &0.84\%      &0.86\% \\
\texttt{6 Bachelor degree} &4533&232  &5.14\%   &5.27\% &4.77\%     &4.75\% \\
\texttt{7 Master degree} &18367&868 &1.66\%   &1.73\% &1.47\%      &1.54\% \\
\texttt{8 PhD level} &545 &24 &16.50\%   &15.62\% &13.17\%     &14.78\% \\
    \bottomrule
  \end{tabular}
\end{table}

As expected, all methods suggest a lower estimation error for those categories with higher frequency, and viceversa for smaller categories. In particular, the two extreme cases of \texttt{1 Illiterate} and \texttt{8 PhD level} suffer the most: both show an estimated CV between $8.83\%$ and $13.38\%$ when no domains are considered. When interest is in specific domains (e.g., gender), the lower-size subgroup is affected the most, especially when it comes with a sample unbalance with respect to some of the categories of interest. This occurs, for example, for the combination of domain $d = \texttt{Male}$ and category $k = \texttt{1 Illiterate}$, where estimates, being based only on 14 units in the sample (see Table~\ref{tab: gmse_est}, column 2), have an estimated CV ranging between $18.99\%$ and $24.66\%$. For the same category but domain $d = \texttt{Female}$, the number of sample units double (35), and the relative CV reduces to values between $14\%-15\%$. These values are aligned across all different accuracy estimation approaches, with a slight increase shown by the bootstrap approach, while the model-based approach is the one with the smallest estimate on average.

Depending on the granularity of the domain of interest, the error may incur an additional increase. However, the (sample/register) size of the category $k$ remains the primary driver, especially when the domain is external, that is, when this is not part of the model. An in-depth relationship between the estimated relative error, the sample category size (counts), and the domain size is given in Figure~\ref{fig: CV_province_sample} for the external domain of province, consisting in 9 items. Results are reported only for the proposed linearised estimator, given its centrality in this work. A further illustration is provided in Supplementary material D with respect to the estimated register counts, rather than the sample counts, showing similar results. For external domains, such as the province, it becomes clear that the accuracy is a function of the marginal size of the class, and not of its conditional size given the province $d$. The constant value of the estimated CV for a given class $k$ but for different provinces reflects this; see, e.g., the green points for $k = \texttt{5 Upper secondary}$ in Figure~\ref{fig: CV_province_sample}. That is certainly not the case for an internal domain, defined, e.g., by variable $X_2: \texttt{gender}$, where the conditional size has a direct impact on the error (see Supplementary material C for further details).
    
\begin{figure}[h]
    \centering
    \includegraphics[scale = .6]{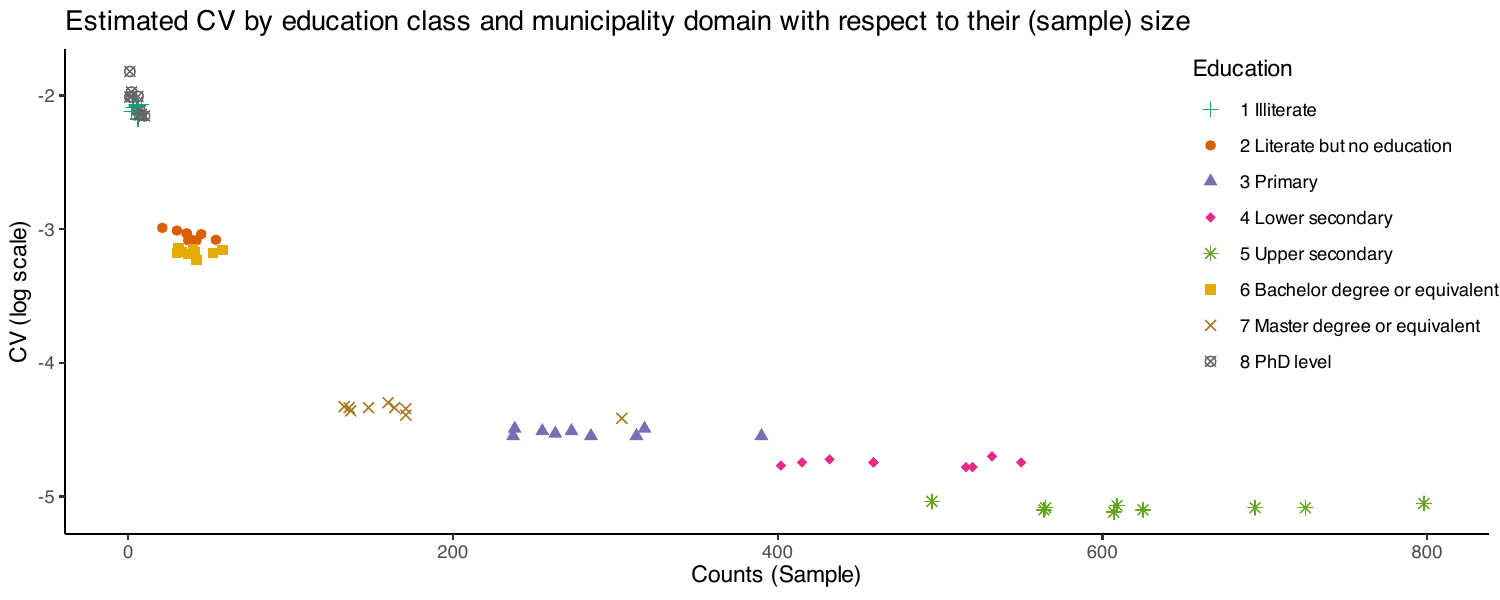}
    \caption{Coefficient of variation estimated with the linearised GMSE approach ($\widehat{\CV}^{\text{Lin}}_k$; in log scale) with respect to the sample size of each class $k = 1,\dots, 8$ and domain combination, with $d \in X_5$: \texttt{province} having $9$ modalities. }
    \label{fig: CV_province_sample}
\end{figure}

Being a real-world application, a more in-depth evaluation of the linearised 
and compared approaches will be provided in simulations (Section~\ref{sec: 
simulations}), with the introduction of the Monte Carlo (MC) benchmark estimator 
serving as ``the truth''. Nonetheless, it is already instructive to note that 
the linearised approach is the only one that {\it scales} to different domains of interest such as gender or province. As discussed in Section \ref{sec: scalability} (with pseudo-code details in Supplementary material C), the linearised approach relies on a \textit{single-time} computation of the plug-in estimators in 
Eq.~\eqref{eq: GMSE_Lin}, which are then reused for any domain or subpopulation 
of interest, making it a faster and more practical option for unplanned 
\textit{on-the-fly} estimates requested by end users. When specific domains are of interest, all other resampling-based methods will incur a computational burden that increases linearly with the number of domains. If there is a possibility to optimise the code so as to save and re-use the different replications over the domains of interest, this certainly comes with non-trivial memory burden, which may be infeasible for current systems. In terms of overall computational cost, for the full register with no domain stratification, the linearised approach requires less than 6 minutes, compared to approximately 54 minutes for the bootstrap with $B=1,000$ replicates on a standard laptop. These values result from a Darwin (macOS) Kernel Version 22.5.0; root:xnu-8796.121.3-7/RELEASE\_X86\_64, using the \texttt{system.time()} function in R version 4.3.2.
    
\section{Simulation Comparison}\label{sec: simulations}

To validate the proposed linearised GMSE estimation approach, we run a set of simulation studies with varying sample sizes $n = 0.05 \times N$, where $N \in \{100,000; 300,000; 500,000\}$. We follow a data-driven approach, guided by real data described in Section~\ref{sec: application}, which refer to the ISTAT register of the attained level of education (response variable) and include a set of covariates as described in Table~\ref{tab: covariates}. In our simulations, covariate data are re-generated independently from an infinite population specified by $\bm{x}_j \sim \text{Multinomial}(\bm{p}_{X_j}), \forall j$, with correspondent parameters given in Table~\ref{tab: covariates} (column 3). The sample membership indicator $\lambda_i$ is generated according to an SRS design without replacement, using the R function ``\texttt{sample($N, n$)}'', with $N$ the register units and $n$ the fixed sample size. 
Furthermore, we account for the internal domain variable $X_2 : \texttt{gender}$, and for an external one, given by $X_5 : \texttt{province}$ and generated as $\gamma_2 = \bm{x}_5 \sim \text{Multinomial}(\bm{p}_{X_5})$. If a variable $X_j$ is used as a model predictor it represents an internal domain; otherwise it defines an external one. The response variable is generated according to the multinomial model of interest, as described in Section~\ref{sec: gmse_mult}, with the ``true'' fixed coefficients $\bm{\beta} = (\bm{\beta}_1, \dots, \bm{\beta}_K)$ specified in Table 1 (Supplementary material D). These are obtained from fitting the same model to the real education data. To facilitate replicability of the results, both data and R codes are made publicly available at \url{https://github.com/nina-DL/GMSE}. 

Table~\ref{tab: gmse_est_sim} reports the estimated CVs for all approaches and population sizes. Here, the proposed GMSE estimation method can be compared to the MC estimator, serving as a benchmark for the true GMSE; the bootstrap, design-based, and model-based approaches considered in Section~\ref{sec: application} are also reported. Technical details on their computation are provided in Supplementary material C. The primary focus of the comparison is between $\widehat{\CV}^{\text{Lin}}_k$ and $\widehat{\CV}^{\text{MC}}_k$: the former is the proposed analytical approximation, the latter provides a numerical benchmark for the true GMSE under joint model and sampling randomness. Across all simulation scenarios, $\widehat{\CV}^{\text{Lin}}_k$ closely tracks the MC benchmark. Agreement is particularly strong for the most frequent categories (\texttt{3 Primary}, \texttt{4 Lower secondary}, \texttt{5 Upper secondary}, \texttt{7 Master degree}), where differences 
between the two are negligible and both correctly reflect the expected reduction in uncertainty as $N$ increases. The main discrepancies occur for the rarest categories (\texttt{1 Illiterate}, \texttt{8 PhD level}), where small effective sample sizes make first-order approximations more sensitive to higher-order terms. Even in these challenging cases, however, $\widehat{\CV}^{\text{Lin}}_k$ reproduces the systematic decline in uncertainty as $N$ increases. When compared to the other estimation approaches, the bootstrap shows a slight uncertainty overestimation, particularly for rare categories and smaller $N$, while model-based and design-based estimators are aligned with the benchmark. In general, as $N$ increases, the proposed linearised approach is the one that better matches the MC benchmark. 
\begin{table}[h]
\centering 
\caption{Estimates of the CV (in \%) with respect to the register totals 
estimates $\hat{\theta}^{(d)}_k = \sum_{i=1}^N \gamma_i^{(d)}\hat{Y}_{ik}, 
k=1,\dots,8$, with $\gamma_i = 1, i=1,\dots,N$, for $N \in \{100{,}000; 
300{,}000; 500{,}000\}$. Bootstrap, model-based, and design-based estimates 
are based on $1{,}000$ replicates; the MC estimator is based on $G = 100$ 
and $M = 100$ replicates.}\label{tab: gmse_est_sim}
\begin{tabular}{lrrrrrrr}
    \toprule
    Category $k$ & $\hat{\theta}^{(d)}_k$ & \makecell{Sample \\ size 
    $n_k^{(d)}$} & $\widehat{\CV}^{\text{Lin}}_k$ & 
    $\widehat{\CV}^{\text{Boot}}_k$ & 
    $\widehat{\CV}^{\text{Model}}_k$ &
    $\widehat{\CV}^{\text{Design}}_k$ & 
    $\widehat{\CV}^{\text{MC}}_k$\\
    \midrule
    \multicolumn{8}{l}{$N = 100{,}000;\quad n = 5{,}000$}\\
    \midrule
\texttt{1 Illiterate}                & 308   & 15   & 22.27\% & 26.56\% & 24.47\% & 24.23\% & 23.87\% \\
\texttt{2 Literate but no education} & 1886  & 81   & 10.19\% & 11.01\% & 10.89\% & 9.34\%  & 9.76\%  \\
\texttt{3 Primary}                   & 14537 & 724  & 2.46\%  & 2.64\%  & 2.45\%  & 2.34\%  & 2.53\%  \\
\texttt{4 Lower secondary}           & 30854 & 1517 & 1.51\%  & 1.56\%  & 1.52\%  & 1.46\%  & 1.55\%  \\
\texttt{5 Upper secondary}           & 38667 & 1929 & 1.06\%  & 1.13\%  & 1.09\%  & 1.04\%  & 1.09\%  \\
\texttt{6 Bachelor degree}           & 2183  & 107  & 7.26\%  & 7.60\%  & 7.28\%  & 7.47\%  & 7.81\%  \\
\texttt{7 Master degree}             & 11247 & 609  & 2.11\%  & 2.38\%  & 2.07\%  & 1.93\%  & 2.00\%  \\
\texttt{8 PhD level}                 & 317   & 12   & 20.59\% & 28.14\% & 23.13\% & 20.04\% & 19.73\% \\
    \midrule
    \multicolumn{8}{l}{$N = 300{,}000;\quad n = 15{,}000$}\\
    \midrule
\texttt{1 Illiterate}                & 949   & 58   & 15.67\% & 17.40\% & 16.31\% & 13.63\% & 14.12\% \\
\texttt{2 Literate but no education} & 5613  & 292  & 5.44\%  & 5.44\%  & 5.28\%  & 5.22\%  & 5.58\%  \\
\texttt{3 Primary}                   & 41411 & 2216 & 1.42\%  & 1.49\%  & 1.42\%  & 1.38\%  & 1.44\%  \\
\texttt{4 Lower secondary}           & 92610 & 4626 & 0.86\%  & 0.91\%  & 0.86\%  & 0.87\%  & 0.88\%  \\
\texttt{5 Upper secondary}           & 116326& 5892 & 0.65\%  & 0.67\%  & 0.65\%  & 0.62\%  & 0.62\%  \\
\texttt{6 Bachelor degree}           & 6464  & 296  & 4.31\%  & 4.47\%  & 4.45\%  & 4.25\%  & 4.41\%  \\
\texttt{7 Master degree}             & 33711 & 1704 & 1.15\%  & 1.22\%  & 1.17\%  & 1.10\%  & 1.16\%  \\
\texttt{8 PhD level}                 & 915   & 42   & 9.15\%  & 9.86\%  & 9.07\%  & 10.52\% & 10.90\% \\
    \midrule
    \multicolumn{8}{l}{$N = 500{,}000;\quad n = 25{,}000$}\\
    \midrule
\texttt{1 Illiterate}                & 1581  & 80   & 10.28\% & 10.72\% & 10.34\% & 10.14\% & 10.28\% \\
\texttt{2 Literate but no education} & 9375  & 469  & 4.05\%  & 3.98\%  & 3.98\%  & 3.98\%  & 4.26\%  \\
\texttt{3 Primary}                   & 72608 & 3656 & 1.12\%  & 1.15\%  & 1.09\%  & 1.09\%  & 1.14\%  \\
\texttt{4 Lower secondary}           & 154465& 7763 & 0.68\%  & 0.72\%  & 0.70\%  & 0.66\%  & 0.67\%  \\
\texttt{5 Upper secondary}           & 193322& 9680 & 0.48\%  & 0.51\%  & 0.48\%  & 0.46\%  & 0.50\%  \\
\texttt{6 Bachelor degree}           & 10936 & 552  & 3.47\%  & 3.61\%  & 3.50\%  & 3.19\%  & 3.43\%  \\
\texttt{7 Master degree}             & 56129 & 2880 & 0.89\%  & 1.00\%  & 0.90\%  & 0.87\%  & 0.89\%  \\
\texttt{8 PhD level}                 & 1583  & 79   & 8.62\%  & 9.02\%  & 8.66\%  & 8.75\%  & 8.66\%  \\
    \bottomrule
\end{tabular}
\end{table}

To further assess the inferential validity of the proposed estimator, Table~\ref{tab: coverage_length} reports empirical coverage rates and average lengths of $95\%$ confidence intervals, evaluated 
over $M = 100$ simulation replicates for $N = 100{,}000$. Coverage rates range from $0.93$ to $0.97$ across all eight education categories, 
at or above the nominal $0.95$ level for seven out of eight 
categories, with the slight undercoverage for 
\texttt{2 Literate but no education} ($0.93$) remaining close to nominal, and consistent with the asymptotic nature of the first-order linearisation (as well as the limited register size of the considered scenario). Average interval lengths scale appropriately with category size and associated uncertainty. A further assessment of the variability of the linearised estimator in this scenario is provided in the form of boxplots in 
Figure~4 of the Supplementary material D. 
\begin{table}[h]
\centering
\caption{Empirical coverage rates and average lengths of $95\%$ confidence intervals for the $\widehat{\text{GMSE}}^{\text{Lin}}_k$ estimator 
for $N = 100{,}000$. Results are evaluated over $M = 100$ simulation replicates.}\label{tab: coverage_length}
\begin{tabular}{lrr}
    \toprule
    Category $k$ & 
    Coverage Rate & 
    Average Length \\
    \midrule
\texttt{1 Illiterate}                & 0.97 & 324.36 \\
\texttt{2 Literate but no education} & 0.93 & 683.40 \\
\texttt{3 Primary}                   & 0.96 & 1396.74 \\
\texttt{4 Lower secondary}           & 0.96 & 1828.28 \\
\texttt{5 Upper secondary}           & 0.96 & 1673.04 \\
\texttt{6 Bachelor degree}           & 0.95 & 632.82 \\
\texttt{7 Master degree}             & 0.97 & 872.55 \\
\texttt{8 PhD level}                 & 0.96 & 265.56 \\
    \bottomrule
\end{tabular}
\end{table}

In conclusion, the simulation results provide strong empirical support 
for the proposed linearisation approach, demonstrating its ability to accurately reproduce the 
$\GMSE\left(\hat{\theta}_k^{(d)}, \theta_k^{(d)}\right)$ target under joint model and sampling randomness, which represents the primary inferential target of the proposed framework rather than a comparison with alternative estimators of sampling/model variance alone.

\section{Discussion and Conclusion} \label{Sec: conc}

In this work, we have proposed a global measure of accuracy, the GMSE, which results in a reliable, interpretable, computationally feasible, and flexible approach. This framework is designed to be integrated into modern NSI production pipelines. Concretely, it supports a systematic internal quality assessment by providing ex-post measures of accuracy for register-based outputs, complements existing validation procedures, and reduces reliance on computationally intensive resampling methods. At the same time, it enables user-driven production, where users can generate domain-specific statistics (typically unplanned and unknown in advance) on demand, together with corresponding measures of uncertainty. Finally, the same quantities can be used for dissemination purposes, ensuring that published statistics are accompanied by transparent accuracy indicators. The methodology is therefore intended not only for methodological research, but as a practical and scalable component of operational register-based statistical systems.

The GMSE has notable advantages. The proposed measure results in a more comprehensive and highly transparent solution as it explicitly considers the uncertainty deriving from various factors that influence the inferential process used to construct the estimates from the register. 
In terms of its computational aspects, the GMSE is user-friendly and allows for on-the-fly release of statistical estimates along with the associated accuracy measurements. Once the estimates of the unknown model parameters are obtained, the computation of the GMSE is straightforward, easy to use and communicate to users. This makes it an ideal tool for safeguarding impartiality and equal access, while adhering to a fundamental principle of official statistics that asserts that these \emph{``have to be available to all users, i.e. they have to be public.''}~\citep{bruengger2008should}.

Several extensions can broaden the applicability of the framework. First, although this paper focuses on two central sources of uncertainties that characterise all modern integrated systems of data (sampling uncertainty and model uncertainty), the GMSE is conceived as a modular quantity that can incorporate further sources of uncertainty whenever suitable analytical representations are available. Examples include: (1) record linkage errors, where false matches and missed links introduce additional variability, which could be represented by a random link indicator analogously to $\lambda_i$; (2) coverage errors, which deviate from the ideal one-to-one register--population correspondence assumed in Section~2; or (3) measurement errors, that may inflate the Hessian-based term $\mathbf{A}_{\bm{\beta}}^{-1}$. Second, the GMSE computational strategy for multi-category outcomes can be extended beyond multinomial logistic models. As detailed in Section~3, the two-step linearisation in Section~3 is formulated under general estimating equations
in Eq.~\eqref{eq: gee}, and the multinomial logistic model enters only as a specific instantiation. Any model admitting an $M$-estimator with a well-defined, non-singular Hessian and analytic or numerical expressions for $\partial f_k / \partial\bm{\beta}$ can be embedded within the same framework. This includes ordinal models, as well as random-effects multinomial models, where an additional variance component arising from the random effects can be incorporated via a first-order approximation. 
Third, the analytical GMSE in Eq.~(\ref{eq: GMSE_final}) is expressed in terms of general inclusion probabilities $\pi_i$, making it directly applicable to a wide class of non-informative
designs, including simple random sampling, stratified sampling, and
probability-proportional-to-size schemes~\citep{tille2020sampling}. While the assumption of non-informative
sampling, whereby the sampling mechanism does not depend on the response variable, is standard in official statistics production, certain modern data collection strategies, such as snowball
sampling~\citep{alleva2025building}, adaptive sampling designs~\citep{thompson2006adaptive}, or web-scraped administrative sources, may induce
informative selection that introduces a dependence between $\bm{\lambda}$ and $\bm{Y}$ not
captured by the current GMSE framework. In such settings, the bound in Theorem~\ref{th: gmse_th1} would
need to be re-examined and adjustments to the estimating equations or weighting schemes
would be required. The extension of the GMSE to informative sampling designs is left for
future research.

All such extension possibilities highlight the considerable potential of the GMSE approach presented here. At the same time, they also point to what is likely the main barrier to its widespread adoption: the need for methodological work to derive suitable linearised expressions for the predictive model used in production. This is a well-known challenge of analytical, Taylor-based approaches to accuracy estimation. In contrast with resampling methods, which are often more automatic but computationally intensive, linearisation methods require an initial investment in statistical modelling and derivation. 
This issue is particularly relevant when predictions are produced without an explicitly stated statistical model. Although such situations may arise in practice, making the predictive mechanism explicit is itself beneficial, as it improves transparency, reproducibility, and accountability in the statistical processes used to construct modern registers. In this sense, the GMSE framework may also encourage clearer documentation of the modelling assumptions underlying register-based outputs. We also note that our proposed method can be extended to predictions generated by artificial intelligence (AI). For instance, Explainable AI methods can locally approximate complex prediction functions (or ``black-box'' models) through simpler and interpretable surrogate models, thereby providing a possible route to obtain derivative-based approximations compatible with the GMSE framework~\citep{confalonieri2021historical}. Nevertheless, this area still requires substantial theoretical development and empirical validation, particularly regarding the stability of local approximations and the interpretation of resulting uncertainty measures. In the case of model-free predictions, another promising avenue is conformal prediction, whose finite-sample validity under exchangeability offers an attractive basis for uncertainty quantification~\citep{vovk2005algorithmic}. Exploring the relationship between conformal uncertainty sets and generalised mean squared error measures is an interesting direction for future research.

The GMSE could represent a useful common measure of global accuracy across the main methodological traditions within NSIs, including those who base their inferences solely on statistical models and those who primarily rely on sampling design and use models as supplementary tools. Bayesian statisticians could also incorporate the same logic by integrating an expectation operator reflecting the \textit{a-priori} variability of model parameters.

In conclusion, this work aims to support the advancement of current practices in official statistics by facilitating a flexible yet rigorous use of register data, where the production of estimates is systematically complemented by coherent and informative measures of quality.

\bibliographystyle{abbrvnat}
\bibliography{main.bib}

\begin{thebibliography}{45}
\providecommand{\natexlab}[1]{#1}
\providecommand{\url}[1]{\texttt{#1}}
\expandafter\ifx\csname urlstyle\endcsname\relax
  \providecommand{\doi}[1]{doi: #1}\else
  \providecommand{\doi}{doi: \begingroup \urlstyle{rm}\Url}\fi

\bibitem[Alleva(2017{\natexlab{a}})]{alleva2017}
G.~Alleva.
\newblock {The new role of sample surveys in official statistics}.
\newblock Technical report, {ITACOSM 2017: The 5th Italian Conference on Survey Methodology}, 2017{\natexlab{a}}.
\newblock URL \url{https://www.istat.it/it/files//2015/10/Alleva_ITACOSM_14062017.pdf}.

\bibitem[Alleva(2017{\natexlab{b}})]{alleva2017b}
G.~Alleva.
\newblock {Emerging challenges in official statistics: new sources, methods and skills}.
\newblock Technical report, {SIS2017 Statistical Conference. Statistics and Data Science: new challenges, new generations}, 2017{\natexlab{b}}.
\newblock URL \url{https://www.istat.it/wp-content/uploads/2024/01/Alleva_SIS2017_28062017.pdf}.

\bibitem[Alleva et~al.(2021)Alleva, Falorsi, Petrarca, and Righi]{alleva2021measuring}
G.~Alleva, P.~D. Falorsi, F.~Petrarca, and P.~Righi.
\newblock Measuring the accuracy of aggregates computed from a statistical register.
\newblock \emph{Journal of Official Statistics}, 37\penalty0 (2):\penalty0 481--503, 2021.

\bibitem[Alleva et~al.(2025)Alleva, Falorsi, Falorsi, Fasulo, and Righi]{alleva2025building}
G.~Alleva, P.~D. Falorsi, S.~Falorsi, A.~Fasulo, and P.~Righi.
\newblock Building estimates for totals in respondent driven sampling.
\newblock \emph{Statistical Journal of the IAOS}, 41\penalty0 (1):\penalty0 50--61, 2025.

\bibitem[Biemer et~al.(2014)Biemer, Trewin, Bergdahl, and Japec]{biemer2014system}
P.~Biemer, D.~Trewin, H.~Bergdahl, and L.~Japec.
\newblock A system for managing the quality of official statistics.
\newblock \emph{Journal of Official Statistics}, 30\penalty0 (3):\penalty0 381--415, 2014.

\bibitem[Biemer(2010)]{biemer_total_2010}
P.~P. Biemer.
\newblock Total {Survey} {Error}: {Design}, {Implementation}, and {Evaluation}.
\newblock \emph{Public Opinion Quarterly}, 74\penalty0 (5):\penalty0 817--848, Jan. 2010.
\newblock ISSN 0033-362X, 1537-5331.
\newblock \doi{10.1093/poq/nfq058}.
\newblock URL \url{https://academic.oup.com/poq/article-lookup/doi/10.1093/poq/nfq058}.

\bibitem[Bruengger(2008)]{bruengger2008should}
H.~Bruengger.
\newblock {How Should a Modern National System of Official Statistics Look?}
\newblock \emph{UNECE, Statistical Division}, 2008.

\bibitem[Bycroft and Matheson-Dunning(2020)]{Bycroft2020Use}
C.~Bycroft and N.~Matheson-Dunning.
\newblock {Use of administrative records for non-response in the New Zealand 2018 Census}.
\newblock \emph{Statistical Journal of the IAOS}, 36\penalty0 (1):\penalty0 107--115, 2020.

\bibitem[Chambers and Clark(2012)]{chambers2012introduction}
R.~L. Chambers and R.~Clark.
\newblock \emph{An introduction to model-based survey sampling with applications}.
\newblock Oxford University Press, 2012.

\bibitem[Citro(2014)]{citro2014multiple}
C.~F. Citro.
\newblock From multiple modes for surveys to multiple data sources for estimates.
\newblock \emph{Survey Methodology}, 40\penalty0 (2):\penalty0 137--162, 2014.

\bibitem[Confalonieri et~al.(2021)Confalonieri, Coba, Wagner, and Besold]{confalonieri2021historical}
R.~Confalonieri, L.~Coba, B.~Wagner, and T.~R. Besold.
\newblock A historical perspective of explainable artificial intelligence.
\newblock \emph{Wiley Interdisciplinary Reviews: Data Mining and Knowledge Discovery}, 11\penalty0 (1):\penalty0 e1391, 2021.

\bibitem[Daalmans(2017)]{daalmans2017mass}
J.~Daalmans.
\newblock \emph{Mass imputation for census estimation}.
\newblock Statistics Netherlands, 2017.

\bibitem[Daas et~al.(2011)Daas, Ossen, Tennekes, Zhang, Hendriks, Foldal~Haugen, Cerroni, Di~Bella, Laitila, Wallgren, et~al.]{daas2011report}
P.~Daas, S.~Ossen, M.~Tennekes, L.-C. Zhang, C.~Hendriks, K.~Foldal~Haugen, F.~Cerroni, G.~Di~Bella, T.~Laitila, A.~Wallgren, et~al.
\newblock Report on methods preferred for the quality indicators of administrative data sources.
\newblock Technical report, {Blue - ETS Project, Deliverable 4.2.}, 2011.
\newblock URL \url{http://www.pietdaas.nl/beta/pubs/pubs/BLUE-ETS_WP4_Del2.pdf}.

\bibitem[Daas et~al.(2015)Daas, Puts, Buelens, and van~den Hurk]{daas2015big}
P.~J. Daas, M.~J. Puts, B.~Buelens, and P.~A. van~den Hurk.
\newblock Big data as a source for official statistics.
\newblock \emph{Journal of Official Statistics}, 31\penalty0 (2):\penalty0 249--262, 2015.

\bibitem[De~Waal et~al.(2020)De~Waal, van Delden, and Scholtus]{de2020multi}
T.~De~Waal, A.~van Delden, and S.~Scholtus.
\newblock Multi-source statistics: basic situations and methods.
\newblock \emph{International Statistical Review}, 88\penalty0 (1):\penalty0 203--228, 2020.

\bibitem[Di~Zio et~al.(2019{\natexlab{a}})Di~Zio, Filippini, and Rocchetti]{di2019imputation}
M.~Di~Zio, R.~Filippini, and G.~Rocchetti.
\newblock An imputation procedure for the italian attained level of education in the register of individuals based on administrative and survey data.
\newblock \emph{Measuring well-being at local level using remote sensing and official statistics data}, 2019{\natexlab{a}}.

\bibitem[Di~Zio et~al.(2019{\natexlab{b}})Di~Zio, Filippini, and Rocchetti]{dizio1}
M.~Di~Zio, R.~Filippini, and G.~Rocchetti.
\newblock An imputation procedure for the italian attained level of education in the register of individuals based on administrative and survey data.
\newblock \emph{Rivista di Statistica Ufficiale}, Issue 2-3\penalty0 (2019):\penalty0 143--174, 2019{\natexlab{b}}.

\bibitem[{European Statistical System Committee}(2018)]{european2018bucharest}
{European Statistical System Committee}.
\newblock Bucharest memorandum on official statistics in a datafied society (trusted smart statistics).
\newblock \emph{Luxembourg City (Luxembourg): Eurostat}, 2018.

\bibitem[Eurostat(2014)]{eurostat2014}
Eurostat.
\newblock {ESS handbook for quality reports}.
\newblock Technical report, {European Statistical System}, 2014.
\newblock URL \url{https://ec.europa.eu/eurostat/documents/3859598/6651706/KS-GQ-15-003-EN-N.pdf}.

\bibitem[Eurostat(2020)]{eurostat2020}
Eurostat.
\newblock {European Statistical System handbook for quality and metadata reports}.
\newblock Technical report, {Luxembourg: Publications Office of the European Union}, 2020.
\newblock URL \url{https://ec.europa.eu/eurostat/documents/3859598/10501168/KS-GQ-19-006-EN-N.pdf}.

\bibitem[Gjaltema(2022)]{gjaltema2022high}
T.~Gjaltema.
\newblock High-level group for the modernisation of official statistics (hlg-mos) of the united nations economic commission for europe.
\newblock \emph{Statistical Journal of the IAOS}, 38\penalty0 (3):\penalty0 917--922, 2022.

\bibitem[Graf and Till{\'e}(2014)]{graf2014variance}
E.~Graf and Y.~Till{\'e}.
\newblock Variance estimation using linearization for poverty and social exclusion indicators.
\newblock \emph{Survey Methodology}, 40\penalty0 (1):\penalty0 61--80, 2014.

\bibitem[Henderson et~al.(1983)Henderson, Pukelsheim, and Searle]{henderson1983history}
H.~V. Henderson, F.~Pukelsheim, and S.~R. Searle.
\newblock {On the history of the Kronecker product}.
\newblock \emph{Linear and Multilinear Algebra}, 14\penalty0 (2):\penalty0 113--120, 1983.

\bibitem[Isaki and Fuller(1982)]{isaki_survey_1982}
C.~T. Isaki and W.~A. Fuller.
\newblock Survey {Design} under the {Regression} {Superpopulation} {Model}.
\newblock \emph{Journal of the American Statistical Association}, 77\penalty0 (377):\penalty0 89--96, Mar. 1982.
\newblock ISSN 0162-1459, 1537-274X.
\newblock \doi{10.1080/01621459.1982.10477770}.
\newblock URL \url{http://www.tandfonline.com/doi/abs/10.1080/01621459.1982.10477770}.

\bibitem[{Istat}(2016)]{istat2016_modern}
{Istat}.
\newblock {ISTAT's Modernisation Programme}.
\newblock Technical report, {ISTAT: Istituto Nazionale di Statistica}, 2016.
\newblock URL \url{https://www.istat.it/wp-content/uploads/2011/04/IstatsModernistionProgramme_EN.pdf}.

\bibitem[Kim and Rao(2009)]{kim2009unified}
J.~K. Kim and J.~Rao.
\newblock A unified approach to linearization variance estimation from survey data after imputation for item nonresponse.
\newblock \emph{Biometrika}, 96\penalty0 (4):\penalty0 917--932, 2009.

\bibitem[Lohr(2021)]{lohr2021sampling}
S.~L. Lohr.
\newblock \emph{Sampling: design and analysis}.
\newblock Chapman and Hall/CRC, 2021.

\bibitem[Lohr and Raghunathan(2017)]{lohr_combining_2017}
S.~L. Lohr and T.~E. Raghunathan.
\newblock Combining {Survey} {Data} with {Other} {Data} {Sources}.
\newblock \emph{Statistical Science}, 32\penalty0 (2):\penalty0 293--312, 2017.
\newblock ISSN 0883-4237.

\bibitem[Lundy(2022)]{Lundy2022predicting}
E.~R. Lundy.
\newblock {Predicting the quality and evaluating the use of administrative data for the 2021 Canadian Census of Population}.
\newblock \emph{Statistical Journal of the IAOS}, 38\penalty0 (4):\penalty0 1177--1183, 2022.

\bibitem[Mashreghi et~al.(2016)Mashreghi, Haziza, and L{\'e}ger]{mashreghi2016survey}
Z.~Mashreghi, D.~Haziza, and C.~L{\'e}ger.
\newblock A survey of bootstrap methods in finite population sampling.
\newblock \emph{Statistics Surveys}, 10:\penalty0 1--52, 2016.

\bibitem[Nedyalkova and Till{\'e}(2008)]{nedyalkova_optimal_2008}
D.~Nedyalkova and Y.~Till{\'e}.
\newblock Optimal sampling and estimation strategies under the linear model.
\newblock \emph{Biometrika}, 95\penalty0 (3):\penalty0 521--537, Sept. 2008.
\newblock ISSN 0006-3444, 1464-3510.
\newblock \doi{10.1093/biomet/asn027}.
\newblock URL \url{https://academic.oup.com/biomet/article-lookup/doi/10.1093/biomet/asn027}.

\bibitem[Radermacher(2018)]{radermacher2018official}
W.~J. Radermacher.
\newblock Official statistics in the era of big data opportunities and threats.
\newblock \emph{International Journal of Data Science and Analytics}, 6\penalty0 (3):\penalty0 225--231, 2018.

\bibitem[S{\"a}rndal et~al.(2003)S{\"a}rndal, Swensson, and Wretman]{sarndal2003model}
C.-E. S{\"a}rndal, B.~Swensson, and J.~Wretman.
\newblock \emph{Model assisted survey sampling}.
\newblock Springer Science \& Business Media, 2003.

\bibitem[Scholtus and Daalmans(2021)]{Scholtusetal}
S.~Scholtus and J.~Daalmans.
\newblock Variance estimation after mass imputation based on combined administrative and survey data.
\newblock \emph{Journal of Official Statistics}, 37\penalty0 (2):\penalty0 433--459, 2021.

\bibitem[Shao(2003)]{shao2003impact}
J.~Shao.
\newblock Impact of the bootstrap on sample surveys.
\newblock \emph{Statistical Science}, 18\penalty0 (2):\penalty0 191--198, 2003.

\bibitem[Thompson(2006)]{thompson2006adaptive}
S.~K. Thompson.
\newblock Adaptive web sampling.
\newblock \emph{Biometrics}, 62\penalty0 (4):\penalty0 1224--1234, 2006.

\bibitem[Till{\'e}(2020)]{tille2020sampling}
Y.~Till{\'e}.
\newblock \emph{Sampling and estimation from finite populations}.
\newblock John Wiley \& Sons, 2020.

\bibitem[Vall{\'e}e and Till{\'e}(2019)]{vallee2019linearisation}
A.-A. Vall{\'e}e and Y.~Till{\'e}.
\newblock Linearisation for variance estimation by means of sampling indicators: Application to non-response.
\newblock \emph{International Statistical Review}, 87\penalty0 (2):\penalty0 347--367, 2019.

\bibitem[Valliant(2009)]{valliant2009model}
R.~Valliant.
\newblock Model-based prediction of finite population totals.
\newblock In \emph{Handbook of Statistics}, volume~29, pages 11--31. Elsevier, 2009.

\bibitem[Venables and Ripley(2013)]{venables2013modern}
W.~N. Venables and B.~D. Ripley.
\newblock \emph{Modern applied statistics with S}.
\newblock Springer Science \& Business Media, 2013.

\bibitem[Vovk et~al.(2005)Vovk, Gammerman, and Shafer]{vovk2005algorithmic}
V.~Vovk, A.~Gammerman, and G.~Shafer.
\newblock \emph{Algorithmic learning in a random world}, volume~29.
\newblock Springer, 2005.

\bibitem[Wallgren and Wallgren(2014)]{wallgren2014register}
A.~Wallgren and B.~Wallgren.
\newblock \emph{Register-based Statistics: statistical methods for administrative data}.
\newblock John Wiley \& Sons, 2014.

\bibitem[Wolter(1986)]{wolter_coverage_1986}
K.~M. Wolter.
\newblock Some {Coverage} {Error} {Models} for {Census} {Data}.
\newblock \emph{Journal of the American Statistical Association}, 81\penalty0 (394):\penalty0 338, June 1986.
\newblock ISSN 01621459.
\newblock \doi{10.2307/2289222}.
\newblock URL \url{https://www.jstor.org/stable/2289222?origin=crossref}.

\bibitem[Zhang(2012)]{zhang2012topics}
L.-C. Zhang.
\newblock Topics of statistical theory for register-based statistics and data integration.
\newblock \emph{Statistica Neerlandica}, 66\penalty0 (1):\penalty0 41--63, 2012.

\bibitem[Ziegler(2011)]{ziegler2011generalized}
A.~Ziegler.
\newblock \emph{Generalized estimating equations}, volume 204.
\newblock Springer Science \& Business Media, 2011.

\end{thebibliography}

\section*{Acknowledgments}
The authors express their sincere gratitude to ISTAT for their continuous support, with special thanks to Romina Filippini and Simona Toti for their assistance with the data. Appreciation is also extended to Brunero Liseo and Marco Di Zio for their valuable discussions during the development of this work and for their insightful feedback. In accordance with COPE guidelines and the TITAN Guideline Checklist 2025, we acknowledge the use of Claude for MAC v1.9659.2 for linguistic editing. No content generation or interpretation was delegated to the tool. The authors remain fully responsible for the manuscript. No artificial intelligence tools have been used for image generation, data collection, coding, or data analysis.

\newpage
\section*{\centering Supplementary Material}
\appendix

\section{Innovation in the current official-statistics framework} \label{app: intro}

It is important to underline that the community of national and international agencies producing official statistics are well aware of the challenge they face in the integration of new data sources in the different phases of production. This is demonstrated by the Experimental Statistics~\footnote{\url{https://www.istat.it/en/announcement-and-analisys/experimental-statistics/}} which since some years represent a production line of many national statistical institutions (NSIs) -- also allowing the collection of user feedback -- and by international projects to experiment methodologies and computational solutions for the production of statistics based on the integration of multiple sources, including Big Data~\citep{daas2015big}. Like administrative data, Big Data are generated outside the control of NSIs, and in analogy with what has been developed and now consolidated on administrative data, various international projects have focused on the definition of quality dimensions to be considered for Big Data as input to a statistical process. 
The awareness of the NSIs of having to move in a new data ecosystem and the need for commitment to a deep innovation of production processes guided the revision of the European Statistics Code of Practice in 2017~\citep{eurostat2020}.

First, the Principle 12 on \textit{Data Accuracy and Reliability}, consistently with the processes of integration of different sources that today characterize the production of official statistics, states that Source data, integrated data, intermediate results and statistical outputs are regularly assessed and validated and that Systems for assessing and validating source data, integrated data, intermediate results and statistical outputs are in place. 

Second, in the Principle 7 on \textit{Sound Methodology} of the data production processes, the Code states that Statistical authorities actively encourage the exploration of new and innovative methods for statistics. They develop methodological work and supporting IT solutions to ensure the quality of statistics, especially when new and alternative data collection modes and sources are used as input and statistical authorities take initiatives and participate in the development of innovative methods for collecting and processing data including the integration of new and/or alternative data sources and geospatial data.

Along this direction, the Bucharest Memorandum on \textit{Official Statistics in a datafied society -- Trusted Smart Statistics}~\citep{european2018bucharest} welcomes and embraces the opportunity for official statistics in a datafied society and economy, and encourages the European Statistical System (ESS) to implement practical and mature cases of using 'big data-enhanced'~\footnote{The Istat modernisation project is consistent with the ESS Vision 2020, the Transformative Agenda for Official Statistics~\citep{eurostat2020} and the guidelines of the High Level Group on Modernisation of the production and dissemination of Official Statistics~\citep{gjaltema2022high}} statistical products and develop experimental statistics on new phenomena. In particular, agree that the variety of new data sources, computational paradigms and tools will require amendments to the statistical business architecture, processes, production models, IT infrastructures, methodological and quality frameworks, and the corresponding governance structures, and therefore invite the ESS to formally outline and assess such amendments. Therefore, the introduction of Trusted Smart Statistics is changing the approach to the statistical use of Big Data, combining traditional methods of official statistical production with a different paradigm.

\section{Proofs of the technical results} \label{app: proofs}

\subsection{Proof of Theorem 1} \label{app: th1}

\begin{customthm}{1} \label{app_th: th1}
Assume that: i) the estimator $\hat{\bm{\beta}}$ 
is model-design unbiased, that is, $\Exp_{(\D, \M)}(\hat{\bm{\beta}}) = \Exp_{\D}\Exp_{\M}(\hat{\bm{\beta}}\mid \bm{\lambda}) = \bm{\beta}$, (ii) the chance of drawing a sample for which $\Exp_{\M}(\hat{\bm{\beta}} | \bm{\lambda}) \neq \bm{\beta}$ is negligibly small with $\sup_{\bm{\lambda}}\norm{\Exp_{\M}(\hat{\bm{\beta}} | \bm{\lambda}) - \bm \beta} < \infty$, and iii) the sampling design is non-informative \citep[see e.g., Section 1.4 of][for more details on non-informative sampling]{chambers2012introduction}. Then, the following upper bounds hold for the individual-category and the cumulated GMSE:
\begin{align} \label{eq: gmse_th1}
    \GMSE\left(\hat{\theta}_k^{(d)}, \theta_k^{(d)}\right) &\lessapprox \Exp_{\D}\Var_{\M}(\hat{\theta}_k^{(d)}| \bm{\lambda})\\
    \GMSE\left(\bm{\hat{\theta}}^{(d)}, \bm{\theta}^{(d)}\right) &\lessapprox \Exp_{\D}\left[\mathrm{tr}\left(\Var_{\M}(\bm{\hat{\theta}}^{(d)}| \bm{\lambda})\right)\right], \nonumber
\end{align}
where $\mathrm{tr}(\mathrm{x})$ denotes the trace of a square matrix $\mathrm{x}$ and $\lessapprox$ stands for ``less than or approximately equal to''.
\end{customthm}
Before proving Theorem~\ref{app_th: th1}, we note that condition iii) holds for most of the surveys conducted by national statistical institutes and it postulates that the sample design is such that the sample and population models coincide. Operationally, it allows for the interchange of the order of expectations $\Exp_{\D}$ and $\Exp_{\M}$. 

To derive the approximated expression of the $\GMSE\left(\hat{\theta}_k^{(d)}, \theta_k^{(d)}\right) = \Exp_{(\D, \M)}\left(\hat{\theta}_k^{(d)} -\theta_k^{(d)}\right)^2$, we add and subtract the fixed quantity $\tilde{\theta}_k^{(d)} = \Exp_{(\D, \M)}\left(\hat{\theta}_k^{(d)}\right)$ inside the argument of the expectation, obtaining
\begin{align*}
    \GMSE\left(\hat{\theta}_k^{(d)}, \theta_k^{(d)}\right) &
    = \Exp_{(\D, \M)}\left(\hat{\theta}_k^{(d)} \pm \tilde{\theta}_k^{(d)} -\theta_k^{(d)}\right)^2\\
    & = \underbrace{\Exp_{(\D, \M)}\left(\hat{\theta}_k^{(d)} - \tilde{\theta}_k^{(d)}\right)^2}_{\text{Part\ I}} + \underbrace{\Exp_{(\D, \M)}\left(\tilde{\theta}_k^{(d)} -\theta_k^{(d)}\right)^2}_{\text{Part\ II}} + 2\underbrace{\Exp_{(\D, \M)}\left((\hat{\theta}_k^{(d)} - \tilde{\theta}_k^{(d)})(\tilde{\theta}_k^{(d)} -\theta_k^{(d)})\right)}_{\text{Part\ III}}.
\end{align*}


\paragraph{Part I.} By 
applying the law of total variance we get
\begin{align*}
    \Exp_{(\D, \M)}\left(\hat{\theta}_k^{(d)} - \tilde{\theta}_k^{(d)}\right)^2 &= \Exp_{(\D, \M)}\left(\hat{\theta}_k^{(d)} - \Exp_{(\D,\M)}\left(\hat{\theta}_k^{(d)}\right)\right)^2 = \Var_{(\D, \M)}\left(\hat{\theta}_k^{(d)}\right)\\
    & = \Exp_{\D} \Var_{\M}\left(\hat{\theta}_k^{(d)}|\bm{\lambda}\right) + \Var_{\D} \Exp_{\M}\left(\hat{\theta}_k^{(d)}|\bm{\lambda}\right)\\
    & \approx \Exp_{\D} \Var_{\M}\left(\hat{\theta}_k^{(d)}|\bm{\lambda}\right),
\end{align*}
given that, from condition ii), $\Var_{\D} \Exp_{\M}\left(\hat{\theta}_k^{(d)}|\bm{\lambda}\right) \approx 0$ for negligible values of $\varepsilon$.

To see it, recall that, using the Taylor approximation, we have that,
\begin{align*} 
    \hat{\theta}_k^{(d)} = \sum_{i=1}^N \gamma_{i}^{(d)}f_k(\bm{x}_i;\bm{\hat{\beta}}) &= \sum_{i=1}^N \gamma_{i}^{(d)}\left( f_k(\bm{x}_i;\bm{\beta}) + 
   \frac{\partial f_k(\bm{x}_i;\bm{\hat{\beta}})}{\partial \bm{\hat{\beta}}} \Big\lvert_{\bm{\hat{\beta}} = \bm{\beta}} \left(\bm{\hat{\beta}} - \bm{\beta} \right) + r_{1k} \right)\\
   & \approx \sum_{i=1}^N \gamma_{i}^{(d)}\left( f_k(\bm{x}_i;\bm{\beta}) + 
   F_{ki} \left(\bm{\hat{\beta}} - \bm{\beta} \right) \right),\quad k=1,\dots,K,
\end{align*}
where $r_{1k}$ is the residual of this Taylor approximation and is of order $o_p(1/\sqrt{n})$, while $F_{ki}$ is a fixed term.

Therefore,
\begin{align*}
\Var_{\D} \Exp_{\M}\left(\hat{\theta}_k^{(d)}|\bm{\lambda}\right) & \approx \Var_{\D} \Exp_{\M}\left(\sum_{i=1}^N \gamma_{i}^{(d)}\left( f_k(\bm{x}_i;\bm{\beta}) + 
   F_{ki} (\bm{\hat{\beta}} - \bm{\beta} \right)\right)|\bm{\lambda})\\
   & = \Var_{\D} \Exp_{\M}\left(\sum_{i=1}^N \gamma_{i}^{(d)} 
   F_{ki} (\bm{\hat{\beta}} - \bm{\beta})|\bm{\lambda}\right)\\
   &\approx \bm{\gamma}^{(d)T} \mathbf{F}_k
   \Var_{\D} \Exp_{\M}\left(\bm{\hat{\beta}} - \bm{\beta} \big \lvert \bm{\lambda}\right) \mathbf{F}^T_k \bm{\gamma}^{(d)},\quad \quad k=1,\dots,K, 
\end{align*}

Now, by assumption ii), we are saying that there exists a measurable set $A$ with $\mathbb{P}(\bm \lambda \in A)=1-\varepsilon$, such that $\Exp_{\M}(\hat{\bm{\beta}} | \bm{\lambda})  =\bm\beta, \text{for all }\lambda\in A$ (i.e., model–unbiased for all $\lambda\in A$). Denoted with $A^c$ its complement, we have that, and using the argument $M \doteq \sup_{\bm{\lambda}}\norm{\Exp_{\M}(\hat{\bm{\beta}} | \bm{\lambda}) - \bm \beta} < \infty$ of the same condition ii), we have that:
\begin{align*}
\Var_{\D} \Exp_{\M}(\hat{\bm{\beta}} -\bm \beta | \bm{\lambda}) & \leq \mathbb{E}\big[\|\Exp_{\M}(\hat{\bm{\beta}} | \bm{\lambda})-\boldsymbol\beta\|^2\big]\\
& = \mathbb{E}\big[\|\Exp_{\M}(\hat{\bm{\beta}} | \bm{\lambda})-\boldsymbol\beta\|^2\mathbf{1}_A\big]
+\mathbb{E}\big[\|\Exp_{\M}(\hat{\bm{\beta}} | \bm{\lambda})-\boldsymbol\beta\|^2\mathbf{1}_{A^c}\big]\\
& = \mathbb{E}\big[\|\Exp_{\M}(\hat{\bm{\beta}} | \bm{\lambda})-\boldsymbol\beta\|^2\mathbf{1}_{A^c}\big]\\
& \le \mathbb{P}(A^c)\,\sup_{\bm{\lambda}}\norm{\Exp_{\M}(\hat{\bm{\beta}} | \bm{\lambda}) - \bm \beta}^2
= \varepsilon M^2,
\end{align*}
given that $\Exp_{\M}(\hat{\bm{\beta}} | \bm{\lambda})-\boldsymbol\beta = 0$ on $A$. 

In conclusion, we have that
\begin{align*}
\Var_{\D} \Exp_{\M}\left(\hat{\theta}_k^{(d)}|\bm{\lambda}\right) \lessapprox  \bm{\gamma}^{(d)T} \mathbf{F}_k
   \varepsilon M^2\mathbf{F}^T_k \bm{\gamma}^{(d)},\quad \quad k=1,\dots,K, 
\end{align*}
which is negligible when $\epsilon \approx 0$.

\paragraph{Part II.} Noticing that the only term affected by randomness is $\theta_k^{(d)} = \sum_{i=1}^N\gamma_i^{(d)}Y_{ik}$, with randomness uniquely coming from the model and such that $\mathbb{E}(Y_{ik}|\bm{x}_i) = f_k(\bm{x}_i;\bm{\beta}), i=1,\dots,N$, we have
\begin{align*}
    \Exp_{(\D, \M)}\left(\tilde{\theta}_k^{(d)} -\theta_k^{(d)}\right)^2 &= \Exp_{\M}\left(\tilde{\theta}_k^{(d)} -\theta_k^{(d)}\right)^2 = \left(\tilde{\theta}_k^{(d)}\right)^2 + \Exp_{\M}\left(\theta_k^{(d)}\right)^2 - 2 \tilde{\theta}_k^{(d)}\Exp_{\M}\left(\theta_k^{(d)}\right)\\
    & = \Exp_{\M}\left(\theta_k^{(d)}\right)^2 - \left(\tilde{\theta}_k^{(d)}\right)^2 = \Exp_{\M}\left(\theta_k^{(d)}\right)^2 - \left(\Exp_{\M}\left(\theta_k^{(d)}\right)\right)^2\\
    & = \Var_{\M}\left(\theta_k^{(d)}\right).
\end{align*}

\paragraph{Part III.} For the last component, using condition i) and similar considerations as in Part I and Part II, and distinguishing between two cases depending on whether $\Exp_{\D}\left(\hat{\theta}_k^{(d)}|\bm{Y}\right) = \theta_k^{(d)}$ $(A)$ or not $(B)$, we have that 
\begin{align*}
    \Exp_{(\D, \M)}\left((\hat{\theta}_k^{(d)} - \tilde{\theta}_k^{(d)})(\tilde{\theta}_k^{(d)} -\theta_k^{(d)})\right) &= \tilde{\theta}_k^{(d)}\Exp_{(\D, \M)}\left(\hat{\theta}_k^{(d)}\right) - \Exp_{(\D, \M)}\left(\hat{\theta}_k^{(d)}\theta_k^{(d)}\right) - \left(\tilde{\theta}_k^{(d)}\right)^2 +  \tilde{\theta}_k^{(d)}\Exp_{\M}\left(\theta_k^{(d)}\right)\\
    & = \left(\tilde{\theta}_k^{(d)}\right)^2 - \Exp_{\M}\left(\theta_k^{(d)}\Exp_{\D}\left(\hat{\theta}_k^{(d)}|\bm{Y}\right)\right) - \cancel{\left(\tilde{\theta}_k^{(d)}\right)^2} +  \cancel{\left(\tilde{\theta}_k^{(d)}\right)^2}\\
    & = 
    \begin{cases}\left(\Exp_{\M}\left(\theta_k^{(d)}\right)\right)^2 - \Exp_{\M}\left(\theta_k^{(d)}\right)^2 = -\Var_{\M}\left(\theta_k^{(d)}\right)\quad &(A)\\ \tilde{\theta}_k^{(d)}\Exp_{\M}\left(\theta_k^{(d)}\right) - \Exp_{\M}\left(\theta_k^{(d)}\Exp_{\D}\left(\hat{\theta}_k^{(d)}|\bm{Y}\right)\right) =-\Cov_{\M}\left(\Exp_{\D}\left(\hat{\theta}_k^{(d)}|\bm{Y}\right), \theta_k^{(d)}\right),\quad &(B)
    \end{cases}
\end{align*}
where the last equality follows from $\tilde{\theta}_k^{(d)} = \Exp_{(\D,\M)}(\hat{\theta}_k^{(d)}) = \Exp_{\M}\left(\Exp_{\D}\left(\hat{\theta}_k^{(d)}|\bm{Y}\right)\right)$.
Notice that the case $(A)$ is associated with the setting in which $\gamma_i^{(d)} \in \bm{x}_i, i=1,\dots,N$, that is, when the model $f_k(\bm{x}_i;\bm{\beta})$ has a domain intercept. This is likely to happen for large domains, indicating that there are sample observations at the domain level which are used for estimating the model.

Jointly considering the three parts and the two cases $(A)$ and $(B)$, we have that
\begin{align*}
    \GMSE\left(\hat{\theta}_k^{(d)}, \theta_k^{(d)}\right) &\approx \begin{cases}\Exp_{\D} \Var_{\M}\left(\hat{\theta}_k^{(d)}|\bm{\lambda}\right) -\Var_{\M}\left(\theta_k^{(d)}\right)\quad &(A): \gamma_i^{(d)} \in \bm{x}_i \\
    \Exp_{\D} \Var_{\M}\left(\hat{\theta}_k^{(d)}|\bm{\lambda}\right) +\Var_{\M}\left(\theta_k^{(d)}\right) -2\Cov_{\M}\left(\Exp_{\D}\left(\hat{\theta}_k^{(d)}|\bm{Y}\right), \theta_k^{(d)}\right)\quad &(B): \gamma_i^{(d)} \notin \bm{x}_i
    \end{cases}\\
    & \lessapprox \Exp_{\D} \Var_{\M}\left(\hat{\theta}_k^{(d)}|\bm{\lambda}\right).
\end{align*}
The final inequality arises from observing that in $(A)$ the term $\Var_{\M}\left(\theta_k^{(d)}\right) > 0$ enters with a negative sign and that in $(B)$ it is of the same order of magnitude of $\Cov_{\M}\left(\Exp_{\D}\left(\hat{\theta}_k^{(d)}|\bm{Y}\right), \theta_k^{(d)}\right)$ as it represents the model covariance between the two population totals for the same category $k$, $\theta_k^{(d)}$ and $\Exp_{\D}\left(\hat{\theta}_k^{(d)}|\bm{Y}\right)$, having thus a positive sign. Furthermore, it is also important to note that the first component $\Exp_{\D}\Var_{\M}\left(\hat{\theta}_k^{(d)}|\bm{\lambda}\right)$ is the dominant component, of primary order compared to the remaining terms. The following Table~\ref{tab: th1_decomp}, reflecting the simulation setup of Section 5, supports the statement.

\begin{table}[h!]
\centering
\caption{Empirical quantification of the GMSE decomposition components 
from Theorem~1 and the tightness of the upper bound, with reference to 
the simulation setup of Section 5 for $N = 500{,}000$. 
$\widehat{\text{GMSE}}^{\text{Lin}}$ approximates the upper bound 
$\Exp_{\D}\Var_{\M}\left(\hat{\theta}_k^{(d)}|\bm{\lambda}\right)$; 
$\Var_{\M}\left(\theta_k^{(d)}\right)$ is the minor-order term entering 
with a negative sign in case $(A)$; 
$\Var_{\D}\Exp_{\M}\left(\hat{\theta}_k^{(d)}|\bm{\lambda}\right)$ is 
the term vanishing under condition ii) of Theorem~1.
} \label{tab: th1_decomp}
\resizebox{\textwidth}{!}{%
\begin{tabular}{lrrrrr}
    \toprule
    Category $k$ & 
    $\widehat{\text{GMSE}}^{\text{Lin}}_k$ & 
    $\Exp_{\D}\Var_{\M}\left(\hat{\theta}_k^{(d)}|\bm{\lambda}\right)$ & 
    $\widehat{\text{GMSE}}^{\text{MC}}_k$ &
    $\Var_{\M}\left(\theta_k^{(d)}\right)$ & 
    $\Var_{\D}\Exp_{\M}\left(\hat{\theta}_k^{(d)}|\bm{\lambda}\right)$ \\
    \midrule
\texttt{1 Illiterate}                & 28174   & 27558   & 26484   & 1394   & 320   \\
\texttt{2 Literate but no education} & 163212  & 142668  & 135432  & 7858   & 623   \\
\texttt{3 Primary}                   & 639638  & 600955  & 574524  & 31905  & 5474  \\
\texttt{4 Lower secondary}           & 1079819 & 1051547 & 1006703 & 54404  & 9560  \\
\texttt{5 Upper secondary}           & 889771  & 939370  & 902905  & 45552  & 9087  \\
\texttt{6 Bachelor degree}           & 129845  & 125931  & 120250  & 6808   & 1128  \\
\texttt{7 Master degree}             & 248783  & 247501  & 236347  & 12434  & 1280  \\
\texttt{8 PhD level}                 & 16690   & 19153   & 18534   & 854    & 235   \\
    \bottomrule
\end{tabular}}
\end{table}

\subsection{Technical note on the validity of the two-step linearisation} \label{app: Taylor_conditions_remainders}

This section provides sufficient conditions for the validity of the two Taylor expansions used in Section~3 and establishes the stochastic order of the remainder terms.


Let $(U_N)_{N\geq1}$ be a sequence of finite populations with sample size $n=n_N$ such that $N\to\infty$ and $n\to\infty$. 

\begin{enumerate}[label=(A\arabic*),leftmargin=2em]
\item \textbf{Consistency of the estimator.} The solution $\hat{\bm{\beta}}$ of Eq. (9) is unique with probability tending to one and
\[
\hat{\bm{\beta}} \xrightarrow{p} \bm{\beta},
\qquad
\sqrt n(\hat{\bm{\beta}}-\bm{\beta})=O_p(1).
\]

\item \textbf{Smooth prediction functions.} For each $k=1,\dots,K$ and $i=1,\dots,N$, the map
$\bm{\beta}\mapsto f_k(\bm{x}_i;\bm{\beta})$
is twice continuously differentiable in an open neighbourhood of $\bm{\beta}$.

\item \textbf{Bounded derivatives.} There exists $C<\infty$ such that, uniformly in $i$,
\[
\left\|
\frac{\partial f_k(\bm{x}_i;\bm b)}{\partial \bm b}
\right\|\le C,
\qquad
\left\|
\frac{\partial^2 f_k(\bm{x}_i;\bm b)}{\partial \bm b\,\partial \bm b^T}
\right\|\le C
\]
for all $\bm b$ in a neighbourhood of $\bm{\beta}$.

\item \textbf{Smooth estimating equations.} Each $\mathbf g_i(\bm{\beta};\bm y,\bm x)$ is continuously differentiable in $\bm{\beta}$, and its derivative is uniformly bounded in probability.

\item \textbf{Non-singular Jacobian.} The matrix
\[
\mathbf A_{\bm{\beta}}
=
\sum_{i=1}^N
\lambda_i
\frac{\partial \mathbf g_i(\bm{\beta};\bm y,\bm x)}{\partial \bm{\beta}^T}
\]
is nonsingular, and $\|\mathbf A_{\bm{\beta}}^{-1}\|=O(1)$.
\end{enumerate}

These assumptions are standard and are satisfied by the multinomial logistic model under bounded covariates and probabilities bounded away from $0$ and $1$.

\paragraph{Order of the remainders}

Recalling that $\hat{\theta}_k^{(d)}
= \sum_{i=1}^N
\gamma_i^{(d)}f_k(\bm x_i;\hat{\bm{\beta}})$, by the multivariate Taylor theorem, 
\[
f_k(\bm x_i;\hat{\bm{\beta}})
=
f_k(\bm x_i;\bm{\beta})
+
\frac{\partial f_k(\bm x_i;\bm{\beta})}{\partial\bm{\beta}^T}
(\hat{\bm{\beta}}-\bm{\beta})
+
\rho_{ik},
\]
where
\[
\rho_{ik}
=
\frac12
(\hat{\bm{\beta}}-\bm{\beta})^T
\mathbf H_{ik}(\tilde{\bm{\beta}}_i)
(\hat{\bm{\beta}}-\bm{\beta}),
\]
with $\mathbf H_{ik}$ being the Hessian matrix of $f_k$, and $\tilde{\bm{\beta}}_i$ lying between $\hat{\bm{\beta}}$ and $\bm{\beta}$. Hence,
\[
\hat{\theta}_k^{(d)}
=
\sum_{i=1}^N \gamma_i^{(d)} f_k(\bm x_i;\bm{\beta})
+
\bm{\gamma}^{(d)T}\mathbf F_k(\hat{\bm{\beta}}-\bm{\beta})
+
r_{1k},
\]
with
\[
r_{1k}
=
\sum_{i=1}^N \gamma_i^{(d)}\rho_{ik}.
\]
Under condition (A3), we have that $|\rho_{ik}|
\le C\|\hat{\bm{\beta}}-\bm{\beta}\|^2$. Therefore,
\[
|r_{1k}|
\le
C\Big(\sum_{i=1}^N |\gamma_i^{(d)}|\Big)
\|\hat{\bm{\beta}}-\bm{\beta}\|^2.
\]
Since by (A1) $\sqrt n(\hat{\bm{\beta}}-\bm{\beta})=O_p(1)$, it is verified that $\|\hat{\bm{\beta}}-\bm{\beta}\|^2 = O_p(n^{-1})$; hence
$r_{1k}=O_p(n^{-1})=o_p(n^{-1/2})$.

Regarding the second linearisation step, recalling that $\hat{\bm{\beta}}$ is obtained by solving:
\begin{align*}
    \sum_{i=1}^{N} \lambda_i \mathbf{g}_{i}(\bm{\beta}; \bm{y}, \bm{x}) = \frac{\partial \ell(\bm{\beta}; \bm{y}, \bm{x}, \bm{\lambda})}{\partial \bm{\beta}} = \mathbf{0}_{H},
\end{align*}
where $\mathbf{0}$ is a vector of $H$ zeroes, a first-order Taylor expansion around $\bm{\beta}$ gives
\[
\mathbf 0
=
\mathbf \sum_{i=1}^N \lambda_i \mathbf g_i(\bm{\beta};\bm y,\bm x)
+
\mathbf A_{\bm{\beta}}(\hat{\bm{\beta}}-\bm{\beta})
+
\mathbf q_n,
\]
where the remainder $\mathbf q_n$ satisfies $\|\mathbf q_n\| = O_p\!\left(\|\hat{\bm{\beta}}-\bm{\beta}\|^2\right)$.

Solving for $\hat{\bm{\beta}}-\bm{\beta}$, we get
\[
\hat{\bm{\beta}}-\bm{\beta}
=
-\mathbf A_{\bm{\beta}}^{-1} \sum_{i=1}^{N} \lambda_i \mathbf{g}_{i}(\bm{\beta}; \bm{y}, \bm{x})
-\mathbf A_{\bm{\beta}}^{-1}\mathbf q_n,
\]
where we denoted the remainder by $\bm r_{2}
= -\mathbf A_{\bm{\beta}}^{-1}\mathbf q_n$.

By condition (A5), we get the order of $\bm r_{2}$, given by
\[
\|\bm r_{2}\|
\le
\|\mathbf A_{\bm{\beta}}^{-1}\|\,\|\mathbf q_n\|
=
O_p\!\left(\|\hat{\bm{\beta}}-\bm{\beta}\|^2\right)
=
O_p(n^{-1})
=
o_p(n^{-1/2}).
\]

Since both $r_{1k}$ and $\bm r_{2}$ are $o_p(n^{-1/2})$, omitted second-order terms are asymptotically negligible relative to the leading term.


\subsection{Derivation of the multinomial model results} \label{app: multinomial_model}

We assume that the outcome variable for each unit $i$, with $i = 1,\dots, N$, follows a multinomial regression model, that is, $\bm{Y}_i \sim \text{Multinomial}(1, \bm{p}_i)$, where $\bm{p}_i = \left(p_{i1}, \dots, p_{ik}, \dots,p_{iK} \right)^T$ denotes the unknown event probabilities in $[0,1]$ of each category $k$, such that $\sum_{k=1}^Kp_{ik} = 1$. Under this model, we have that:
\begin{align*}
    P(Y_{i1} = y_{i1}, \dots, Y_{iK} = y_{iK} | \bm{p}_i) = p_{i1}^{y_{i1}}\cdots p_{iK}^{y_{iK}} = \prod_{k=1}^{K} p_{ik}^{y_{ik}}, \quad i=1,\dots N.
\end{align*}

Probabilities $\bm{p}_i$ are defined based on the set of individual covariates $\bm{x}_i \in \mathbb{R}^J$, for all $i = 1,\dots,N$, according to the multinomial logistic regression model where $K$ is taken as the baseline category:
\begin{align} \label{eq: mult_model_app}
    p_{ik} = f_k(\bm{x}_i;\bm{\beta}) = 
    \begin{dcases}
    \frac{\exp{\bm{x}_i}^T\bm{\beta}_k}{ 1 + \sum_{k=1}^{K-1}\exp{\bm{x}_i}^T\bm{\beta}_k}, \quad & k=1,\dots,K-1\\
    \frac{1}{ 1 + \sum_{k=1}^{K-1}\exp{\bm{x}_i}^T\bm{\beta}_k}, \quad & k = K. 
    \end{dcases}
\end{align}

Under the assumption of independence among the $N$ units, and considering the sampling membership indicators $\lambda_1,\dots, \lambda_N$, the joint density function is given by:
\begin{align}
    f(\bm{y}; \bm{x}, \bm{\beta}, \bm{\lambda}) 
    &= \prod_{i = 1}^{N} \left(\prod_{k=1}^{K} p_{ik}^{y_{ik}} \right)^{\lambda_i} \nonumber \\ 
    & = \prod_{i = 1}^{N} \prod_{k=1}^{K} p_{ik}^{y_{ik} \lambda_i}, \tag{$(ab)^x = a^x b^x$} \\ 
    & = \prod_{i = 1}^{N} \prod_{k=1}^{K-1} p_{ik}^{y_{ik}\lambda_i}\cdot p_{iK}^{y_{iK}\lambda_i} \tag{separate the $K$-th term}  \\ 
    & = \prod_{i = 1}^{N} \prod_{k=1}^{K-1} p_{ik}^{y_{ik}\lambda_i}\cdot p_{iK}^{(1-\sum_{k=1}^{K-1}y_{ik})\lambda_i}\tag{$\sum_{i=1}^ky_{ik}=1$}  \\ 
    & = \prod_{i = 1}^{N} \prod_{k=1}^{K-1} p_{ik}^{y_{ik}\lambda_i}\cdot \frac{p_{iK}^{\lambda_i}}{p_{iK}^{\sum_{k=1}^{K-1}y_{ik}\lambda_i}}\nonumber \\ 
    & = \prod_{i = 1}^{N} \prod_{k=1}^{K-1} p_{ik}^{y_{ik}\lambda_i}\cdot \frac{p_{iK}^{\lambda_i}}{\prod_{k=1}^{K-1}p_{iK}^{y_{ik}\lambda_i}}\tag{$a^{x+y} = a^x a^y$}  \\ 
    & = \prod_{i = 1}^{N} \prod_{k=1}^{K-1} \left(\frac{p_{ik}}{p_{iK}}\right)^{y_{ik}\lambda_i}\cdot p_{iK}^{\lambda_i}. \nonumber
    \end{align}
Note that in a classical inference setting procedures are conditioned on the observed sample only, thus, the sample membership terms are omitted. 

Now, taken into account Eq.~\eqref{eq: mult_model_app}, the likelihood and the log-likelihood functions are given by:
\begin{align*}
    L(\bm{\beta}; \bm{y}, \bm{x}, \bm{\lambda}) 
    &= \prod_{i = 1}^{N} \prod_{k=1}^{K-1} \left(\exp{\sum_{j=1}^J x_{ij}\beta_{kj}} \right)^{y_{ik} \lambda_i}  \cdot\left(\frac{1}{ 1 + \sum_{k=1}^{K-1}\exp{\sum_{j=1}^J x_{ij}\beta_{kj}}}\right)^{\lambda_i},\label{eq: likely}\\
    \ell(\bm{\beta}; \bm{y}, \bm{x}, \bm{\lambda}) 
    &= \sum_{i = 1}^{N} \left[\sum_{k=1}^{K-1} \left( \lambda_i y_{ik} \sum_{j=1}^J x_{ij}\beta_{kj} \right) - \lambda_i \log\left(1 + \sum_{k=1}^{K-1}\exp{\sum_{j=1}^J x_{ij}\beta_{kj}}\right)\right] \nonumber\\
    & = \sum_{i = 1}^{N} \lambda_i \left[\sum_{k=1}^{K-1} \left( y_{ik} \sum_{j=1}^J x_{ij}\beta_{kj} \right) - \log\left(1 + \sum_{k=1}^{K-1}\exp{\sum_{j=1}^J x_{ij}\beta_{kj}}\right)\right].
\end{align*}

The MLEs  $\hat{\bm{\beta}}_k = (\hat{\beta}_{k1}, \dots, \hat{\beta}_{kJ})$, $k=1,\dots,K$, of the unknown parameters can be obtained as the solution of the problem:
\begin{align}
    \frac{\partial \ell(\bm{\beta}; \bm{y}, \bm{x}, \bm{\lambda})}{\partial \beta_{kj}} 
    &= 0, \quad k=1,\dots,K, j=1,\dots, J,
\end{align}
with overall $K \times J $ number of equations, which is equivalent to setting
\begin{align*}
    \sum_{i=1}^{N} \lambda_i \mathbf{g}_{i}(\bm{\beta}; \bm{y}, \bm{x}) = \frac{\partial \ell(\bm{\beta}; \bm{y}, \bm{x}, \bm{\lambda})}{\partial \bm{\beta}} = \mathbf{0},
\end{align*}
where $\mathbf{0}$ is a vector of $H$ zeroes and $\mathbf{g}_i = \{g_{ikj}\}_{k=1,\dots, K-1, j=1,\dots,J}$ is a function of both $\bm{\beta}$ and $\mathbf{y}_i$.

Although technically a matrix, we can consider $\bm{\beta}$ to be a column vector by appending each of the additional columns below the first. This would be useful in forming the matrix of second partial derivatives $\mathbf{A}_{\bm{\beta}}$. Now, for a generic $k$ and $j$, we have that:
\begin{align} \label{eq: deriv1}
    \frac{\partial \ell(\bm{\beta}; \bm{y}, \bm{x}, \bm{\lambda})}{\partial \beta_{kj}} 
    & = \sum_{i = 1}^{N} \lambda_i \left[y_{ik} x_{ij} - \frac{\frac{\partial}{\partial \beta_{kj}} \left(\sum_{k=1}^{K-1}\exp{\sum_{j=1}^J x_{ij}\beta_{kj}}\right)}{1 + \sum_{k=1}^{K-1}\exp{\sum_{j=1}^J x_{ij}\beta_{kj}}} \right]\nonumber \\
    & = \sum_{i = 1}^{N} \lambda_i \left[ y_{ik} x_{ij} -  \frac{x_{ij}\cdot\exp{\sum_{j=1}^J x_{ij}\beta_{kj}}}{1 + \sum_{k=1}^{K-1}\exp{\sum_{j=1}^J x_{ij}\beta_{kj}}} \right] \nonumber \\
    & = \sum_{i = 1}^{N} \lambda_i  x_{ij} \left( y_{ik} - \frac{\exp{\sum_{j=1}^J x_{ij}\beta_{kj}}}{1 + \sum_{k=1}^{K-1}\exp{\sum_{j=1}^J x_{ij}\beta_{kj}}} \right ) \\
    & = \sum_{i = 1}^{N} \lambda_i  \underbrace{x_{ij} \left( y_{ik} - p_{ik} \right)}_{g_{ikj}} \label{eq: gikj}\\
    & = \sum_{i=1}^{N} \lambda_i  g_{ikj} (y_{ik} | \bm{\beta}) 
\end{align}

The Hessian, or second-order derivatives matrix, say $\mathbf{A}_{\bm{\beta}}$, of dimension $H \times H$ is now be obtained as:
\begin{align} \label{eq: hessian_app}
    \mathbf{A}_{\bm{\beta}} = [a_{(kj)(k'j')}],\quad a_{(kj)(k'j')} \doteq \frac{\partial^2 \ell(\bm{\beta}; \bm{y}, \bm{x}, \bm{\lambda})}{\partial \beta_{kj} \partial \beta_{k'j'}}, 
\end{align}
where 
\begin{align*}
     \frac{\partial^2 \ell(\bm{\beta}; \bm{y}, \bm{x}, \bm{\lambda})}{\partial \beta_{kj} \partial \beta_{k'j'}} &= \frac{\partial }{\partial \beta_{k'j'}} \sum_{i = 1}^{N}  \left[\lambda_i y_{ik} x_{ij} - \lambda_i \frac{x_{ij}\cdot\exp{\sum_{j=1}^J x_{ij}\beta_{kj}}}{1 + \sum_{k=1}^{K-1}\exp{\sum_{j=1}^J x_{ij}\beta_{kj}}} \right] \\
     & = \frac{\partial }{\partial \beta_{k'j'}} \sum_{i = 1}^{N} -  \lambda_i \frac{x_{ij}\cdot\exp{\sum_{j=1}^J x_{ij}\beta_{kj}}}{1 + \sum_{k=1}^{K-1}\exp{\sum_{j=1}^J x_{ij}\beta_{kj}}} \\
     & = \sum_{i = 1}^{N} -  \lambda_i x_{ij} \frac{\partial }{\partial \beta_{k'j'}}  \left(\frac{\exp{\sum_{j=1}^J x_{ij}\beta_{kj}}}{1 + \sum_{k=1}^{K-1}\exp{\sum_{j=1}^J x_{ij}\beta_{kj}}} \right)\\
     & = \begin{cases}
     \sum_{i = 1}^{N} - \lambda_i x_{ij} \frac{\left(1 + \sum_{k=1}^{K-1}e^{\sum_{j=1}^J x_{ij}\beta_{kj}}\right)\cdot e^{\sum_{j=1}^J x_{ij}\beta_{kj}} \cdot x_{ij'} - e^{\sum_{j=1}^J x_{ij}\beta_{kj}} \cdot e^{\sum_{j=1}^J x_{ij}\beta_{kj}} \cdot x_{ij'}}{\left(1 + \sum_{k=1}^{K-1}e^{\sum_{j=1}^J x_{ij}\beta_{kj}}\right)^2},  &\quad k' = k \\
     \sum_{i = 1}^{N} -  \lambda_i x_{ij} \frac{ 0 - e^{\sum_{j=1}^J x_{ij}\beta_{kj}} \cdot e^{\sum_{j=1}^J x_{ij}\beta_{k'j}} \cdot x_{ij'}}{\left(1 + \sum_{k=1}^{K-1}e^{\sum_{j=1}^J x_{ij}\beta_{kj}}\right)^2},  &\quad k' \neq k 
     \end{cases}\\
     & = \begin{cases}
     \sum_{i = 1}^{N} -  \lambda_i x_{ij} x_{ij'}\frac{e^{\sum_{j=1}^J x_{ij}\beta_{kj}} \left(1 + \sum_{k=1}^{K-1}e^{\sum_{j=1}^J x_{ij}\beta_{kj}} - e^{\sum_{j=1}^J x_{ij}\beta_{kj}} \right)}{\left(1 + \sum_{k=1}^{K-1}e^{\sum_{j=1}^J x_{ij}\beta_{kj}}\right)^2},  &\quad k' = k \\
     \sum_{i = 1}^{N}  \lambda_i x_{ij} x_{ij'}\frac{e^{\sum_{j=1}^J x_{ij}\beta_{kj}} \cdot e^{\sum_{j=1}^J x_{ij}\beta_{k'j}} }{\left(1 + \sum_{k=1}^{K-1}e^{\sum_{j=1}^J x_{ij}\beta_{kj}}\right)^2},  &\quad k' \neq k 
     \end{cases}\\
     & = \begin{cases}
     -\sum_{i = 1}^{N}  \lambda_i x_{ij} x_{ij'}p_{ik}(1-p_{ik}),  &\quad k' = k \\
     \sum_{i = 1}^{N}  \lambda_i x_{ij} x_{ij'}p_{ik}p_{ik'},  &\quad k' \neq k 
     \end{cases}.
\end{align*}

Therefore,
\begin{align*} 
\underset{H \times H}{\mathbf{A}_{\bm{\beta}}} 
& = 
\begin{bmatrix} 
\mathbf{A}_{\bm{\beta}_{(1,1)}} & 
\mathbf{A}_{\bm{\beta}_{(1,2)}}
& \dots & \mathbf{A}_{\bm{\beta}_{(1,K)}}\\
\mathbf{A}_{\bm{\beta}_{(2,1)}}& \mathbf{A}_{\bm{\beta}_{(2,2)}}& \dots &  \mathbf{A}_{\bm{\beta}_{(2,K)}}\\
\vdots & \vdots & \vdots & \vdots \\
\mathbf{A}_{\bm{\beta}_{(K,1)}}&
\mathbf{A}_{\bm{\beta}_{(K,2)}}& \dots &
\mathbf{A}_{\bm{\beta}_{(K,K)}}
\end{bmatrix} 
\\
& = \begin{bmatrix} 
\overbrace{\begin{bmatrix} 
-\sum_{i = 1}^{N}  \lambda_i \mathbf{x}_{i} \mathbf{x}_{i}'p_{i1}(1-p_{i1})
\end{bmatrix}}^{J \times J} & 
\dots
& \dots & \overbrace{\begin{bmatrix} 
\sum_{i = 1}^{N}  \lambda_i \mathbf{x}_{i} \mathbf{x}'_{i}p_{i1}p_{i2}
\end{bmatrix}}^{J \times J}\\
\begin{bmatrix} 
\sum_{i = 1}^{N}  \lambda_i \mathbf{x}_{i} \mathbf{x}'_{i}p_{i2}p_{i1}
\end{bmatrix}& \vdots & \vdots &  \begin{bmatrix} 
\sum_{i = 1}^{N}  \lambda_i \mathbf{x}_{i} \mathbf{x}'_{i}p_{i2}p_{i(K-1)}
\end{bmatrix}\\
\vdots & \vdots & \vdots & \vdots \\
\underbrace{\begin{bmatrix} 
\sum_{i = 1}^{N}  \lambda_i \mathbf{x}_{i} \mathbf{x}'_{i}p_{i(K-1)}p_{i1}
\end{bmatrix}}_{J \times J}&
 \dots & \dots &
\underbrace{\begin{bmatrix} 
-\sum_{i = 1}^{N}  \lambda_i \mathbf{x}_{i} \mathbf{x}'_{i}p_{i(K-1)}(1-p_{i(K-1)})
\end{bmatrix}}_{J \times J}
\end{bmatrix} \nonumber\\
& = -\sum_{i = 1}^{N}  \lambda_i 
\begin{bmatrix} 
\overbrace{\begin{bmatrix} 
\mathbf{x}_{i} \mathbf{x}_{i}' \sigma_{i(1)}^2
\end{bmatrix}}^{J \times J} & 
\overbrace{\begin{bmatrix} 
\mathbf{x}_{i} \mathbf{x}'_{i} \sigma_{i(1,2)}
\end{bmatrix}}^{J \times J}
& \dots & \overbrace{\begin{bmatrix} 
\mathbf{x}_{i} \mathbf{x}'_{i} \sigma_{i(1,K-1)}
\end{bmatrix}}^{J \times J}\\
\begin{bmatrix} 
\mathbf{x}_{i} \mathbf{x}'_{i}\sigma_{i(2,1)}
\end{bmatrix}& \begin{bmatrix} 
\mathbf{x}_{i} \mathbf{x}'_{i} \sigma_{i(2)}^2
\end{bmatrix} & \dots &  \begin{bmatrix} 
\mathbf{x}_{i} \mathbf{x}'_{i}\sigma_{i(2,K-1)}
\end{bmatrix}\\
\vdots & \vdots & \vdots & \vdots \\
\underbrace{\begin{bmatrix} 
\mathbf{x}_{i} \mathbf{x}'_{i}\sigma_{i(K-1,1)}
\end{bmatrix}}_{J \times J}&
\underbrace{\begin{bmatrix} 
\mathbf{x}_{i} \mathbf{x}'_{i}\sigma_{i(K-1,2)}
\end{bmatrix}}_{J \times J}& \dots &
\underbrace{\begin{bmatrix} 
\mathbf{x}_{i} \mathbf{x}'_{i} \sigma_{i(K-1)}^2
\end{bmatrix}}_{J \times J}
\end{bmatrix} 
\end{align*}
with the generic $\sigma_{i(k)}^2 = p_{ik}(1-p_{ik})$ being the variance of $Y_{ik}$ and $\sigma_{i(k,k')} = -p_{ik}p_{ik'}$ the covariance between $Y_{ik}$ and $Y_{ik'}$. The full variance-covariance matrix $\bm{\Sigma}_{\bm{Y}_i}$ related to the the multinomial model is given by 
\begin{align*} 
 \underset{K \times K}{\bm{\Sigma}_{\bm{Y}_i}} = 
    \begin{bmatrix} 
    p_{i1}(1-p_{i1}) &-p_{i1}p_{i2}&\dots& -p_{i1}p_{i(K-1)}\\
    -p_{i2}p_{i1}&p_{i2}(1-p_{i2}) &\dots& -p_{i2}p_{i(K-1)}\\
    \dots & \dots & \dots & \dots\\
    -p_{i(K-1)}p_{i1} &-p_{i(K-1)}p_{i2}&\dots& p_{i(K-1)}(1-p_{i(K-1)})
    \end{bmatrix}.   
\end{align*}

Given its role in the linearised GMSE, below we also provide the formulation of matrix $\underset{N \times H}{\mathbf{F}_k} = \left[\left.\frac{\partial f_k(\boldsymbol{x}_i;\boldsymbol{\hat{\beta}})}{\partial \hat{\beta}_{lj}} \right |_{\hat{\beta}_{lj} = \beta_{lj} }\right]$, in the case of the multinomial model:
\begin{align*}
    \begin{cases}
\left.\frac{\partial f_k(\boldsymbol{x}_i;\boldsymbol{\hat{\beta}})}{\partial \hat{\beta}_{lj}} \right |_{\hat{\beta}_{lj} = \beta_{lj} } = \frac{x_{ij}e^{\sum_{j=1}^J x_{ij}\beta_{kj}} \left(1 + \sum_{k=1}^{K-1}e^{\sum_{j=1}^J x_{ij}\beta_{kj}} - e^{\sum_{j=1}^J x_{ij}\beta_{kj}}\right)}{\left(1 + \sum_{k=1}^{K-1}e^{\sum_{j=1}^J x_{ij}\beta_{kj}}\right)^2} = x_{ij}p_{ik}(1-p_{ik}), \quad &l = k\\ 
\left.\frac{\partial f_k(\boldsymbol{x}_i;\boldsymbol{\hat{\beta}})}{\partial \hat{\beta}_{lj}} \right |_{\hat{\beta}_{lj} = \beta_{lj} } = \frac{-x_{ij}e^{\sum_{j=1}^J x_{ij}\beta_{kj}} \left( e^{\sum_{j=1}^J x_{ij}\beta_{lj}}\right)}{\left(1 + \sum_{k=1}^{K-1}e^{\sum_{j=1}^J x_{ij}\beta_{kj}}\right)^2} = -x_{ij}p_{ik}p_{il}, \quad &l \neq k.\\ 
    \end{cases}
\end{align*}

\subsection{Approximation of matrix $\mathbf{A_\beta}$} \label{app: approx_Ab}

The Hessian matrix $\mathbf{A}_{\bm{\beta}}$ has been derived in Eq.~\eqref{eq: hessian_app} and is given by:
\begin{align*}
    \mathbf{A}_{\bm{\beta}} = [a_{(kj)(k'j')}],\quad a_{(kj)(k'j')} \doteq \frac{\partial^2 \ell(\bm{\beta}; \bm{y}, \bm{x}, \bm{\lambda})}{\partial \beta_{kj} \partial \beta_{k'j'}} = \begin{cases}
     -\sum_{i = 1}^{N} \lambda_i x_{ij} x_{ij'}p_{ik}(1-p_{ik}),  &\quad k' = k \\
     \sum_{i = 1}^{N} \lambda_i x_{ij} x_{ij'}p_{ik}p_{ik'},  &\quad k' \neq k 
     \end{cases}. 
\end{align*}

We want to show that, under the setup in Sections 2 and 3.1, the following approximation is reasonable, removing in this way the dependence on the random variable $\bm{\lambda}$:
\begin{align*}
    \mathbf{A}_{\bm{\beta}} \approx \mathbf{A}_{\bm{\beta}} |_{\bm{\lambda} = \bm{\pi}}.
\end{align*}

We first notice that the only random variable in $\mathbf{A}_{\bm{\beta}}$ is given by $\lambda_i, i=1,\dots,N$, while the terms $x_{ij}, i=1,\dots,N, j=1,\dots,J$ and $p_{ik}, i=1,\dots,N, k=1,\dots,K$ are fixed. For simplicity, we focus on a single combination $j, j', k, k'$, with the generalisation naturally following. We set $c_i \doteq x_{ij}x_{ij'}p_{ik}p_{ik'}$ (the case with $k' = k$ would not make a difference), and let:
\begin{align*}
    a_i \doteq a_{(kj)(k'j')} = \sum_{i = 1}^{N} \lambda_i c_i.
\end{align*}

Noticing that $c_i \in \mathbb{R}$ are a set of real constants, we want to show that:
\begin{align*}
    \sum_{i = 1}^{N} \lambda_i c_i \approx \sum_{i = 1}^{N} \pi_i c_i. 
\end{align*}

Assuming a non-informative design, with $\Exp(\lambda_i) = \pi_i$ and $\Var(\lambda_i) = \pi_i(1-\pi_i)$, first, we note that the two quantities have the same expectation:
\begin{align*}
    \Exp(a_i) = \Exp\left(\sum_{i = 1}^{N} \lambda_i c_i\right) =  \sum_{i = 1}^{N} \pi_i c_i  = \tilde{a_i} = \Exp(\tilde{a_i}). 
\end{align*}
For the variance, assuming for simplicity a simple random design with replacement (implying independence), we have that:
\begin{align*}
    \Var(a_i) = \Var\left(\sum_{i = 1}^{N} \lambda_i c_i\right) =  \sum_{i = 1}^{N} \pi_i(1-\pi_i) c_i^2, 
\end{align*}
and the lower the variance, the better the approximation. In particular, the identity holds with probability one when this variance goes to $0$, meaning that $a_i$ is a degenerate distribution at $\tilde{a}_i$. This happens in the following cases:

\begin{enumerate}
    \item[(i)] when $c_i = 0$, for all $i$; 
    \item[(ii)] when $\pi_i \in \{0, 1\}$, for all $i$, that is, when the entire population is part of the sample (this rarely happens, except for census surveys, and in that case, one would not need to evaluate the register accuracy under sampling randomness).
\end{enumerate}

Since cases (i)-(ii) are rare or of no interest in practice, we are interested in a good approximation that holds when quantities $c_i$'s and/or $\pi_i(1-\pi_i)$ are very small, potentially close to $0$. Regarding the $c_i$'s, in our specific setting, the design matrix is defined based on a dummy format (with only $0/1$ values) and the $p_{ik}$'s are values in $(0,1)$, making it a very small constant. For the terms $\pi_i(1-\pi_i)$, these are high when $\pi_i \approx 0.5$ and very small when they are close to the extremes $0/1$. The latter is often the case in surveys, with $\pi_i$ being typically very small; in our case, for instance, we have $\pi_i = 0.05$ and this is already a decently high value. 

\section{Details on accuracy estimation approaches and pseudo-codes} \label{app: pseudoalgo}

\begin{algorithm}[H]
\caption{Linearised GMSE estimator for register-based totals}
\begin{algorithmic}[1]

\Require Register of $N$ units with complete covariates $\{\bm{x}_i\}_{i=1}^N$ of dimension $H$ 
  and partially observed outcomes $Y_{ik}$, $k=1,\dots,K$,
  for $i \in \mathcal{S}_n$ (sample of size $n$);
  inclusion probabilities $\{\pi_i\}_{i=1}^N$; a given domain of interest $d$;  domain membership vectors $\bm{\gamma}^{(d)}$; pre-specified categorical working model $f_k(\cdot;\bm{\beta})$.

\Ensure Adopt the Taylor linearisation approach to get the estimate
$\widehat{\GMSE}^{\text{Lin}}\!\left(\hat{\theta}_k^{(d)},\theta_k^{(d)}\right)$
  for $k=1,\dots,K$.

\State \textbf{Model fitting and linearisation (domain-invariant; one-shot estimate)}. Fit the working model $f_k(\bm{x}_i;\bm{\beta})$ on the sample $\mathcal{S}_n$
  by solving
  \[
    \sum_{i=1}^N \lambda_i\,\mathbf{g}_i(\bm{\beta};\bm{y},\bm{x}) = \mathbf{0}_H,
  \]
  with $\mathbf{g}_{i}(\bm{\beta}; \bm{y}, \bm{x}) = \{g_{ikj}(\bm{\beta}; \bm{y}, \bm{x}), j=1,\dots,J, k=1,\dots,K\}$ the system of $H$ generalised estimating equations. 
  Obtain $\hat{\bm{\beta}}$ and predicted probabilities
  $\hat{p}_{ik} = f_k(\bm{x}_i;\hat{\bm{\beta}})$ for all $i=1,\dots,N$, and 
  $k=1,\dots,K$.

\vspace{4pt}
\Statex \textbf{Domain-invariant quantities (one-shot estimates)}

\State Compute the matrix of the first-order partial derivatives, for $k=1,\dots,K$:
\begin{align*}
\underset{N \times H}{\mathbf{F}_k} = \left[\frac{\partial f_k(\bm{x}_i;\bm{\hat{\beta}})}{\partial \hat{\beta}_{lj}} \Big \lvert_{\hat{\bm{\beta}} = \bm{\beta} }\right],
\end{align*}
and get the plug-in estimates $\widehat{\mathbf{F}}_k,\ k = 1,\dots,K$.

\State Compute the plug-in Hessian matrix $\hat{\mathbf{A}}_{\bm{\beta}}$ of dimension $H \times H$, where
\begin{align*}
    \underset{H \times H}{\mathbf{A}_{\bm{\beta}}} = \left[a_{(kj)(k'j')}\big \lvert_{\hat{\bm{\beta}} = \bm{\beta}}\right],\quad\text{with}\quad a_{(kj)(k'j')} \doteq \sum_{i=1}^{N} \lambda_i\frac{ \partial \mathbf{g}_{i}(\bm{\beta}; \bm{y}, \bm{x})}{\partial \hat{\beta}_{kj}} \Bigg \lvert_{\hat{\bm{\beta}} = \bm{\beta}},\quad j = 1,\dots, J;\quad  k  = 1,\dots, K.
\end{align*}
Approximate $\mathbf{A}_{\bm{\beta}} \approx \mathbf{A}_{\bm{\beta}} |_{\bm{\lambda} = \bm{\pi}} \doteq \bar{\mathbf{A}}_{\bm{\beta}}$ according to Lemma 3 (main manuscript) and get the inverse $\bar{\hat{\mathbf{A}}}_{\bm{\beta}}^{-1}$.

\State For each unit $i=1,\dots,N$, compute:
  \begin{itemize}
    \item $\dot{\mathbf{X}}_i$ ($H\times K$): auxiliary matrix derived from
          $\bm{x}_i$ (see Eq. (16) in the main manuscript).
    \item $\bar{\hat{\mathbf{U}}}_i
          = \bar{\hat{\mathbf{A}}}_{\bm{\beta}}^{-1}\dot{\mathbf{X}}_i$
          \  ($H\times K$) following to Eq. (19) of the main manuscript.
    \item $\widehat{\Sigma}_{\bm{Y}_i}$ ($K\times K$):
          plug-in estimate of $\Var_{\M}(\bm{Y}_i)$, obtained from $\hat{\bm{p}}_i$.
  \end{itemize}

\State \textbf{Compute the GMSE estimates}. For $k=1,\dots,K$ and a given domain $d$, compute:
  \begin{align}
    \widehat{\GMSE}^{\text{Lin}}\left(\hat{\theta}_k^{(d)}, \theta_k^{(d)}\right) = \bm{\gamma}^{(d)T} \widehat{\mathbf{F}}_k \left( \sum_{i=1}^N \pi_i \widehat{\bar{\mathbf{U}}}_{i} \widehat{\Sigma}_{\bm{Y}_{i}} \widehat{\bar{\mathbf{U}}}_{i} \right) \widehat{\mathbf{F}}^T_k \bm{\gamma}^{(d)}. \tag{GMSE estimate}
  \end{align}

\State \textbf{Remark: Domain scalability}.
  for any additional domain $d'$, update $\bm{\gamma}^{(d')}$  and
  repeat Step~5 only.
\end{algorithmic}
\end{algorithm}

\begin{algorithm}[H]
\caption{Non-parametric Bootstrap~\citep[see][pg. 136] {chambers2012introduction} for register-based totals
}
\begin{algorithmic}[1]
\Require Register of $N$ units, with complete set of covariates $\{\bm{x}_i\}_{i=1}^N$ and partially observed outcomes $Y_{ik}, k=1,\dots,K,\ \text{and}\ i \in \mathcal{S}_n$, where $\mathcal{S}_n$ is a sample of size $n$; number of bootstrap replications $B$.
\Ensure Execute bootstrap to get the distribution of $\hat{\theta}_k,\ k=1,\dots,K$.
\For{$b = 1,\dots, B$}
    \State Draw a bootstrap sub-sample $\mathcal{S}^{(b)}_n$ of size $n$ with replacement from the originally observed sample $\mathcal{S}_n$.
    \State Draw a bootstrap sub-sample $\bar{\mathcal{S}}^{(b)}_n$ of size $N-n$ with replacement from the non-sampled register units.
    \State Use $\mathcal{S}^{(b)}_n$ to estimate the working model $f_k(\bm{x}_i; \bm{\beta})$, obtaining $\hat{\bm{\beta}}^{(b)}$ and $\hat{\bm{p}}_i^{(b)}, i = 1,\dots, N$.
    \State Estimate / predict the outcome values for all register units:
        \[
        \hat{Y}_{ik}^{(b)} = \hat{p}^{(b)}_{ik} = f_k(\bm{x}_i; \hat{\bm{\beta}}^{(b)}), 
        \quad i=1,\dots,N, \; k=1,\dots,K.
        \]
    \State Get the bootstrap estimate of the target totals ${\theta}_k,\ k=1,\dots,K$:
        \[
        \hat{\theta}_k^{(b)} = \sum_{i=1}^N \hat{Y}_{ik}^{(b)}, 
        \quad i=1,\dots,N, \; k=1,\dots,K.
        \]
        If a domain $d$ is considered, then the expression has to account for the domain membership $\bm{\gamma}^{(d)}$.
\EndFor
\State Given a set of bootstrap estimates $\{\hat{\theta}_k^{(b)}, b = 1,\dots, B \}$, these can now be used to obtain an estimate of the quantities of interest; in our case:  
\begin{align*}
\widehat{\GMSE}\!\left(\hat{\theta}^{\text{boot}}_k\right) &= \frac{1}{B} \sum_{b=1}^B \left( \hat{\theta}_k^{(b)} - \hat{\theta}_k \right)^2 = \widehat{\Var}\left(\hat{\theta}^{\text{boot}}_k\right) + \widehat{\Bias}^2\!\left(\hat{\theta}^{\text{boot}}_k\right), \tag{GMSE estimate}
    \end{align*}
    where $\hat{\theta}_k = \sum_{i=1}^N \hat{p}_{ik}$ is computed using the original sample ($\mathcal{S}_n$) estimates $\hat{p}_{ik}$, and 
    \begin{align*}
\widehat{\Var}\left(\hat{\theta}^{\text{boot}}_k\right) &= \frac{1}{B-1} \sum_{b=1}^B \left( \hat{\theta}_k^{(b)} - \bar{\theta}_k^{\text{boot}} \right)^2 \tag{Variance estimate}\\
\widehat{\Bias}^2\!\left(\hat{\theta}_k^{\text{boot}}\right) & = \left(\bar{\theta}_k^{\text{boot}} - \hat{\theta}_k\right)^2, \tag{Bias estimate}
    \end{align*}
with 
\begin{align*}
    \bar{\theta}_k^{\text{boot}} &= \frac{1}{B} \sum_{b=1}^B \hat{\theta}_k^{(b)}. \tag{Mean estimate}
    \end{align*}
\end{algorithmic}
\end{algorithm}

\subsection{Model-based and design-based approaches}

The model-based and design-based estimators of the GMSE are obtained by 
isolating each source of uncertainty separately, that is, by considering 
the expectation with respect to either the model or the sampling design 
alone. These are introduced here 
as complementary quantities, since they are employed in the application of 
Section 4 as reference estimators. However, neither estimator alone captures the full variability captured by the GMSE.

\paragraph{Design-based approach}

The design-based estimator isolates the variability due to the sampling design, treating the outcome variable $Y$ as fixed and replicating only the sampling mechanism. Specifically, for $m = 1, \dots, M$ independent replicates of the sample membership indicator $\lambda^{(m)}$, a multinomial model is refitted on each sample, yielding estimates 
$\hat{\theta}_k^{(d,m)}$, for all $k$'s. The design-based GMSE estimator is then defined as:
\begin{align}
    \widehat{\text{GMSE}}^{\text{Design}}\left(\hat{\theta}_k^{(d)}, 
    \theta_k^{(d)}\right) = \frac{1}{M}\sum_{m=1}^{M} 
    \left(\hat{\theta}_k^{(d,m)} - \hat{\theta}_k^{(d)}\right)^2, 
    \quad k = 1, \dots, K,
\end{align}
where $\hat{\theta}_k^{(d)} = \sum_{i=1}^N \gamma_i^{(d)} \hat{p}_{ik}$ 
is the baseline estimate obtained from the full (observed) sample, and 
$\hat{\theta}_k^{(d,m)} = \sum_{i=1}^N \gamma_i^{(d)} \hat{p}_{ik}^{(m)}$ 
is the estimate obtained by refitting the model on the $m$-th replicate 
sample. This quantity captures design variability only, as the model 
parameters are re-estimated under a new sampling realisation while the 
population outcome structure remains fixed.

\paragraph{Model-based approach}

The model-based estimator isolates the variability due to the outcome model, treating the sample membership $\lambda$ as fixed at its observed value and replicating only the model randomness. Specifically, for 
$m = 1, \dots, M$ independent replicates of the outcome variable, a new 
response $\tilde{Y}_i^{(m)}$ is generated as $\tilde{Y}_i^{(m)} \sim \text{Multinomial}(1, \hat{p}_i)$, with $\hat{p}_i$ the baseline estimate obtained from the (originally observed) data. A multinomial model is then refitted on the observed sample using 
$\tilde{Y}_i^{(m)}$ as the response, yielding updated probability estimates 
$\hat{p}_{ik}^{(m)}$ and corresponding total estimates 
$\hat{\theta}_k^{(d,m)} = \sum_{i=1}^N \gamma_i^{(d)} \hat{p}_{ik}^{(m)}$. 
The model-based GMSE estimator is then defined as:
\begin{align}
    \widehat{\text{GMSE}}^{\text{Model}}\left(\hat{\theta}_k^{(d)}, 
    \theta_k^{(d)}\right) = \frac{1}{M}\sum_{m=1}^{M} 
    \left(\hat{\theta}_k^{(d,m)} - \hat{\theta}_k^{(d)}\right)^2, 
    \quad k = 1, \dots, K,
\end{align}
where $\hat{\theta}_k^{(d)}$ is the same baseline estimate as in the design-based approach. This quantity captures model variability only, as the sampling design is held fixed and only the stochastic component of the outcome model is replicated.

\paragraph{Relationship to the GMSE}

Under a non-informative sampling design, as illustrated in the main technical proof of Supplementary material B, the GMSE in Definition~1 can be 
decomposed as:
\begin{align}
    \text{GMSE}\left(\hat{\theta}_k^{(d)}, \theta_k^{(d)}\right) 
    = \Exp_{\D} \Var_{\M}\left(\hat{\theta}_k^{(d)}|\bm{\lambda}\right) + \Var_{\D} \Exp_{\M}\left(\hat{\theta}_k^{(d)}|\bm{\lambda}\right) + \text{Minor order terms},
\end{align}
reflecting the two sources of uncertainty that the model-based and 
design-based estimators isolate separately. In particular:
\begin{itemize}
    \item $\widehat{\text{GMSE}}^{\text{Model}}$ approximates the inner 
    model variance $\text{Var}_Y(\hat{\theta}_k^{(d)} \mid \lambda)$ for a fixed sample, resulting in a lower (most often) or higher value depending on the specific sample;
    \item $\widehat{\text{GMSE}}^{\text{Design}}$ approximates the outer 
    design expectation $\mathbb{E}_\lambda$, by varying the sampling realisation while fixing the model, and, most importantly, reflects the variability of the model estimator across different samples. Since the potential of (even small) small bias is non-zero (especially in small samples), this variability may contribute substantially to the overall uncertainty, beyond model uncertainty.
\end{itemize}

\subsection{Monte Carlo approach (benchmark)}

The Monte Carlo (MC) estimator serves as a benchmark for the true GMSE, 
and is used in the simulation study of Section 5 (main manuscript) to validate the performance of the proposed linearised estimator. Unlike the model-based and design-based estimators, the MC  estimator jointly replicates both sources of uncertainty --- the sampling 
design and the outcome model --- by generating $G$ independent replicates 
of the sample membership indicator $\lambda^{(g)}$ and, for each, $M$ 
independent replicates of the response variable $Y^{(m,g)}$. Specifically, 
for each pair $(m, g)$, a multinomial model is fitted on the $g$-th sample 
replicate using the $m$-th response replicate, yielding probability estimates 
$\hat{p}_{ik}^{(m,g)}$ and corresponding total estimates 
$\hat{\theta}_k^{(d,m,g)} = \sum_{i=1}^N \gamma_i^{(d)} \hat{p}_{ik}^{(m,g)}$.

The corresponding GMSE and CV estimators are given, respectively, by:
\begin{align*}
\widehat{\GMSE}^{\text{MC}}\left(\hat{\theta}_k^{(d)}, \theta_k^{(d)}\right) &= \frac{1}{G}\sum_{g=1}^G \widehat{\MSE}_g^{\text{MC}}\left(\hat{\theta}_k^{(d)}, \theta_k^{(d)}\right)
    = \frac{1}{G} \sum_{g=1}^G \left(\frac{1}{M} \sum_{m=1}^M \left(\sum_{i=1}^N  \gamma_{i}^{(d)}\hat{p}_{ik}^{(m, g)} - \sum_{i=1}^N  \gamma_{i}^{(d)}p_{ik}\right)^2 \right),\\
    \widehat{\CV}^{\text{MC}}\left(\hat{\theta}_k^{(d)}, \theta_k^{(d)}\right) &= \frac{\sqrt{\widehat{\GMSE}^{\text{MC}}\left(\hat{\theta}_k^{(d)}, \theta_k^{(d)}\right)}}{\Exp\left(\theta_k^{(d)}\right)},\quad k=1,\dots, K.
\end{align*}
where $\hat{p}_{ik}^{(m, g)}$ is the probability estimate obtained in 
the MC run corresponding to the $m$-th replicate of the response model 
and the $g$-th replicate of the sample membership; $p_{ik}$ denotes 
the true category probabilities used in the data-generating process.

\section{Additional results} \label{app: simu}

\begin{table}[h]
\centering 
\caption{
Model coefficients $\bm{\beta}_k, k = 1,\dots, K=8$ for Subgroup B based on the sample survey data. Coefficients for $k=8$ are all set to zero since this is used as baseline category. 
} \label{tab: true_coeff}
  \begin{tabular}{lrrrrrrrr}
    \toprule
    & \multicolumn{8}{c}{Response Category $k$} \\ \cmidrule{2-9}
      & $k=1$  & $k=2$ & $k=3$  & $k=4$ & $k=5$  & $k=6$ & $k=7$  & $k=8$ \\
       Model covariate (dummy) &  & &   & &   & &   \\
      \midrule
$X_0$: Intercept &  8.897 & -0.543 & 6.021 & 8.730 & 8.018 & 0.028 & 0.574 & 0.000\\
$X_1$ :: \texttt{[29, 39]}  &  -1.344& 6.426&  -0.130&  -1.586& -2.059& -1.687& -2.862& 0.000\\
$X_1$ :: \texttt{[40, 49]}  & -0.894& 7.927&  0.577&  -0.852& -1.427& -1.638& -2.270&  0.000\\
$X_1$ :: \texttt{[50, 69]}  & 0.966&  8.214&  2.157&  -0.267& -1.068& -1.392& -1.825& 0.000\\
$X_1$ :: \texttt{[70, )}  & 2.640&   11.259& 4.524&  0.755&  -0.085& -1.151& -0.867& 0.000\\
$X_2$ :: \texttt{Female}  & 0.821&  0.947&  0.286&  -0.046& 0.079&  0.311&  0.233& 0.000\\
$X_3$ :: \texttt{Italian citizenship}  & -2.048& -1.250&  -0.411& 0.056&  0.032&  -0.291& 0.311&  0.000\\
$X_4$ :: $k=2$  & 0.251& 3.542& 3.475& 2.110&  2.391& 8.305& 1.772& 0.000\\
$X_4$ :: $k=3$  & -0.999& 1.997&  4.431&  2.827&  2.548&  7.951&  8.471&  0.000\\
$X_4$ :: $k=4$  & -6.377& -4.379& -2.595& -0.143& -0.492& 3.409&  3.863&  0.000\\
$X_4$ :: $k=5$  & 0.016&  3.170&   4.512&  5.732&  9.772&  13.889& 13.409& 0.000\\
$X_4$ :: $k=6$  & -16.979& -14.753& -7.478&  -5.416&  -2.491&  7.184&   5.581&   0.000\\
$X_4$ :: $k=7$  & -10.654& -9.324&  -7.827&  -6.885&  -4.805&  2.696&   5.800& 0.000\\
$X_4$ :: $k=8$  & -14.816& -14.945& -15.966& -10.870&  -9.100&    -8.896&  -0.020& 0.000\\
    \bottomrule
  \end{tabular}
\end{table}

\begin{table}[h]
\centering
\caption{Estimates of the GMSE for totals $\hat{\theta}^{(d)}_k = 
\sum_{i=1}^N \gamma_i^{(d)}\hat{Y}_{ik}, k=1,\dots,8$ for the full 
register and for domain $d \in X_2: \text{Gender}$. The sample fraction 
$n_k^{(d)}/\hat{\theta}^{(d)}_k$ is between 3.9\% and 5.2\% across all 
cases. Bootstrap, as well as model-based and design-based estimates are 
based on $1{,}000$ replicates.} \label{tab: gmse_est_gmse}
\begin{tabular}{lrrrrrr}
    \toprule
    Category $k$ & $\hat{\theta}^{(d)}_k$ & \makecell{Sample \\ size 
    $n_k^{(d)}$} & $\widehat{\text{GMSE}}^{\text{Lin}}_k$ & 
    $\widehat{\text{GMSE}}^{\text{Boot}}_k$ & 
    $\widehat{\text{GMSE}}^{\text{Model}}_k$ & 
    $\widehat{\text{GMSE}}^{\text{Design}}_k$\\
    \midrule
    \multicolumn{7}{l}{Full register: $\gamma_i = 1,\quad i=1,\dots,N$}\\
    \midrule
\texttt{1 Illiterate}                &1039   &49   &15195  &19318  &17060  &18038  \\
\texttt{2 Literate but no education} &6649   &340  &97462  &109015 &104099 &92364  \\
\texttt{3 Primary}                   &49886  &2572 &288343 &305130 &284640 &262546 \\
\texttt{4 Lower secondary}           &84174  &4285 &530144 &600977 &530370 &500403 \\
\texttt{5 Upper secondary}           &113719 &5682 &497936 &573528 &489321 &443761 \\
\texttt{6 Bachelor degree}           &7234   &364  &91337  &90701  &81660  &71952  \\
\texttt{7 Master degree}             &32810  &1524 &171777 &189906 &131736 &125123 \\
\texttt{8 PhD level}                 &1054   &44   &16074  &13679  &13518  &11258  \\
    \midrule
    \multicolumn{7}{l}{Internal domain: $\gamma^{(d)},\ d = 
    \texttt{Male}$ (47.7\%)}\\
    \midrule
\texttt{1 Illiterate}                &300   &14   &4435   &5420   &5401   &5442   \\
\texttt{2 Literate but no education} &1569  &81   &24444  &28082  &27619  &27116  \\
\texttt{3 Primary}                   &19631 &1015 &114420 &125573 &122862 &105258 \\
\texttt{4 Lower secondary}           &45853 &2306 &261562 &315529 &246235 &243389 \\
\texttt{5 Upper secondary}           &56374 &2775 &243509 &282340 &227522 &227757 \\
\texttt{6 Bachelor degree}           &2701  &132  &36359  &29585  &32880  &25936  \\
\texttt{7 Master degree}             &14443 &656  &76429  &79055  &64843  &56698  \\
\texttt{8 PhD level}                 &510   &20   &7813   &5975   &6705   &5061   \\
    \midrule
    \multicolumn{7}{l}{Internal domain: $\gamma^{(d)},\ d = 
    \texttt{Female}$ (52.3\%)}\\
    \midrule
\texttt{1 Illiterate}                &739   &35   &10721  &13267  &12044  &11993  \\
\texttt{2 Literate but no education} &5080  &259  &72826  &77146  &78734  &67396  \\
\texttt{3 Primary}                   &30255 &1557 &172843 &195492 &179959 &156370 \\
\texttt{4 Lower secondary}           &38321 &1979 &265811 &304650 &267393 &251891 \\
\texttt{5 Upper secondary}           &57345 &2907 &251083 &292115 &244189 &232986 \\
\texttt{6 Bachelor degree}           &4533  &232  &54303  &57122  &46166  &45562  \\
\texttt{7 Master degree}             &18367 &868  &93330  &100399 &79951  &74574  \\
\texttt{8 PhD level}                 &545   &24   &8069   &7120   &6436   &6067   \\
    \bottomrule
\end{tabular}
\end{table}

\begin{figure}[H]
    \centering
    \includegraphics[scale = .65]{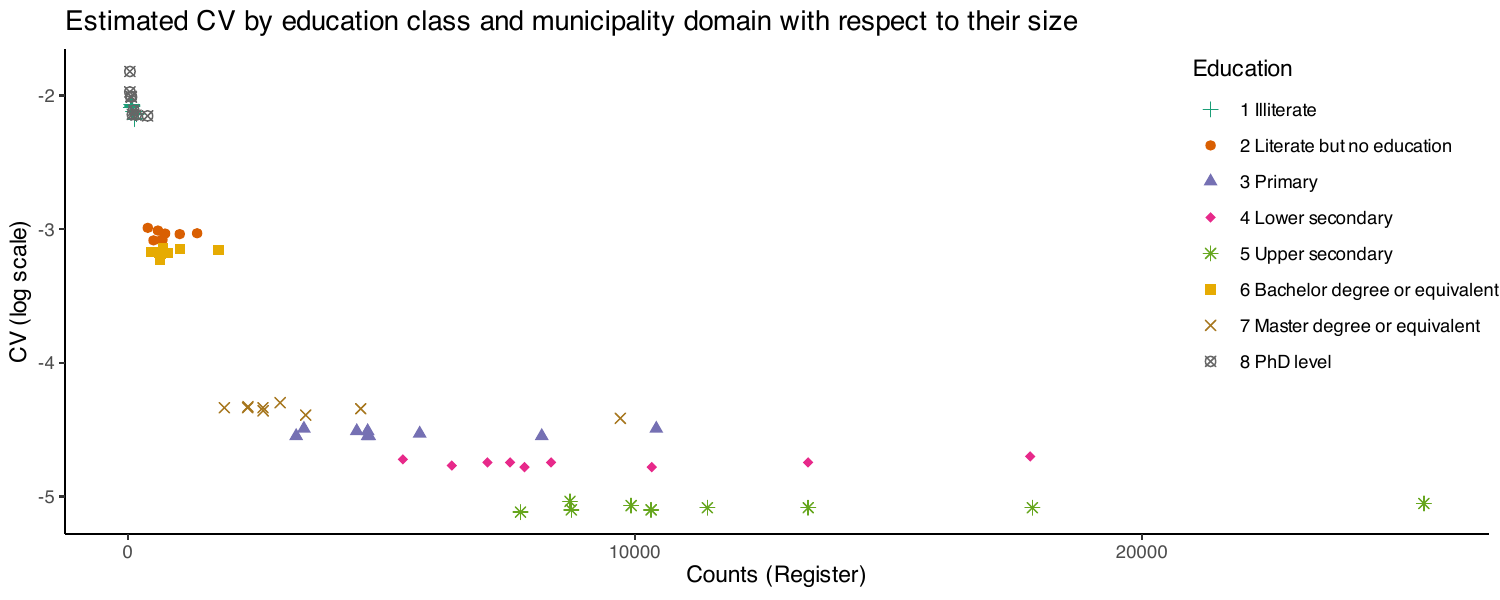}
    \caption{Coefficient of variation ($\widehat{\CV}^{\text{Lin}}_k$; in log scale) with respect to the register total estimates $\hat{\theta}^{(d)}_k$ for each class $k = 1,\dots, 8$ and domain combination, with $d \in X_5$: \texttt{province} having $9$ modalities. }
    \label{fig: CV_province}
\end{figure}

\begin{figure}[H]
    \centering
    \includegraphics[scale = .58]{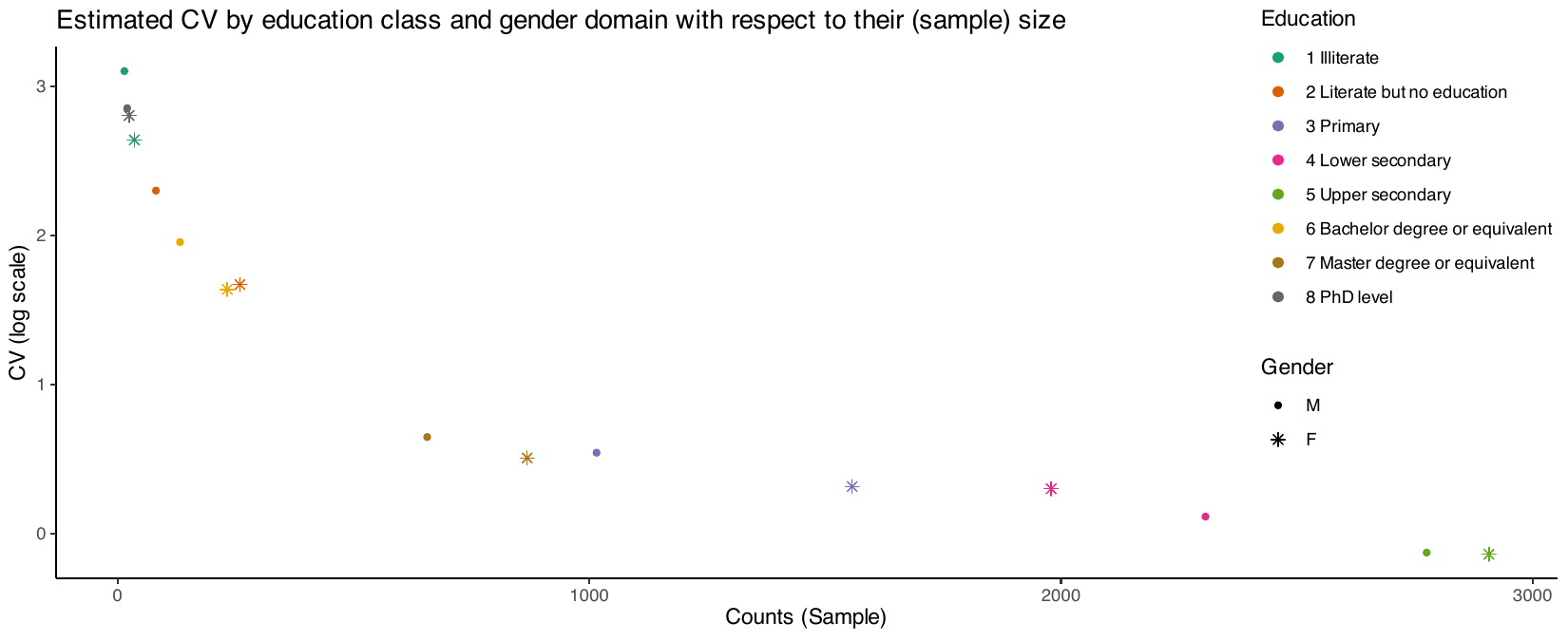}
    \caption{Coefficient of variation ($\widehat{\CV}^{\text{Lin}}_k$; in log scale) with respect to the sample size of each class $k = 1,\dots, 8$ and domain combination, with $d \in X_2$: \texttt{gender} having $2$ modalities.}
    \label{fig: CV_gender_sample}
\end{figure}

\begin{figure}[H]
    \centering
    \includegraphics[scale = .58]{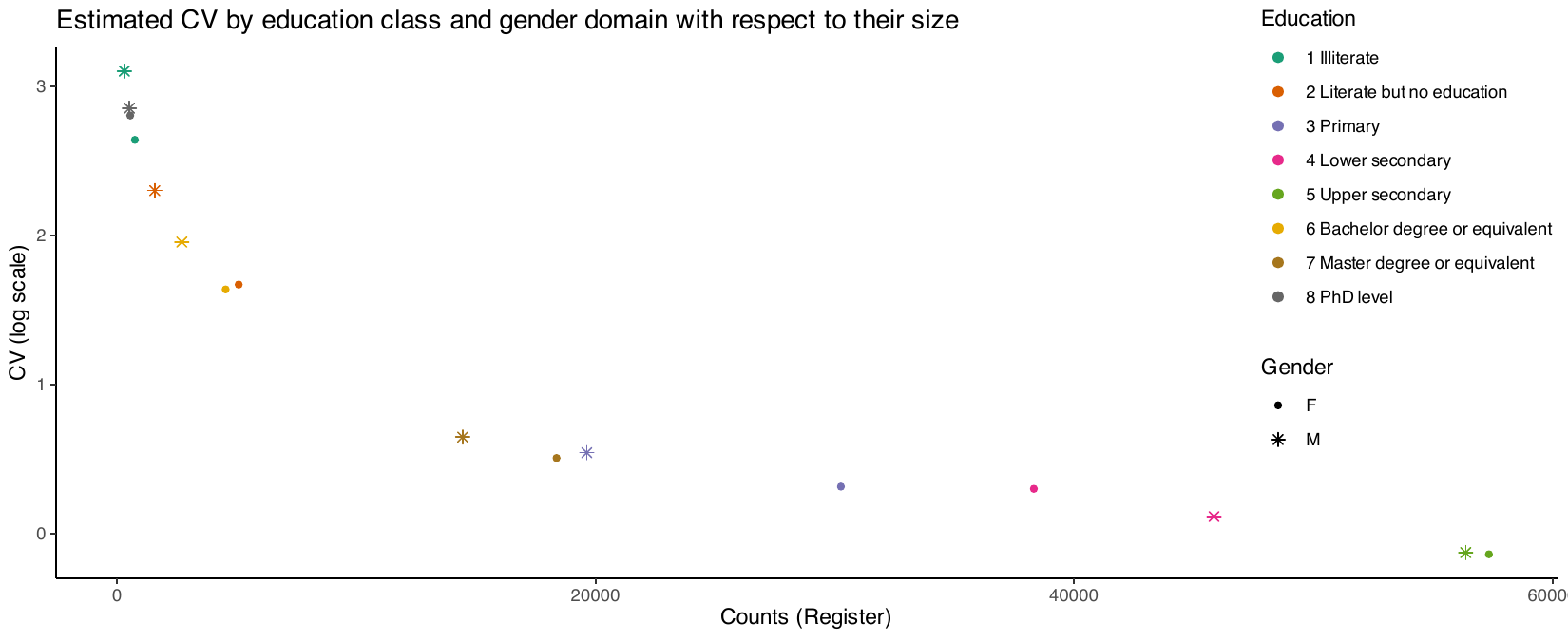}
    \caption{Coefficient of variation ($\widehat{\CV}^{\text{Lin}}_k$; in log scale) with respect to the register total estimates of each class $k = 1,\dots, 8$ and domain combination, with $d \in X_2$: \texttt{gender} having $2$ modalities.}
    \label{fig: CV_gender}
\end{figure}

\begin{figure}[h]
    \centering
    \includegraphics[width=.95\linewidth]{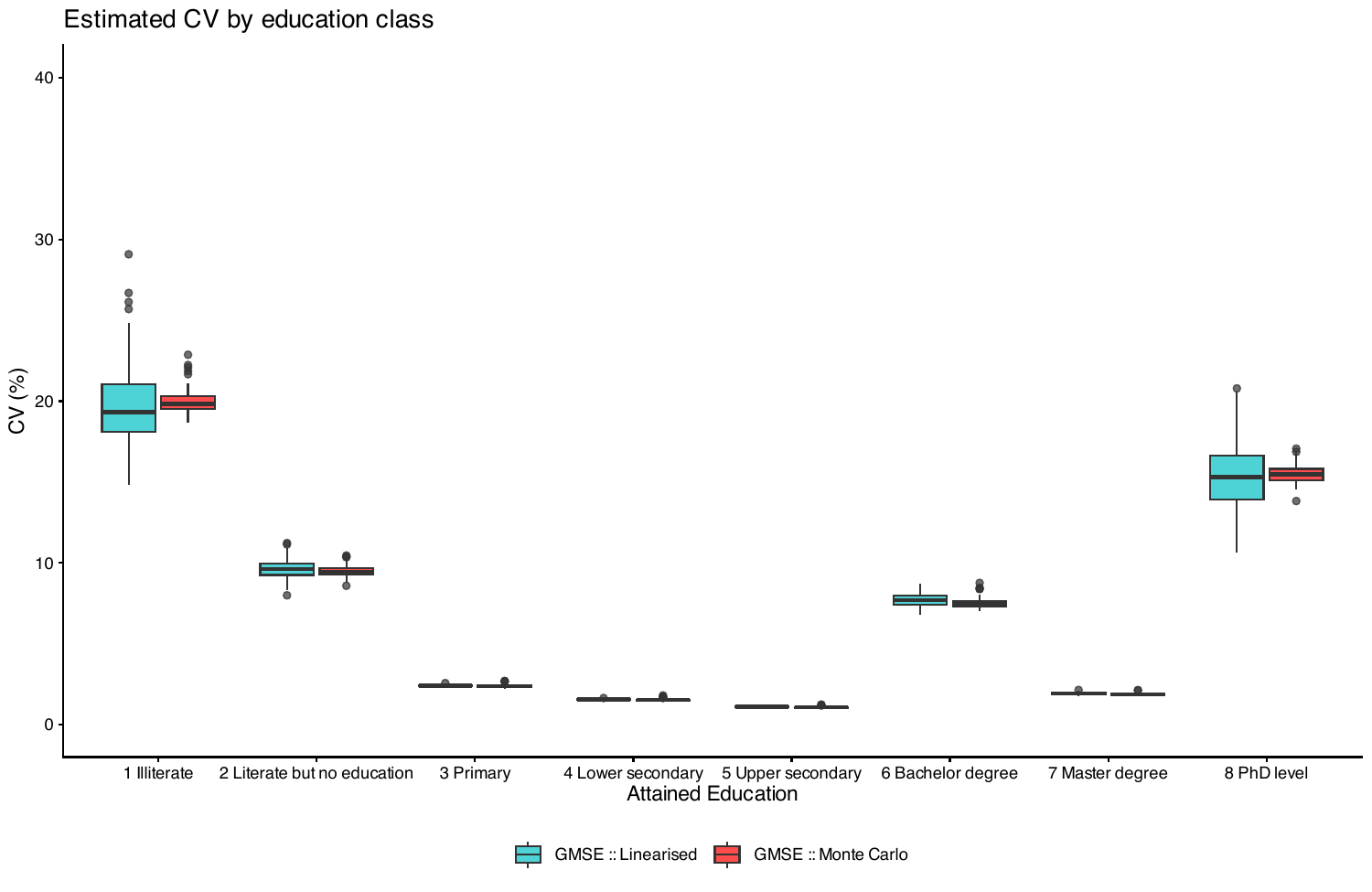}
    \caption{Estimates of the CV obtained with the linearised approach and compared to the MC benchmark. The considered scenario is $N  = 100,000$, with $n = 0.05 \times N = 5,000$. All results are based on $S=100$ replications of the two evaluation procedures.}
    \label{fig: cv_100k}
\end{figure}

\section{Kronecker Formulation} \label{app: kron}
\subsection{Why the need to introduce an alternative formulation \label{app: why_KF}}
In this section, we derive an alternative formulation of the GMSE estimator for the multinomial model. This is a more compact and completely matrix formulation of the GMSE estimator that, compared to the one derived in the article, hereinafter referred to as the standard formulation (SF), allows us to avoid the introduction of summations. Furthermore, the alternative formulation leads to the definition of a single formula that allows us to obtain the GMSE estimate simultaneously for all K estimation categories and for all D domains. This feature can further facilitate the application of the method by enhancing one of its main characteristics for which it was proposed, that is, that of producing on-the-fly estimates of the quality level of the estimates from the register. Furthermore, the characteristics described can be successful in the application of the studied method on large masses of data, such as statistical registers of National Statistical Insitutes that refer to population units. For this reason, in writing the R code that implements the GMSE, two different R programs were developed. The first implements the formulas of the SF, while the second implements those of the  alternative formulation. Furthermore, since the new formulation can produce very large matrices, for this formulation, a version based on a block sum was then derived. This allows to reduce the number of records of the matrices referred to each single block. Naturally, the two new versions coincide in the case of a single block. In the following pages we give only the alternative formulation in the case of a single block because its adaptation to the more general case of two or more blocks is straightforward.   

\subsection{Extended notation \label{app: notation_KF}}
In the following, we introduce a general notation based on block vectors and matrices, useful for deriving the GMSE estimator according to the new formulation. In particular, $\mathbf{*}=col_{b=1}^{B}\{\mathbf{*}_{b}\}=[\mathbf{*}_b]$ is a column vector (when $\mathbf{*}\equiv a$, for $a$ a generic scalar) or a column block vector (when $\mathbf{*}\equiv \mathbf{a}$) or a column block matrix (when $\mathbf{*}\equiv \mathbf{A}$), being $b$ ($b=1,\dots, B$) the index denoting the $b$-th element or block; for example, when $\mathbf{*}_b=\mathbf{A}_b$ the general notation becomes $\mathbf{A}=col_{b=1}^{B}\{\mathbf{A}_{b}\}=[\mathbf{A}_b]$. In most cases, we will use the last notation after the equals sign, which is more synthetic and compact. Similarly we indicate with $\mathbf{*}=row_{b=1}^{B}\{\mathbf{*}_{b}\}$ a row vector or row block vector or block matrix ($b=1,\dots B$). Furthermore $diag_{b=1}^{B}\{\mathbf{*}_{b}\}$ is a diagonal matrix made up of $B$ elements or blocks $\mathbf{*}_b$ ($\equiv a_b, \mathbf{a}_b, \mathbf{A}_b$); $mat_{b,b'=1}^{B, B'}\{\mathbf{*}_{bb'}\}$ is matrix or block matrix made up of $B\times B'$ elements or blocks $\mathbf{*}_{bb'}$ ($\equiv a_{bb'}, \mathbf{a}_{bb'}, \mathbf{A}_{bb'})$; $\mathbf{\Delta}_{\mathbf{a}}=diag_{b=1}^{B}\{a_{b}\}$ is a diagonal matrix made up of the elements of a vector, $\mathbf{a}=[a_{b}]$ ($b=1,\dots, B$). 
It is also useful to introduce some indicator vectors and matrices:   $\mathbf{1}_{c}$, $\mathbf{0}_{c}$ and $\mathbf{0}_{c; 1c'}$ are vectors of order $c$: the first is composed of all values equal to "1", the second of all values equal to "0" while the third is equal to the previous one except for the element $c'$ which is set equal to "1"; $\mathbf{I}_{a}$, is the identity matrix of order $a$; $\mathbf{I}_{a;b}=\mathbf{1}_{b}\otimes \mathbf{I}_{c} $ is a $[(b \times c) \times c]$-dimensional matrix composed by $b$ column blocks each one equal to $\mathbf{I}_{a}$. Finally for two general matrices $\mathbf{A}$ of order $(r \times c)$ and matrix $\mathbf{B}$ of order $(s \times d)$, let's consider the following multiplication operators: $\mathbf{A}\#\mathbf{B}$, which applies under condition $r=s$ and
denotes the matrix of order matrix [$r \times  (c\times s)]$ obtained by the \textit{element-by-element product}, $\mathbf{a}_{.j}\#\mathbf{B}$, of each column vector $\mathbf{a}_{.j}$, for $j=1,\dots, c$, of matrix $\mathbf{A}$ by matrix $\mathbf{B}$. Let's note that in case in which we have a vector $\mathbf{b}$ of order $r$, instead of matrix $\mathbf{B}$, the generic column vector of $\mathbf{A}\#\mathbf{b}$ is $\mathbf{a}_{.j}\#\mathbf{b}$;  $\mathbf{A}\otimes\mathbf{B}$, is the matrix of order $[(r\times s) \times (c \times d)]$ obtained by the \textit{direct product} of $\mathbf{A}$ and $\mathbf{B}$ denoting the product $a_{ij}\mathbf{B}$ $(i=1, \dots, r; j=1, \dots, c)$ of each element of $\mathbf{A}$ for matrix $\mathbf{B}$. The matrix $\mathbf{A}\otimes\mathbf{B}$ goes by the name of Kronecker product because of Kronecker's association with the determinant of $\mathbf{A}\otimes\mathbf{B}$, although in this regard~\citep{henderson1983history} suggest that ``Zehfluss product'' would be more appropriate. 

In this regard, it is noted that since the alternative formulation is based on the application of the Kronecker product to the vectors and matrices already introduced for the SF, the alternative formulation of the GMSE given below will be referred to in the following as the Kronecker Formulation (KF) of the GMSE. 

\subsection{Parameter and estimator  \label{app: KF_parameter}}

Under KF, the target parameter is a $(D\times K)$-dimensional vector formally defined as 
\begin{align} \label{eq: theta_kr}
     \mathbf{\theta} =[\mathbf{\theta}^{(d)}]= \mathbf{\Gamma}^{T}\mathbf{y},
\end{align}
whose $d$-th block is the $K$-dimensional vector 
\begin{align} \label{eq: theta_kr_d}
     \mathbf{\theta}^{(d)}=[\theta^{(d)}_{k}] =\mathbf{\Gamma}^{(d),T}\mathbf{y},
\end{align}
in which $K$-th element is
\begin{align} \label{eq: theta_kr_d_k}
\theta^{(d)}_{k}=\mathbf{\gamma}^{(d),T}_{k}\mathbf{y},
\end{align}
being Eq. (1) the correspondent expression under SF. In the above formulas: $\mathbf{\Gamma} = \left(\mathbf{\Gamma}^{(1)}, \dots, \mathbf{\Gamma}^{(d)}, \dots,\mathbf{\Gamma}^{(D)} \right)$ is a matrix of order $[T \times (D \times K)]$, for $T=N \times K$,  in which $\mathbf{\Gamma}^{(d)}=[\mathbf{\gamma}^{(d)}\otimes \mathbf{1}_K]\cdot \mathbf{I}_{K;N}$ is the $d$-th block matrix of order $[T \times K]$ being $\mathbf{\gamma}^{(d)}_{k}$ ($k=1, \dots, K)$ the $k$-th column vector of $\mathbf{\Gamma}^{(d)}$; $\mathbf{y}=[\mathbf{Y}_i]$ is a $T$-dimensional vector whose $i$-th, for $i=1,\dots,N$, block is the vector $\mathbf{Y}_i=[Y_{ik}]$ of the $K$ response categories $Y_{ik} \in \{0,1\}$ such that $\sum_{k=1}^K Y_{ik} =1$ with $\boldsymbol{Y}_i \sim Mult(\boldsymbol{1}_K, \mathbf{p}_{i})$.

The natural estimator $\hat{\mathbf{\theta}} =[\mathbf{\theta}^{(d)}]$ for $\hat{\mathbf{\theta}}^{(d)}=[\theta^{(d)}_{k}]$ is 
\begin{align} \label{eq: theta_kr_est}
     \hat{\mathbf{\theta}} =[\hat{\mathbf{\theta}}^{(d)}]=\mathbf{\Gamma}^{T}\hat{\mathbf{y}}.
\end{align}
in which $\hat{\mathbf{y}}=[\hat{\mathbf{Y}}_{i}]$ is the predictor of $\mathbf{y}=[\mathbf{Y}_{i}]$  for $\hat{\mathbf{Y}}_{i}=[\hat{Y}_{ik}]$.

We derive first the KF of $\hat{Y}_{ik}$, for $i=1,\dots, N; k=1,\dots, K$. To this aim let's consider the general expression of working model already introduced for SF and through Kronecker product we pass for the $i$-th unit from the $J$-dimensional vector $\mathbf{x}_i$ to $[(J \times K) \times K]$-dimensional matrix 
$\dot{\mathbf{X}}_i=\mathbf{x}_i\otimes\mathbf{I}_K$ 
. Note that $\dot{\mathbf{X}}_i$ has been already introduced under SF before formula Eq. (16) of the main paper. Under KF we get a reordering  by $k$ and $j$, for $k=1,\dots, K$ and $j=1,\dots, J$, of the rows of the original one. Let's introduce now  the working model $\mathbf{E}(Y_{ik}|\dot{\mathbf{x}}_i) = f(\dot{\mathbf{x}}_i;\mathbf{\beta}) = p_{ik}$, $i=1,\dots, N; k=1, \dots, K$, with $f$ a known function depending on the $H=(J\times K)$-dimensional unknown parameter vector $\mathbf{\beta} = \left(\mathbf{\beta}_1^T,\dots,\mathbf{\beta}_j^T,\dots, \mathbf{\beta}_J^T\right)^T$, for $\mathbf{\beta}_j = (\beta_{1j}, \dots, \beta_{Kj})^T$, for $k=1,\dots, K$ and $j=1,\dots, J$, measuring the relationship between the outcome $Y_{ik}$ and the set of covariates $\dot{\mathbf{x}}_i \in \mathbf{R}^H$. Note that in our specific case of Multinomial model, $p_{ik} = \mathbb{P}(Y_{ik} = 1|\dot{\mathbf{x}}_i)$, with $\sum_{k=1}^Kp_{ik} = 1, \forall i$. 
Then, the predictor $\hat{Y}_{ik}$ of $Y_{ik}$ is 
\begin{align}\label{eq: Y_hat_ik}
    \hat{Y}_{ik} = \hat{p}_{ik} = f(\dot{\mathbf{x}}_i;\mathbf{\hat{\beta}}), \quad i=1,\dots, N, \quad k=1,\dots,K,
\end{align}
and the predictor $\hat{\mathbf{y}}=[\hat{\mathbf{Y}}_{i}]$ of $\mathbf{y}=[\mathbf{Y}_{i}]$, for $\hat{\mathbf{Y}}_{i}=[\hat{Y}_{ik}]$, is
\begin{align}\label{eq: Y_hat_Kron}
    \hat{\mathbf{y}}= \hat{\mathbf{p}} = f(\dot{\mathbf{X}};\mathbf{\hat{\beta}}),
\end{align}
in which $\mathbf{\hat{\beta}}$ and $\hat{\mathbf{p}}=(\hat{\mathbf{p}}_1^{T}, \dots, \hat{\mathbf{p}}_{i}^{T}, \dots, \hat{\mathbf{p}}_{N}^{T})^{T}$, for $\hat{\mathbf{p}}_{i}=[\hat{p}_{ik}]$ and $\mathbf{p}_{i}=[p_{ik}]$, are the correspondent consistent estimators of $\mathbf{\beta}$ and  $\mathbf{p}=(\mathbf{p}_1^{T}, \dots, \mathbf{p}_{i}^{T}, \dots,\mathbf{p}_{N}^{T})^{T}$ respectively based on the data of a generic sample $s$ of fixed size $n$. In particular $\mathbf{\lambda} = (\lambda_{1}, \dots, \lambda_{i}, \dots, \lambda_{N})^{T}$ for $\sum_{i=1}^{N}\lambda_{i}=n$
denotes the random variable vector defining the sample membership satisfying $E(\mathbf{\lambda}) = \mathbf{\pi}= [\pi_{i}]$ for $\pi_{i}\in (0, 1]$ the inclusion
probability of unit $i$, for $i = 1,\dots, N$, and $\dot{\mathbf{X}}=[\dot{\mathbf{X}}_{i}^{T}]$ being the overall covariate matrix of order $(T \times H)$. In order to derive KF let's introduce too the following extra notation $\dot{\mathbf{\lambda}}=\mathbf{\lambda}\otimes \mathbf{1}_{K}$ and $\dot{\mathbf{\pi}}=\mathbf{\pi}\otimes \mathbf{1}_{K}$. We note that the $k$-th ($k=1, \dots, K)$) row of $\dot{\mathbf{X}}^{T}_i=[\dot{\mathbf{x}}_{i}^{T}]$ is a $H$-dimensional vector of the form  
$\dot{\mathbf{x}}^{T}_{ik}=(x_{i1}\cdot \mathbf{0}^{T}_{K; 1k},\cdots, x_{ij}\cdot \mathbf{0}^{T}_{K; 1k},\cdots, x_{iJ}\cdot \mathbf{0}^{T}_{K; 1k})$.

\subsection{Multinomial logistic estimator} 
To give the explicit expression of $\hat{\mathbf{p}}$, of the unknown parameter vector $\mathbf{p}$, under multinomial logistic regression model Eq. (13), let's introduce the following extra notation: $T$-dimensional vector $\hat{\mathbf{e}}=\exp (\dot{\mathbf{X}}\hat{\mathbf{\beta}})$; the correspondent $(N \times K)$-dimensional matrix $\hat{\mathbf{E}}=[\hat{\mathbf{e}}_{i}^{T}]$  whose $i$-th row, for $i=1,\dots, N$, is given by the $K$ dimensional vector $\hat{\mathbf{e}}^{T}_{i}$; $\hat{\mathbf{e}}_{+}=\hat{\mathbf{E}}\cdot{\mathbf{1}_{K}}$; $\hat{\mathbf{d}}_{+}=\mathbf{1}_{T} +\hat{\mathbf{e}}_{+} \otimes \mathbf{1}_{K}$. The KF estimator of $\mathbf{p}$ is given by
\begin{equation}
\label{p_hat}
\hat{\mathbf{p}}=\frac{\hat{\mathbf{e}}}{\hat{\mathbf{d}}_{+}}=[\frac{\hat{\mathbf{e}}_{i}}{\hat{\mathbf{d}}_{+i}}]=[\hat{\mathbf{p}}_{i}].
\end{equation} 
The generic element of vector $\hat{\mathbf{e}}=[e_{ik}]$ referred to the baseline category $k=K$ is given by $\hat{e}_{iK}=\exp (\dot{\mathbf{x}}_{iK}^{T}\hat{\mathbf{\beta}})=1$ being $\hat{\mathbf{\beta}}=\mathbf{0}_{H}$. 

The estimator in Eq.~\eqref{p_hat} is function of $\hat{\mathbf{\beta}}$ and the MLE estimator of $\mathbf{\beta}$ is solution of the set of $H$ estimating equations  
$\mathbf{l}_{\mathbf{\beta}}=\ell(\mathbf{\beta})/\partial \mathbf{\beta}=\mathbf{0}_{H}$ in which $\ell(\mathbf{\beta})$ is the log-likelihood of the working model. The KF of the estimating equations is  
\begin{align} \label{KF_g}
    \mathbf{g}=\mathbf{G}^{T}\dot{\mathbf{\lambda}} = \mathbf{0}_{H}.
\end{align}
in which $\mathbf{G}$ is a matrix of order $(T\times H)$  depending on $\dot{\mathbf{X}}$ given by  
\begin{align} \label{KF_G}
 \mathbf{G}=\dot{\mathbf{X}}\#[\mathbf{y}-\mathbf{p}].
\end{align}
Using the result in \cite{chambers2012introduction}, we can employ the first-order approximation in Eq. (10) of the main manuscript to linearize $\mathbf{\hat{\beta}}$ around its expected value $\Exp_{\M}(\hat{\mathbf{\beta}} | \mathbf{\lambda}) = \mathbf{\beta}$
\begin{align} \label{eq: KF_chambers}
\left(\hat{\mathbf{\beta}} - \mathbf{\beta} \right) \approx -\mathbf{A}_{\mathbf{\beta}}^{-1} \mathbf{G}^{T}\dot{\mathbf{\lambda}}. 
\end{align}
The explicit expression of $H$-dimensional square matrix $\mathbf{A}_{\mathbf{\beta}}$ is 
\begin{align} \label{KF_A}
    \mathbf{A}_{\mathbf{\beta}}=\frac{\partial \mathbf{g}}{\partial \mathbf{\beta}}=[\dot{\mathbf{X}}^{T}\#
\mathbf{\Delta}]\dot{\mathbf{\pi}},
\end{align}
in which $\mathbf{\Delta}$ is a matrix of order $(T \times H)$ given by
\begin{align} \label{KF_Delta}    
\mathbf{\Delta}=\frac{\partial \mathbf{p}}{\partial \mathbf{\beta}}=-[\dot{\mathbf{X}}+\dot{\mathbf{X}}_{+}\#\mathbf{P}_{J+}]\#\mathbf{p}
\end{align}
where $\dot{\mathbf{X}}_{+}=\dot{\mathbf{X}}\otimes \mathbf{1}_{K}^{T}\otimes \mathbf{1}_{K}$ and $\mathbf{P}_{J+}=\mathbf{P}\otimes \mathbf{1}_{K} \otimes {1}_{J}$ in which $\mathbf{P}=[\mathbf{p}_{i}^{T}]$.

Finally taking account of SF (Eq. (18) in the main manuscript) of the  linearised $GMSE$ estimator, we obtain the correspondent KF for $\hat{\theta}$. This is given by the diagonal elements of 
\begin{align} \label{eq: KF_GMSE_Lin}
GMSE^{\text{Lin}}\left(\hat{\theta}\right) = \mathbf{\Gamma}^{T} \left(\dot{\mathbf{\pi}}\# \dot{\bar{\mathbf{U}}} \dot{\mathbf{\Sigma}}_{\mathbf{Y}} \dot{\bar{\mathbf{U}}}\right)\mathbf{\Gamma}
\end{align}
in which:
\begin{align} \label{KF_U}    
\dot{\bar{\mathbf{U}}}=-[\dot{\mathbf{X}}\#\dot{\mathbf{\pi}}]\mathbf{A}_{\mathbf{\beta}}^{-1},
\end{align}
\begin{align} \label{KF_Sigma_y}    
\dot{\mathbf{\Sigma}}_{\mathbf{Y}}=\mathbf{1}_{T}\otimes \mathbf{\Sigma}_{\mathbf{Y}}
\end{align}
where 
\begin{align} \label{Sigma_y}    
\mathbf{\Sigma}_{\mathbf{Y}}=\mathbf{I}_{K;N}-\mathbf{p}\# [\mathbf{P} \otimes \mathbf{1}_{J}].
\end{align}

\end{document}